\theoremstyle{plain}
\newtheorem{lemma}{Lemma}
\newtheorem{proposition}{Proposition}
\newtheorem{corollary}{Corollary}
\theoremstyle{definition}
\newtheorem{assumption}{Assumption}
\newtheorem{example}{Example}
\theoremstyle{remark}
\crefname{observation}{Observation}{Observations}
\crefname{theorem}{Theorem}{Theorems}
\crefname{lemma}{Lemma}{Lemmata}
\crefname{proposition}{Proposition}{Propositions}
\crefname{corollary}{Corollary}{Corollaries}
\crefname{definition}{Definition}{Definitions}
\crefname{assumption}{Assumption}{Assumptions}
\crefname{section}{Section}{Sections}
\crefname{figure}{Figure}{Figures}
\crefname{axiom}{Axiom}{Axioms}
\crefname{modification}{Modification}{Modifications}
\crefname{example}{Example}{Examples}
\crefname{appendix}{Appendix}{Appendices}
\pgfplotsset{compat=newest}
\definecolor{Vermilion}{RGB}{213,94,0}
\definecolor{BluishGreen}{RGB}{0,158,115}
\definecolor{Cadmium}{RGB}{227, 0, 34}
\definecolor{ReddishPurple}{RGB}{204,121,167}
\definecolor{Orange}{RGB}{230,159,0}
\definecolor{Yellow}{RGB}{240,228,66}
\definecolor{SkyBlue}{RGB}{86,180,233}
\definecolor{Blue}{RGB}{0,114,178}
\DeclareFontFamily{U}{mathx}{\hyphenchar\font45}
\DeclareFontShape{U}{mathx}{m}{n}{
      <5> <6> <7> <8> <9> <10>
      <10.95> <12> <14.4> <17.28> <20.74> <24.88>
      mathx10
      }{}
\DeclareSymbolFont{mathx}{U}{mathx}{m}{n}
\DeclareMathAccent{\widecheck}{0}{mathx}{"71}
\DeclareMathAccent{\wideparen}{0}{mathx}{"75}
\setlist{noitemsep,topsep=1pt}
\title{On Risk and Time Pressure: \\When to Think and When to Do\thanks{We thank B\aa rd H\aa rstad and three anonymous referees for their valuable input that discretely improved the paper. Sarah Auster, Laura Doval, William Fuchs, Heiko Karle, Nenad Kos, Aditya Kuvalekar, Antoine Loerper, Matt Mitchell, Marco Ottaviani, Ronny Razin, Emanuele Tarantino, and seminar audiences at Bocconi, Bergamo IO Theory Workshop, Collegio Carlo Alberto, Duke, SAEe Madrid, SAET Seoul, MaCCI Annual Conference Mannheim, Oslo, SIOE Stockholm, and uc3m provided helpful comments. Johannes Schneider gratefully acknowledges financial support from the German Research Foundation (DFG) through CRC TR 224 (Project B03), Agencia Estatal de Investigación (grant PID2019-111095RB-I00 and grant PID2020-118022GB-I00), Ministerio Economía y Competitividad (grant ECO2017-87769-P), and Comunidad de Madrid (grant MAD-ECON-POL-CM H2019/HUM-5891).}}
\def\pgfplots@stacked@diff{}
\author{Christoph Carnehl\thanks{Bocconi University, Department of Economics and IGIER; E-mail: \nolinkurl{christoph.wolf@unibocconi.it}} \and 
Johannes Schneider\thanks{University of Mannheim and Carlos III de Madrid; E-mail: \nolinkurl{jschneid@econ.uc3m.es}}
}
\date{\today}
\begin{document}
\maketitle 

\begin{abstract}
We study the tradeoff between fundamental risk and time. A time-constrained agent has to solve a problem. She dynamically allocates effort between implementing a risky initial idea and exploring alternatives. Discovering an alternative implies progress that has to be converted to a solution. As time runs out, the chances of converting it in time shrink. We show that the agent may return to the initial idea after having left it in the past to explore alternatives. Our model helps explain so-called false starts. To finish fast, the agent delays exploring alternatives reducing the overall success probability. \end{abstract}
\noindent
JEL Classification: D01, D83, O31 \newline
Keywords: dynamic problem solving, endogenous bandits, time pressure
\newpage
\setstretch{1.25}

\begin{quote}
The people that really create the things that change this industry are both the 'thinker-doer' in one person.\\
\textemdash \emph{Steve Jobs. \citet{Jobs}.}
\end{quote}

\section{Introduction} \label{sec:introduction}
Problem solving under time pressure is central to many economic problems. For example, consider an entrepreneur who has received funding for her venture. While the initial funds cover her expenses to advance the project for some time, she requires additional financing rounds before the venture becomes profitable.\footnote{\citet{gompers1995} discusses the importance of staged financing for venture capital-backed startups. The empirical analysis highlights that financing rounds are short. On average, they last just above one year.} To raise new funds, the entrepreneur has to achieve a milestone: build a prototype, solve a technological problem, or prove that a promising market for the product exists. Often, the entrepreneur faces an explicit or implicit deadline to achieve the milestone.\footnote{Running out of cash is the second most frequent reason for startup failure according to \citet{whystartupsfail}. The entrepreneur has to achieve the milestone before her funds run out: an implicit deadline. Moreover, \citet{kaplan2003} documents that venture capitalists use both ex ante staging\textemdash committing to milestones that have to be achieved by a deadline\textemdash and ex post staging\textemdash liquidating the venture if the entrepreneur's  performance is not satisfactory when the new funding round is due.}

Suppose that the entrepreneur proposed a particular strategy to reach a milestone\textemdash for example, a successful launch in a particular market. While the strategy appears promising, there is a risk that it is fundamentally flawed. However, there may be alternative ways to prove the venture's worth to investors. Invoking such an alternative would imply that the entrepreneur \emph{pivots} away from her initial idea to a new strategy.

A successful pivot requires preparation. First, stakeholders need to be convinced that the change in strategy is promising \citep{PivotStory}. To this end, the entrepreneur could, for example, conduct customer research to discover market needs, leading to a new business strategy. When contemplating the option to pivot, our entrepreneur faces the following tradeoff \citep{NotPivot}: While preparing a pivot requires scarce time and resources, continuing with the current strategy entails the risk of failing.

When should the entrepreneur prepare a pivot? Early on, when time pressure is still low? Or later, after she has experimented unsuccessfully with the initial idea? When---if ever---should she abort the preparations for the pivot and set focus on the initial idea again?

In this paper, we study the novel tradeoff between fundamental risk and time in a multiarmed, continuous-time bandit model. We consider an agent (e.g., our entrepreneur) that has to solve a problem (e.g., achieve a milestone) by a deadline. Following the Steve Jobs quote at the beginning, the agent can exert effort in two different ways: She can \emph{do}, or she can \emph{think}.\footnote{\citet{Bolton2009Thinking} consider a related but different tradeoff between thinking and doing. Importantly, and different from us, their tradeoff does not vary over time.} Doing corresponds to working on implementing her initial idea. Thinking corresponds to preparing a pivot. While doing, a \emph{solution} arrives stochastically at an ex ante unknown rate. Instead, while thinking, \emph{progress} arrives stochastically at a known rate. While a solution results in a fixed payment (e.g., the next round of funding), progress needs to be converted first. Because converting progress is harder with little time remaining, the value of progress is lower the closer the deadline is.\footnote{There are several ways to model this decreasing value. Our baseline model depicts the value of progress in an abstract reduced form. In \Cref{ssub:examples}, we provide several examples to microfound this reduced form. These include, for example, safe and risky exponential bandit arms or an Ornstein-Uhlenbeck payoff stream.}

A large body of literature in economics studies problem solving as the choice between solution methods. Following \citet{weitzman1979optimal,Rothschild:74}, this literature focuses on an exogenous set of known methods chosen over an infinite horizon.\footnote{In \citet{Rothschild:74}'s language, a method is an arm. In \citet{weitzman1979optimal}'s language, a method is a box.} 

In reality, however, deadlines matter. \citet{Kirtley2021} document, for example, the factors driving entrepreneurs to prepare a pivot. They show that both time pressure and beliefs about the feasibility of the initial idea play a crucial role in an entrepreneur's decisions. Therefore, understanding entrepreneurial choices on pivots in light of time pressure and risk requires explicit modeling of both a deadline and the generation of new strategies. In particular, different from the canonical experimentation approach, our model captures the notion that the value of progress depends on the time left to make use of it.

Our first contribution is that we characterize the agent's optimal policy in such a model. The characterization is nontrivial. Due to the shortening time window, the value of progress changes over time independent of the agent's actions. In such a model, the optimal myopic policy need not be dynamically optimal, and the index theorem of \citet{gittens1974dynamic} does not apply.

If the agent is optimistic about her initial idea and the time window is large, then the optimal policy is as follows. The agent starts by doing. If doing remains unsuccessful, then the agent switches to thinking. However, if progress does not arrive in due time, then the agent switches back to doing and, in the remainder of the time, aims for a solution via the initial idea.

The property that the agent returns to a previously discarded arm results from three model ingredients: the finite horizon, the fact that progress needs to be converted in a second step, and the positive cost of effort. When any of these ingredients are dropped, the optimal strategy becomes a classical one-time switching rule. If the arm that the agent started with is not successful in due time, then she switches to the other but never switches back. To develop an intuition for the incentives at play in the optimal policy, we separately study two benchmarks\textemdash no time pressure and zero cost of effort.

If we drop time pressure, the problem becomes a standard, recursive infinite-horizon problem.\footnote{
The following benchmark is analogous: Consider a classical experimentation problem with only one arrival needed on both arms but with the thinking arm having a lower intensity rate. Both benchmarks remove the time-varying value of progress that derives from its multistage nature and the finite horizon.} Thinking becomes equivalent to the safe option. Doing, the risky option, offers a cost advantage. Therefore, the agent first approaches the problem through doing. If no solution arrives, she becomes pessimistic about the quality of her initial idea. She switches to thinking. Because time pressure is absent and the time to convert progress never runs out, the value of thinking is constant. When the agent finds it optimal to switch to thinking at one point, the thinking arm dominates the doing arm for the remainder of the time.

If we drop the cost of effort but keep the time window finite, then the result reverses. Thinking early on has a higher value than thinking later. Early progress leaves ample time for conversion. The agent starts by thinking. If progress remains absent, then she becomes pessimistic about having sufficient time left to convert progress should it arrive. She switches to doing in the hope of an immediate solution.

Our second contribution is to use our model to explain some entrepreneurial decision-making peculiarities that traditional bandit models cannot explain. First, we provide a theoretical rationale for \emph{false starts}. A false start describes the entrepreneur's tendency to act on initial ideas (doing) rather than to invest in customer research to explore alternatives (thinking). False starts are costly, because the earlier customer research is done, the more time and resources remain to improve the initial idea.  \citet{whystartupsfail2} argues that false starts are one of the main reasons startups fail. Indeed, following our model, entrepreneurs explore alternatives too late, leaving them relatively little time to convert a promising pivot into a successful venture. In return, if they succeed with their initial idea, they succeed early. In our model, by doing early, entrepreneurs trade off overall success probability against saving time and the resources needed for early thinking.\footnote{One case in point is the Triangulate venture, as discussed in \citet{whystartupsfail2}. In a post-mortem of his failure, the founder Sunil Nagaraj admitted that he rushed to launch the venture’s platform Wings rather than spending time on customer research to verify the market need for an improved matching engine. He considers that behavior as one of the main mistakes leading to the eventual failure of Triangulate.}

Second, an increase in the ex ante belief about the initial idea can lower the overall success probability of the entrepreneur. A higher initial belief discourages early customer research, thereby amplifying the false-starts problem. If the deadline is not too short, this effect may dominate the positive effect of a higher likelihood that the initial idea can deliver a solution.

Third, \emph{increasing} time pressure can incentivize the agent to think early on. A venture capitalist can mitigate the false start problem through tighter deadlines that crowd out the entrepreneur's incentive to do early. In general, early doing is attractive because it can provide a quick solution. The entrepreneur may opt for this route if, initially, the time pressure is not too high. By increasing the initial time pressure, the venture capitalist discourages early acting on the initial idea. Instead, it encourages customer research when it is most valuable.

We believe that our modeling framework also applies in other contexts. While entrepreneurial problem solving serves as our main application, we discuss other applications in \cref{sec:conclusion}, our \nameref{sec:conclusion}.

\paragraph{Related Literature.} We contribute to a large body of literature that deals with the choice between approaches to innovation. One strand of the literature dating back to \citet{weitzman1979optimal} has considered several variants of Pandora's box problem as a proxy for finding the right innovation strategy \citep[e.g.,][]{fershtman1997simple,doval2018whether,olszewski2015more}. Other works have been concerned with how competition affects the search for the right approach \citep[e.g.,][]{lemus2019diversification,letina2016road,akcigit2015role,10.2307/2695893}. Our contribution to this literature is that we endogenize the available approaches by allowing the agent to explore an alternative route. Therefore, we also endogenize the cost of finding an alternative. We capture two aspects absent in the aforementioned literature. First, new ideas arrive stochastically, and the cost the agent incurs to make progress on the alternative varies with the time it takes until progress arrives. Second, and more importantly, the value of progress varies with the time window the agent has to convert progress into a solution. Thus, the \emph{availability} of an alternative route depends on both luck and choices in the past, and the \emph{value of discovering} an alternative depends on the time window left.\footnote{There is a literature that studies the choice between risky, innovative approaches \citep[see, for example,][]{chen2018patentability,das2019inefficient} compared to a safe and established alternative. While the choice set is given in these models, our focus is on the \emph{search} for better alternatives because the existing approach carries some risk. Therefore, in our model successful search mitigates the agent's risk.}

A strand of the management literature addresses issues similar to ours. An example is \citet{doi:10.1002/smj.3010}. However, their \emph{Test Two, Choose One} result ignores the time dimension, which is the focus of our paper. The process of how to think about alternatives and the particularities of lean techniques are discussed in \citet{FELIN2019101889}. Our model provides a formal, economic method for these ideas.

Technically, our model falls into the class of multiarmed bandit problems \citep{Rothschild:74}. \citet{bergemann2008bandit} provides an overview of the literature. The doing arm is a classical continuous-time exponential bandit, as used in most of the strategic experimentation literature \citep{KRC05}. While most models feature infinite-horizon settings, we are interested in a time-constrained agent. \citet{klein2016importance} also considers a time-constrained agent. The crucial difference from our model is that, in his case, both arms are exponential bandits that differ in their intensities, not in the number of arrivals needed.

In our case, the thinking arm is \emph{restless}\textemdash the state of the arm evolves even when not pulled. The restless feature is essential for capturing the risk-time tradeoff but makes the model complicated \citep[see, e.g.,][]{68001}. The thinking arm concept is related to the few papers in the literature that study multistage bandits \citep{wolf2017informative,green2017,hu2014financing,moroni2015experimentation,KELLER2003branching,StudyGoForIt}.

The results in \citet{StudyGoForIt} appear reminiscent of ours, yet the environment and thus the mechanism differ substantially. We study an agent’s decision whether to experiment with a risky project or take a step back and look for a safer alternative absent agency concerns. \Citet{StudyGoForIt} instead abstracts from risk and focuses on the agency problem alone. In his model, multistage projects benefit an infinitely lived principal, as intermediate reports provide a monitoring tool. Absent agency concerns, multistage projects are inefficient and are never chosen. In our setting, it is a priori unknown which approach is efficient, and the agent balances learning and managing time pressure.\footnote{The role for deadlines in \citet{StudyGoForIt} is therefore closer to that established in \citet{bonatti2011collaborating}. A deadline provides an instrument to incentivize the agent to work. In our model, instead, deadlines affect the agent's tradeoff between improving the likelihood of finding a solution at all and---conditional on making progress---having enough time to benefit from it.} 

\citet{10.1093/restud/rdw008,10.2307/23045643,nikandrova2018dynamic,10.1257/aer.20171000,Francetich2019} study the problem of dynamically distributing effort across several projects. However, none of them addresses the risk-time tradeoff. \citet{Francetich2019} studies the choice of allocating effort to two correlated bandits, \citet{10.2307/23045643} features myopic agents, and \citet{10.1093/restud/rdw008} considers a two-period overlapping-generations model. By construction, neither reproduces the switching dynamics we obtain. \citet{nikandrova2018dynamic} model an agent who irreversibly selects between two alternatives. The agent uses experimentation to learn about her options beforehand. In \Citet{10.1257/aer.20171000}, an agent can try to find a solution or to show that no solution exists. At the optimum, the agent uses only one of the available routes until she reaches an absorbing state. All five models are related in spirit, but the research question, the modeling choices, and the results are different.

The two most closely related papers to ours are \citet{Bolton2009Thinking,FershtmanPavan}. Most notably, however, both of their models operate with an infinite time horizon and thus cannot produce the time-pressure dynamics of our finite-horizon problem. In their models, the tradeoff between arms is independent of calendar time. The available time window and associated time pressure constantly change in our model, affecting the main tradeoff.

\citet{Bolton2009Thinking} are also interested in the choice between thinking and doing. However, thinking plays a different role in their model. It is a tool to resolve uncertainties regarding future choices. In contrast, thinking in our model corresponds to the development of a new, previously unavailable route.

Similar to our work, \citet{FershtmanPavan} study a model with endogenous arms. They consider an agent with an infinite horizon who decides whether to apply her initial idea or to search for alternatives. Their focus is on the search process itself. The agent can, at a cost, investigate several routes and learn about their quality. Instead, we focus on the risk-time tradeoff that the agent faces as the available time window closes. To gain tractability in a finite-horizon world, we abstract from some details of the search process. We collapse it into a unidimensional object. Because different approaches to the problem are taken, the results also differ. While \citet{FershtmanPavan} show that in their setting, an index policy remains optimal, we show that the same does not hold when the risk-time tradeoff plays a role.

\paragraph{Roadmap.} We set up our model in \cref{sec:model}. We derive our main results in \cref{sec:analysis}. \cref{sec:finalremarks} describes a set of economic implications derived from our findings. In \cref{sec:discussion}, we discuss our modeling choices. Finally, \cref{sec:conclusion} concludes and provides an outlook on other applications of our framework.

\section{Model}
\label{sec:model}

We introduce a model capturing the risk-time tradeoff between solving a problem through \emph{doing}\textemdash a fast but fundamentally risky approach\textemdash or through \emph{thinking}\textemdash a slower but less risky approach. We provide a discussion of our assumptions in \cref{sec:discussion}.

\paragraph{Setting.} Time is continuous and starts at $t=0$. An agent has to solve a problem by a finite deadline $T<\infty$ and there is no discounting over time. At each instant of time, the agent can invest one divisible unit of effort into \emph{doing}, $a_t^d$, and \emph{thinking}, $a_t^s$, such that $a_t^d+a_t^s\leq 1$. Investing effort entails a flow cost of $(a_t^d+a_t^s)c$, with $c>0$. 

One arm, the \emph{doing} arm, is risky. The arrival of a solution on that arm depends on the unobserved binary state $\theta \in \{0,1\}$. The instantaneous arrival rate of the arm is $\theta \lambda$, where $\lambda>0$, which implies the following: if $\theta=1$, then the probability that a solution arrives when the agent invests $a_t^d$ over a small time interval $[t,t+dt)$ is $\lambda a_t^d dt$; if $\theta=0$, then a solution never arrives on the doing arm. The agent's belief that $\theta=1$ at $t=0$ is $\bar{p} \in (0,1)$. A solution delivers a payoff of $B>0$ to the agent and ends the game.

The other arm, the \emph{thinking} arm, has a known instantaneous arrival rate $\mu>0$. An arrival on the thinking arm at time $t$ implies \emph{progress}. Progress does not provide a solution directly. Instead, the agent has to convert progress into a solution that requires additional time and effort. We capture this second step in reduced form. The function $V(\tau)$ describes the agent's continuation payoff when progress occurs with time $\tau=T-t$ remaining to the deadline. We assume that $V(\cdot)$ is thrice continuously differentiable, increasing, i.e., that $V'(\cdot)>0$, and sufficiently concave, i.e., that $-V''(\cdot)/V'(\cdot)\geq \bar{p} \lambda$. Moreover, an arrival on the thinking arm with no time remaining is worthless, $V(0)=0$. We assume that all derivatives have a limit as $\tau \rightarrow \infty$. Moreover, we assume (abusing notation) $V(\infty):=\lim_{\tau \rightarrow \infty} V(\tau) > c/\mu$ and $\lim_{\tau \rightarrow \infty}-V''(\tau)/V'(\tau) \geq \bar{p} \lambda$.\footnote{The second-to-last condition states that the expected value of progress without time pressure is larger than the expected effort cost required to obtain progress. The last condition strengthens the concavity assumption, assuming it is true in the limit. We need it for a few of our results.}

To streamline the intuition, we interpret progress in our main analysis as the first step in a multistage problem. We imagine that the agent works on converting progress into a solution after progress occurs. We provide detailed examples that microfound $V(\tau)$ for this interpretation and others in \cref{ssub:examples}. However, formally, as with the doing arm, an arrival on the thinking arm ends the game.

Throughout this paper, we are interested in how the agent allocates effort \emph{between} doing and thinking. We focus on cases in which the agent finds it optimal to exert full effort until the end of the game. We thus consider cases in which the solution yields a high reward; that is, we assume that $B$ is sufficiently large. It is straightforward to show that such a reward $B$ exists; \Cref{sub:no_shirking_conditions} provides the respective argument. A large $B$ allows us to restrict attention to the agent's choice \emph{between arms}, as it implies $a_t^d= 1-a_t^s$. Dropping the superscript and denoting time variables in terms of the time remaining $\tau$, we use $a_{\tau}\equiv a_{T-t}^d$.

\subsection[Examples of Progress]{Examples of $V(\tau)$}
\label{ssub:examples}

The crucial feature of the thinking arm is that the value of progress depends on the time remaining until the deadline. To illustrate the model's flexibility in the context of our leading application---startups---, we pause here and provide a set of examples nested in our model. We provide a formal verification that each example meets our assumptions in \cref{sub:ex_verify}.

\subsubsection*{Implementation Phase or Delay upon Progress}
In our first set of examples, progress triggers a new bandit arm that the agent can pull following progress. One interpretation is that an entrepreneur with a minimum viable product can move forward in two ways: Conduct customer research to determine the optimal market for her project, or proceed immediately to launch the product without customer research. In the context of our model, the first option corresponds to pulling the thinking arm, the second to pulling the doing arm. We get back to this interpretation when we discuss our model implications in \cref{sec:finalremarks}.

The simplest way to model this interpretation is given by our leading example, \cref{example_implications}.

\begin{example}\label{example_implications}
An arrival on the doing arm delivers a new Poisson bandit with a known arrival rate $\nu\geq \lambda \overline{p}$. An arrival on this new arm implies a solution worth $B_\nu$ to the agent, and the cost of pulling the arm is $c_\nu$.
\end{example}

\Cref{example_implications} describes the case when successful customer research, i.e., the arrival of progress on the thinking arm, delivers a new product design, i.e., a new approach corresponding to a Poisson bandit arm, that can be marketed with certainty\textemdash given enough time\textemdash with arrival rate $\nu\geq\lambda \overline{p}$. The new product design can either be an improved version of the original idea or an entirely new product inducing a \emph{pivot} of the entrepreneur. We now present a set of related ideas also subsumed by our model.

\begin{example}
\label{par:ex_risky} This example is in a similar vein. However, successful customer research may not remove all uncertainty about the feasibility of a new product idea. Instead, the new product may deliver better prospects than the original product, but some uncertainty remains. The new arm is then characterized by $\bar{p}^\nu$, the ex ante belief about the new arm's feasibility, and by $\nu$, the new arm's arrival rate. An arrival on this arm provides value $B_\nu$ and pulling it has a flow cost of $c_\nu$. \end{example}

\begin{example}\label{par:ex_timevarying} Our model also nests cases in which successful thinking triggers a new arm that has a time-varying intensity rate $\nu(t)$. Successful customer research suggests a product quite different from the original idea and the entrepreneur's experience. The time-varying intensity can go both ways. On the one hand, there can be an increasing intensity: initially, the agent may not be very knowledgeable about the new approach, but her knowledge improves over time. On the other hand, there can be a decreasing intensity: the customer research may demonstrate some immediate implications expected to work with high probability. However, if these initial attempts fail, then it becomes harder to succeed. For concreteness, suppose the intensity rate follows an exponential function $\nu(t)=\nu e^{\alpha+\beta t}$, with $\alpha\geq 0$, $\beta\geq - \frac{1}{\tau}$, and $\nu B>c$.\footnote{While we do not restrict the sign of $\beta$ to allow for both an increasing ($\beta>0$) and a decreasing ($\beta<0$) intensity rate, we impose a lower bound on $\beta\geq - \frac{1}{\tau}$, which implies $\nu(t)B\geq c$ given $\alpha \geq 0$. In particular, $\alpha \geq \ln\left(\frac{c}{\nu B}\right)-\beta \tau$, where the right-hand side is strictly negative for all $\beta\geq 1/\tau$, as $\nu B>c$ by assumption.} \end{example}

\begin{example}\label{par:ex_delay} Finally, consider a model in which customer research provides enough evidence for the entrepreneur to raise an additional round of financing. However, it takes an unknown amount of time to analyze and polish the data, prepare a convincing pitch, or find the right venture capitalist. In such a model, the problem is solved by progress on the thinking arm, but the payoff is realized only after some random delay, which follows a Poisson process with intensity rate $\nu\geq \bar{p} \lambda$. Suppose that the entrepreneur has to incur a flow cost of $c_\nu$ until its arrival (for example, to pay her employees). Moreover, the project's payoff is $B_\nu$ if the success arrives before the deadline and $0$ if not. This example is formally equivalent to \cref{example_implications}.\end{example}

\paragraph{Why Abandon the Doing Arm upon Progress?} In the above examples, the entrepreneur abandons the doing arm upon progress on the thinking arm. While this is an endogenous outcome in \cref{example_implications} and \cref{par:ex_delay}, we assume it for \cref{par:ex_risky} and \cref{par:ex_timevarying}. There are several ways to microfound this assumption:
(i) The new arm triggered by progress \emph{replaces} the old arm. (ii) The belief about the new arm is sufficiently high that the agent will never return to the old arm. (iii) The agent would have to pay a maintenance cost to hold the old arm idle \citep[as modeled in][]{forand2015keeping}. (iv) The agent incurs a sufficiently high switching cost when returning to the old arm.

\subsubsection*{Payoff Stream upon Progress}

An alternative model is one in which successful thinking triggers a payoff stream. For example, instead of attempting to build a new prototype with advanced technology (i.e., continuing to pull the doing arm), the entrepreneur finds a way to market the product as is without further improvements. This market opportunity generates a revenue stream until the deadline. A possible interpretation is an entrepreneur evaluated under the scorecard method.\footnote{See, for example, https://www.forbes.com/sites/mariannehudson/2016/01/27/scorecard-helps-angels-value-early-stage-companies/?sh=4e8eb9c96874.} The entrepreneur can either try to improve the product (pulling the doing arm) or bring a new product to the market to increase sales (pulling the thinking arm to find the right way to launch the product and collecting the revenue stream upon progress).

\begin{example}\label{par:ex_stream} We model the launch of a product as paying out $db(t)$ in the interval $[t, t+dt)$ and assume that the payments follow an Ornstein-Uhlenbeck process:\footnote{To keep the notation simple, here, we reset time to 0 once the market opportunity is used.} $db(t)=\nu(B_\nu-b(t)) dt+\sigma dW_t$, where $W_t$ is a standard Brownian motion and the initial value is $b(0)=0$. The payoff $db(t)$ can be thought of as a flow profit that has $B_\nu$ as its long-run expectation. $\nu\geq \bar{p}\lambda$ is the rate of mean reversion of the profit process.\end{example}

\section{Analysis} \label{sec:analysis}
In this part, we characterize the agent's optimal policy. We begin by describing two benchmark results. After that, we provide an interpretable necessary condition for the agent's optimal policy. Finally, we derive an algorithm that characterizes the unique optimal solution under a mild technical assumption.

\subsection{Benchmarks}
\label{sub:benchmarks}

The thinking arm has two essential features that distinguish it from the doing arm: (i) it becomes increasingly unattractive as the deadline approaches, and (ii) its expected payoff is different from that of the doing arm. The source of the first feature lies in the conversion of progress into a solution. Completing the additional steps in time becomes increasingly unlikely when less time remains. The source of the second feature is twofold. First, the expected effort cost until progress arrives, $c/\mu$, may be different from the expected effort cost until a solution of a good ($\theta=1$) doing arm arrives, $c/\lambda$. Second, the value of progress\textemdash even without time pressure\textemdash may be different from the value of a solution on the doing arm, $V(\infty)\neq B$. A reason for the latter is that the agent has to exert additional time and effort to convert progress into a solution.

Our first benchmark (`no time pressure') shuts down the first channel, and our second benchmark (`no payoff difference') shuts down the second channel.

\paragraph{No time pressure.} Facing an infinite time horizon, $T=\infty$, the thinking arm is a safe alternative for the agent. Thinking long enough guarantees progress and thus \emph{some} payoff. The following proposition describes our first benchmark. We relegate its proof along with all other proofs to the appendix. 

\begin{proposition}\label{prop:Tinf}
Suppose that the time horizon is infinite, $T=\infty$. Then, the agent either works first on the doing arm and eventually switches to the thinking arm or works on the thinking arm throughout.
The agent starts with doing if and only if \[\bar{p}\geq \hat{p} := \frac{c/\lambda}{B-V(\infty)+c/\mu}.\] She switches when $p_t=\hat{p}$, that is, at time $t=\overline{\tau}_1:=\max\left\{\frac{1}{\lambda} \ln\left(\frac{\bar{p}(1-\hat{p})}{\hat{p}(1-\bar{p})}\right),0\right\}$.
\end{proposition}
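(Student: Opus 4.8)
The plan is to exploit that setting $T=\infty$ removes all calendar-time dependence, turning the problem into a stationary belief-indexed control problem. With $T=\infty$ the conversion value is always $V(\infty)$, so the thinking arm is a \emph{stationary} safe arm that, when pulled, pays the constant $V(\infty)$ at rate $\mu$ and flow cost $c$, while the doing arm is a textbook one-armed exponential bandit with belief $p$. Hence the agent's value depends only on the current belief; write it $W(p)$. Since the Bellman operator is linear in the thinking intensity $a^s$, the optimal split is bang-bang, so at each $p$ the agent either purely does ($a^s=0$) or purely thinks ($a^s=1$). Moreover thinking leaves $p$ frozen, and—using $V(\infty)>c/\mu$ to rule out exerting no effort—the continuation value of ever switching to thinking is the constant $V(\infty)-c/\mu$, after which the agent thinks forever. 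This already collapses the problem to choosing, as a function of $p$ alone, between ``do'' and ``think.''

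Next I would pin down the threshold. Guessing that ``do'' is optimal exactly on an interval $[\hat p,1]$, the value $\hat p$ is determined by indifference at $p=\hat p$ between thinking forever, worth $V(\infty)-c/\mu$, and doing for an instant $dt$ and then—the belief having just dropped below $\hat p$—thinking forever: $-c\,dt+\hat p\lambda\,dt\,B+(1-\hat p\lambda\,dt)(V(\infty)-c/\mu)=V(\infty)-c/\mu$, which rearranges to $\hat p\lambda\big(B-V(\infty)+c/\mu\big)=c$, i.e.\ $\hat p=\dfrac{c/\lambda}{B-V(\infty)+c/\mu}$ (with $B$ large enough that $\hat p\in(0,1)$, consistent with the maintained full-effort assumption). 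On $[\hat p,1]$ the candidate $W$ solves the doing HJB equation $\lambda p(1-p)W'(p)=-c+p\lambda\big(B-W(p)\big)$ with value-matching $W(\hat p)=V(\infty)-c/\mu$; one checks that the right-hand side vanishes at $p=\hat p$, so $W'(\hat p)=0$ and the candidate is $C^1$ across the junction with the constant branch $W\equiv V(\infty)-c/\mu$ on $(0,\hat p]$.

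The substantive step is verification: I would show this candidate satisfies $\max\{\mathcal D_{\mathrm{do}}W,\mathcal D_{\mathrm{think}}W\}=0$ on $(0,1)$, where $\mathcal D_{\mathrm{do}}W(p)=-c+p\lambda(B-W(p))-\lambda p(1-p)W'(p)$ and $\mathcal D_{\mathrm{think}}W(p)=-c+\mu(V(\infty)-W(p))$. On $(0,\hat p)$, $\mathcal D_{\mathrm{think}}W=0$ by construction and $\mathcal D_{\mathrm{do}}W=-c+p\lambda(B-V(\infty)+c/\mu)\le 0\iff p\le\hat p$; on $[\hat p,1)$, $\mathcal D_{\mathrm{do}}W=0$ by construction and $\mathcal D_{\mathrm{think}}W\le 0\iff W(p)\ge V(\infty)-c/\mu$. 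The last inequality follows once $W$ is shown to be nondecreasing on $[\hat p,1]$: for any deterministic ``do for duration $s$, then think forever'' plan the payoff is weakly larger at a higher $p$, because for $B$ large the $\theta=1$ continuation dominates the $\theta=0$ continuation, and the supremum over $s$ preserves monotonicity. A standard verification theorem then identifies $W$ with the value function and the threshold rule as optimal. I expect this monotonicity-and-regularity bookkeeping—existence and smoothness of the ODE solution on $[\hat p,1]$ and the growth conditions needed for the verification theorem, together with ruling out ``do forever'' (which has value $-\infty$ for $p<1$)—to be the main obstacle; everything else is algebra.

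Finally I would read off the dynamics. While the agent does, Bayesian updating on the absence of a solution gives $\dot p_t=-\lambda p_t(1-p_t)$, equivalently $\frac{p_t}{1-p_t}=\frac{\bar p}{1-\bar p}e^{-\lambda t}$, so the belief reaches $\hat p$ at $t=\frac1\lambda\ln\!\big(\frac{\bar p(1-\hat p)}{\hat p(1-\bar p)}\big)$; when $\bar p\le\hat p$ this expression is nonpositive, which is exactly why $\overline\tau_1=\max\{\cdot,0\}$ and the agent thinks from the outset. Combining with the threshold characterization: if $\bar p\ge\hat p$ the agent does until the belief hits $\hat p$ at $t=\overline\tau_1$ and then thinks forever; if $\bar p<\hat p$ she thinks throughout. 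This is the claimed dichotomy together with the stated switching time.
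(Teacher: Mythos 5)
Your proposal is correct and follows essentially the same route as the paper: with $T=\infty$ the problem becomes a stationary belief-indexed exponential-bandit problem with a frozen safe (thinking) arm, the policy is bang-bang, and the threshold $\hat{p}$ is pinned down by indifference at the switching belief, with the switching time then read off from Bayesian updating. The only difference is cosmetic\textemdash the paper solves the doing-region ODE in closed form and imposes value matching plus smooth pasting, whereas you obtain the same $\hat{p}$ from a one-step deviation and then verify the HJB variational inequality directly.
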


\Cref{prop:Tinf} shows that an agent who is sufficiently optimistic about the doing arm starts to work on it. An initial belief $\bar{p}\in(0,1)$ for such optimism exists if and only if the expected payoff from a good doing arm is higher than that from thinking, $B-c/\lambda > V(\infty)-c/\mu$. 

Two effects push the agent towards an initial doing period: lower expected cost and larger expected benefits. The first effect is present if a solution arrives faster than progress through thinking, $\lambda>\mu$. The second effect is present if a solution provides a higher payoff than progress even in the absence of time pressure $B>V(\infty)$. Higher payoffs from the doing arm may, for example, result from saving the additional effort required to convert progress.

\paragraph{No payoff difference.} In this benchmark, we assume that in the absence of time pressure, there is no payoff difference between (productive) arms; that is, we assume
\begin{align}\label{eq:benchmark_condition}
	\lim_{\tau \rightarrow \infty} V(\tau)&=B & c=0. \tag{C.1}
\end{align}
Condition \eqref{eq:benchmark_condition} ensures (i) that absent time pressure, successful thinking delivers no better or worse solution than doing, and (ii) that there is no difference in the expected cost of obtaining a solution and of obtaining and converting progress. This assumption allows us to focus exclusively on the role of time pressure. The following proposition describes the optimal policy in this case.

\begin{proposition}\label{prop:czero}
Suppose condition \eqref{eq:benchmark_condition} holds. Either the agent works first on the thinking arm and eventually switches to the doing arm, or she works on the doing arm throughout. The agent starts with thinking if and only if the deadline $T$ is large enough such that a solution $\tau_3 \in (0,T]$ to
\[\bar{p} = \frac{\mu V(\tau_3)}{B\left(\mu+(\lambda-\mu) e^{-\lambda \tau_3}\right)}\]
exists. In this case, the agent switches to doing when the time remaining $\tau_3$ is equal to the smallest of the solutions. \end{proposition}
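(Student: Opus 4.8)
The plan is to solve the agent's problem by dynamic programming in the state $(p,\tau)$, where $p$ is the current belief that $\theta=1$ and $\tau=T-t$ the time remaining. With $c=0$, exerting full effort is weakly optimal, so the only instantaneous choice is the split $a\in[0,1]$ of effort onto doing. The HJB equation is linear in $a$, so the optimal control is bang-bang and the state space splits into a ``do'' region $D$ and a ``think'' region $S$. Since thinking is uninformative about $\theta$, along any thinking interval the belief is frozen and the state moves horizontally ($p$ constant, $\tau$ falling), whereas along a doing interval $p$ strictly falls with $dp/d\tau=\lambda p(1-p)$.

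The first substantive step is to compute the value of committing to doing forever: since $c=0$ and an arrival ends the game with payoff $B$, this is simply $W^{d}(p,\tau)=pB(1-e^{-\lambda\tau})$. Substituting $W^{d}$ into the HJB and collecting the coefficient of $a$, the $\partial_p$ and $\partial_\tau$ terms cancel nicely, and the condition ``doing is optimal''---valid wherever the value coincides with $W^{d}$---reduces to $pB\big(\mu+(\lambda-\mu)e^{-\lambda\tau}\big)\ge \mu V(\tau)$, i.e.\ to $p\ge g(\tau):=\mu V(\tau)\big/\big(B(\mu+(\lambda-\mu)e^{-\lambda\tau})\big)$; one also checks directly that $W^{d}$ solves the HJB with equality where doing is optimal. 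This identifies the threshold appearing in the statement.

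Next I would show that $D$ is absorbing: if the agent ever does, she never returns to thinking. Equivalently, the curve $g$ stays below the belief trajectory generated by doing, so that once $p\ge g(\tau)$ the same holds for all smaller $\tau$. This is where the maintained assumptions on $V$ enter---concavity $-V''/V'\ge\bar p\lambda$ controls the rate of change of $g$ relative to that of the belief, while $V(0)=0$ forces $g(\tau)\to 0$ as $\tau\to 0$, so with $p$ bounded away from $0$ the boundary is certainly not crossed near the deadline (this also rules out ``think throughout''). Granting this, the optimal policy has at most one switch, and it is necessarily think$\to$do: the value function equals $W^{d}$ on $D$ and the ``think-then-do'' value on $S$.

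Finally I would pin down the switch. Because the belief is frozen at $\bar p$ throughout the thinking phase, the remaining one-dimensional problem is to choose a switch time $\sigma\in[0,T]$ maximizing
\[
\Pi(\sigma)=\int_0^{T-\sigma}\mu e^{-\mu t}V(T-t)\,dt+e^{-\mu(T-\sigma)}\,\bar p B\big(1-e^{-\lambda\sigma}\big),
\]
and differentiating gives $\Pi'(\sigma)=e^{-\mu(T-\sigma)}\,\phi(\sigma)$ with $\phi(\sigma):=\bar p B(\mu+(\lambda-\mu)e^{-\lambda\sigma})-\mu V(\sigma)$, whose roots are exactly the solutions of $\bar p=g(\sigma)$ in the statement. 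Since $\phi(0)=\bar p B\lambda>0$ and $\phi(\sigma)\to\mu B(\bar p-1)<0$ (using $V(\infty)=B$), a root in $(0,\infty)$ always exists; the agent thinks initially iff the smallest such $\tau_3$ does not exceed $T$, in which case she thinks down to $\tau=\tau_3$ and does thereafter, and otherwise $(\bar p,T)\in D$ and she does throughout. I expect the main obstacle to be the last two steps jointly: ruling out cycling between the arms, and---when the shape of $V$ permits several indifference points---showing the smallest one is the global maximizer of $\Pi$. For $\lambda>\mu$ this is easy, since then $g$ is increasing and $\Pi$ single-peaked; for $\lambda\le\mu$ it requires genuinely exploiting the concavity hypothesis (and the technical strengthening on the limits of $V',V''$). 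A concluding verification---that the pieced-together value function is $C^1$ across the boundary (value matching and smooth pasting, the latter being exactly $\bar p=g(\tau_3)$) and satisfies the HJB variational inequality everywhere---then certifies optimality and the claimed structure.
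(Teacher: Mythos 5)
Your route is genuinely different from the paper's: you set up an HJB verification argument, guess the value $W^{d}(p,\tau)=pB(1-e^{-\lambda\tau})$ on the doing region, read the free boundary off the coefficient of $a$, and then solve the reduced one--dimensional problem over the switch time $\sigma$. The paper instead works entirely with Pontryagin necessary conditions: a terminal-doing lemma, the policy-independent evolution $dy_\tau/d\tau=\mu V'(\tau)+p_\tau\mu\lambda(V(\tau)-B)$ (with $c=0$), a no-interior-minimum lemma, and a local ``think for $dt$ then do'' comparison that produces the threshold $\hat q(\tau)=\mu V(\tau)/\bigl(B(\mu+(\lambda-\mu)e^{-\lambda\tau})\bigr)$. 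Your computations are correct and land on exactly that threshold: the coefficient of $a$ at $W=W^d$ is $pB(\mu+(\lambda-\mu)e^{-\lambda\tau})-\mu V(\tau)$, and $\Pi'(\sigma)=e^{-\mu(T-\sigma)}\phi(\sigma)$ with $\phi(\sigma)=\bar pB(\mu+(\lambda-\mu)e^{-\lambda\sigma})-\mu V(\sigma)$ checks out, as does the existence of a root using $V(\infty)=B$.

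The gap is that the two steps you flag as ``the main obstacle'' --- ruling out cycling and showing the smallest root of $\phi$ is the global maximizer when $\lambda\le\mu$ --- are precisely the substance of the proof, and you leave both open. They are closed by one observation you miss: under \eqref{eq:benchmark_condition} the function $h(\tau):=V'(\tau)-\bar p\lambda\bigl(B-V(\tau)\bigr)$ satisfies $h'(\tau)=V''(\tau)+\bar p\lambda V'(\tau)\le 0$ by the relative concavity assumption, and $h(\tau)\to V'(\infty)-\bar p\lambda\bigl(B-V(\infty)\bigr)=0$ (using $V(\infty)=B$ and $V'(\infty)=0$ from boundedness plus the assumed existence of the limit), so $h\ge 0$ everywhere; since $p_\tau\le\bar p$, the relative preference for thinking is globally nondecreasing in the time remaining along \emph{any} trajectory. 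Combined with strict preference for doing at $\tau=0$, this gives at most one sign change of the switching function, hence at most one switch and only in the direction think$\to$do, and it forces the switch to occur at the smallest solution of $\bar p=\hat q(\tau)$: were the agent doing below a larger root, she would pass through a smaller root $\tau_a$ with belief $p_{\tau_a}<\bar p=\hat q(\tau_a)$, where a marginal thinking deviation strictly improves. Note the justification is \emph{not} the naive one ($V\le B$ and $V'\ge 0$ do not by themselves sign $h$); it is the concavity hypothesis together with the limit conditions, exactly as you suspected, so your skeleton is sound but the central single-crossing lemma still has to be proved along these lines before the verification argument certifies anything.
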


The intuition behind \cref{prop:czero} is the following: If the deadline is close, even if progress arrives momentarily, then the time left to convert it is short. The payoff $V(\tau)$ vanishes fast in $\tau$. An arrival on the doing arm, instead, delivers a solution directly. As time runs out, the time pressure effect on the thinking arm trumps any fundamental uncertainty on the doing arm. The agent pulls the doing arm and throws a \emph{Hail Mary}.\footnote{ The term originates from American Football. In 1975, Dallas Cowboys quarterback Roger Staubach threw a 50-yard pass in the final seconds of a game, desperately hoping to make the game-winning touchdown. Staubach commented that while throwing the ball, he ``closed [his] eyes and said a Hail Mary''. Since then, throwing a Hail Mary has become synonymous with taking a risky action in desperation, often because time is nearly expired.}

\paragraph{Why and when to do?}\Cref{prop:czero} shows that the payoff motive in the absence of time pressure behind \cref{prop:Tinf} is not the only reason for doing: if time pressure is high, then gambling on risk, i.e., having a good arm, is more promising than gambling on time, i.e., managing to convert progress.  

Therefore, our benchmarks offer a first insight into why the agent opts for doing: (i) to materialize the payoff advantage of the doing arm\footnote{Recall that this payoff advantage can derive from effort-saving motives due to a faster arrival on the doing arm, from effort-saving motives due to not requiring an additional implementation stage, or from a higher payoff for a solution on the doing arm rather than completed conversion of progress.} and (ii) to succumb to time pressure. Our next step is to combine the two motives considering a setting with $T<\infty$ and $c>0$ for which the optimal allocation of effort is yet to be determined.

\subsection{Optimal Policy} \label{sub:optimal_policy}

We characterize the optimal policy in three steps. First, we state the agent's dynamic optimization problem. Second, we derive a set of necessary conditions for the optimal policy. These conditions have a straightforward economic interpretation that we discuss. Third, we state an algorithm which\textemdash under mild technical conditions\textemdash determines the uniquely optimal policy. The third step verifies the sufficiency of the necessary conditions.

\subsubsection{The Agent's Problem} \label{ssub:the_age}

By construction, the agent exerts full effort until the game ends. However, she dynamically decides whether to invest in thinking or in doing. Consider a situation in which the remaining time is $\tau$, and in which the agent holds a belief $p_\tau$ about the doing arm. Suppose the agent exerts effort $a_\tau$ on the doing arm for a small time interval of length $dt$. The instantaneous payoff of a solution is $B$. A solution arrives with probability $a_\tau p_\tau \lambda d t$. Suppose the agent exerts effort $1-a_\tau$ into the thinking arm for a small time interval of length $dt$. The instantaneous payoff of progress is $V(\tau)$. Progress arrives with probability $(1-a_\tau) \mu  dt$. 

The agent updates her belief about the doing arm according to Bayes' rule. Denote by $A_\tau:= \int_\tau^{T} a_s ds$ the amount of effort the agent has invested in the doing arm in the past. Then, the belief about the doing arm with time $\tau$ remaining and past effort $A_\tau$ on the doing arm is\[p_\tau = \frac{\bar{p} e^{-\lambda A_\tau}}{\bar{p} e^{-\lambda A_\tau}+(1-\bar{p})}.\]

If no arrival, i.e., neither a solution nor progress, occurs during the interval $[\tau, \tau-dt)$, then the payoff of progress declines to $V(\tau-dt)$\textemdash regardless of the agent's choice $a_\tau$. The belief, however, declines only when the doing arm was pulled with positive intensity, $a_\tau>0$.\footnote{In particular, the belief follows the standard ODE $dp_\tau/d\tau= p_\tau(1-p_\tau)a_\tau \lambda$, where, again, the notation follows the time remaining rather than the calendar time.}

The agent's objective is to dynamically maximize 

\[\max_{(a_\tau)_{\tau=0}^{T}}\int_0^T \underbrace{\vphantom{(}e^{-\mu(T-\tau - A_{\tau})}}_{P(\text{no progress yet})} \underbrace{(1-\bar{p} + \bar{p} e^{-\lambda A_{\tau}})}_{P(\text{no solution yet})} \underbrace{\left(\mu (1-a_\tau) V(\tau)+\lambda a_\tau p_\tau B \right)}_{\text{flow payoff}} d t\]
where the mapping $a_\tau:[0,T] \times [0,T-\tau] \rightarrow [0,1]$ determines the strategy with time $\tau$ remaining and past effort $A_\tau$ on the doing arm. We use $A_\tau$ as the state variable to derive the necessary conditions for an optimal strategy via optimal control methods. The formal details can be found in \cref{sec:notation}. We derive the following dynamic relative preference for the agent:

\[\gamma_\tau=e^{-\mu(T-\tau-A_\tau)} \left(\left(1-\bar{p}+\bar{p} e^{-\lambda A_\tau}\right)\mu V(\tau) - \bar{p} e^{-\lambda A_\tau}\lambda B \right)- {\eta}_\tau\]
where ${\eta}_\tau$ denotes the co-state of the optimal control problem. If ${\gamma}_\tau>0$, the agent pulls the thinking arm. If ${\gamma}_\tau<0$, the agent pulls the doing arm.

The co-state ${\eta}_\tau$ is determined by the boundary condition ${\eta}_0=0$ and its evolution
\begin{equation}\label{eq:evoeta}
	\begin{split} \frac{d\eta_{\tau}}{d\tau} = e^{-\mu(T- \tau-A_\tau)}\Bigg(& \mu(1-\bar{p}) \Big((1-a_\tau) \mu V(\tau)-c\Big) \\
	& - (\lambda-\mu) e^{- \lambda A_\tau} \bar{p} \Big((1-a_\tau)\mu V(\tau) + a_\tau \lambda B-c\Big)\Bigg).\end{split}
\end{equation}

\subsubsection{Necessary Conditions} \label{ssub:necessary_conditions}

We derive the necessary conditions for the optimal policy from Pontryagin's principle. These necessary conditions substantially reduce the space of the candidate strategies.

\begin{proposition}[Optimal Policy\textemdash Necessary Conditions]\label{lem:ded}
The optimal policy takes one of the following forms: \begin{enumerate}
\item the agent exclusively uses the doing arm,
\item the agent starts by thinking and switches to the doing arm, or
\item the agent begins with the doing arm, switches to the thinking arm eventually, and switches back to the doing arm when little time remains.
\end{enumerate}\end{proposition}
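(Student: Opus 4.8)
The plan is to argue entirely from the Pontryagin necessary conditions already established — the switching function $\gamma_\tau$, the rule that the agent does when $\gamma_\tau<0$ and thinks when $\gamma_\tau>0$, and the co-state law \eqref{eq:evoeta} — and to pin down which sign patterns $\gamma_\tau$ can exhibit on $[0,T]$. Write $m_\tau:=e^{-\mu(T-\tau-A_\tau)}$, $r_\tau:=\bar p\, e^{-\lambda A_\tau}$, $q_\tau:=1-\bar p+r_\tau$, all continuous and strictly positive. Two facts are immediate. First, since $V(0)=0$ and $\eta_0=0$, $\gamma_0=-m_0 r_0\lambda B<0$, so the agent does near the deadline. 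Second, using $\tfrac{dA_\tau}{d\tau}=-a_\tau$ (hence $\tfrac{dm_\tau}{d\tau}=\mu(1-a_\tau)m_\tau$ and $\tfrac{dr_\tau}{d\tau}=\tfrac{dq_\tau}{d\tau}=\lambda a_\tau r_\tau$), I would differentiate $\gamma_\tau$ and substitute \eqref{eq:evoeta}: every term carrying the control $a_\tau$ cancels, leaving
\[
\frac{d\gamma_\tau}{d\tau}=m_\tau\, g(\tau),\qquad g(\tau):=\mu\lambda\, r_\tau\big(V(\tau)-B\big)+\mu\, q_\tau V'(\tau)+\big(\mu q_\tau-\lambda r_\tau\big)c .
\]
Since $r_\tau,q_\tau,V,V'$ are continuous, $\gamma\in C^1[0,T]$; on the interior of any arc where the control is constant, $\gamma\in C^2$.

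The core step is a one-sided sign bound on $g$ at its zeros. Differentiating $g$ along a constant-control arc and eliminating the term $\mu\lambda r_\tau(V-B)$ with $g(\tau)=0$ gives
\[
g'(\tau)\big|_{g(\tau)=0}=\mu\lambda V'(\tau)\big(r_\tau-a_\tau(1-\bar p)\big)+\mu q_\tau V''(\tau)-\mu\lambda a_\tau(1-\bar p)c .
\]
Using the concavity hypothesis $-V''/V'\geq\bar p\lambda$ to bound $\mu q_\tau V''\leq-\mu\lambda\bar p\, q_\tau V'$, together with the elementary identity $r_\tau-\bar p q_\tau=\bar p(1-\bar p)\big(e^{-\lambda A_\tau}-1\big)\leq0$, each term on the right-hand side is non-positive; hence $g'(\tau)\leq0$ at every zero of $g$, and $g'(\tau)<0$ strictly whenever $a_\tau>0$ (the last term is then strictly negative).

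From this I would deduce that $\gamma$ has no interior local minimum at a negative value. Suppose $\tau^\dagger\in(0,T)$ were one. By continuity and $\gamma(\tau^\dagger)<0$, $\gamma<0$ on a neighbourhood of $\tau^\dagger$, so the agent does there ($a\equiv1$) and $g$ is $C^1$ near $\tau^\dagger$; as $\tau^\dagger$ is a critical point of $\gamma\in C^1$, $g(\tau^\dagger)=0$, and the bound above with $a_{\tau^\dagger}=1$ yields $g'(\tau^\dagger)<0$. Hence the second derivative of $\gamma$ at $\tau^\dagger$, which (using $g(\tau^\dagger)=0$) equals $m_{\tau^\dagger}g'(\tau^\dagger)$, is strictly negative: $\tau^\dagger$ is a strict local maximum — a contradiction. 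By continuity of $\gamma$, the absence of an interior negative local minimum forces the super-level set $\{\tau\in[0,T]:\gamma_\tau\geq0\}$ to be an interval $[\alpha,\beta]$ (possibly empty), and $\gamma_0<0$ forces $\alpha>0$. The agent therefore does on $(\beta,T]$, thinks on $[\alpha,\beta]$, and does on $[0,\alpha)$ — on the boundary $\{\gamma_\tau=0\}$ she is indifferent (and on any non-degenerate sub-interval with $\gamma_\tau\equiv0$ the bound above forces $a_\tau=0$, consistent with the middle phase). In calendar order this is one of: (1) do throughout, if $\{\gamma_\tau\geq0\}=\emptyset$; (2) think then do, if $\beta=T$; (3) do, then think, then do, if $0<\alpha\leq\beta<T$.

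The hard part will be the derivative computation for $g$: verifying that the control drops out of $\tfrac{d\gamma_\tau}{d\tau}$, and then signing $g'$ at the zeros of $g$ — this is precisely where the concavity assumption on $V$ is indispensable — together with the bookkeeping for the degenerate configurations (the endpoints $\tau=0,T$, switching points where $\gamma$ is tangent to zero, and a possible singular arc with $\gamma_\tau\equiv0$).
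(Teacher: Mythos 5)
Your proposal is correct and follows essentially the same route as the paper: the terminal condition $\gamma_0<0$ (the paper's Lemma~\ref{lem:dattheend}), the control-independent evolution $d\gamma_\tau/d\tau=m_\tau g(\tau)$ (Lemma~\ref{lem:evolutiongamma}), and the sign bound $g'\le 0$ at zeros of $g$ via the relative-concavity assumption to exclude interior minima (Lemma~\ref{lem:minimum}) are exactly the paper's three key lemmata, and your expression for $g'|_{g=0}$ is algebraically identical to the paper's $-\tfrac{dp_\tau}{d\tau}\tfrac{\mu}{p_\tau}(V'+c)+\mu(V''+\lambda p_\tau V')$ after substituting $r_\tau=p_\tau q_\tau$. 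The only (harmless) differences are the $(r_\tau,q_\tau)$ parametrization and your slightly more explicit closing argument that the superlevel set $\{\gamma_\tau\ge 0\}$ must be an interval.
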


The critical insight leading to \cref{lem:ded} is that if the agent leaves the thinking arm once, she does not return to it. At a high level, the intuition behind this insight is the following. If the agent decides to leave the thinking arm, then only because she considers the value of progress to be too low due to the deadline approaching. Notably, the decline in the value of progress does not stop\textemdash even when pulling the doing arm.

While this observation is a substantial part of the story, it falls short in one aspect: Whenever the agent pulls the doing arm unsuccessfully, the payoff of that arm also declines because the belief about its state deteriorates.

The precise intuition behind the horse race of the two arms is subtle. A stepwise inspection of the effects at play is instructive. Consider the following equivalent formulation of the relative preference from above:

\[{\gamma}_\tau= {e^{-\mu(T-\tau-A_\tau)}}{\left(1-\bar{p}+\bar{p} e^{-\lambda A_\tau}\right)} \overbrace{\underbrace{\left(\mu V(\tau) - p_\tau \lambda B \right)}_{\substack{\text{payoff difference}}}- \underbrace{\eta_\tau}_{\substack{\text{effect of lower belief}\\ \text{on continuation value}}}}^{=:y_\tau}.\]

The agent thinks whenever ${\gamma}_\tau>0$. We focus on the second part, $y_\tau$, and consider an increase in the time remaining. Using the evolution of ${\eta}_\tau$ (\cref{eq:evoeta}), we obtain

\begin{align}\label{eq:gamme_evol}
	\frac{d y_\tau}{d \tau} = \underbrace{\mu V'(\tau)}_{\text{(i) deadline effect}} + \underbrace{\mu p_\tau \lambda ( V(\tau)-B)}_{\substack{\text{(ii) payoff-on-arrival}\\\text{effect}}} + \underbrace{(\mu-\lambda p_\tau) c}_{\substack{\text{(iii) effort-to-arrival}\\ \text{effect}}}.
\end{align}

Observe that \cref{eq:gamme_evol} is independent of the agent's action $a_\tau$; that is, the agent's action has no first-order effect on the evolution of the relative preference. Any effect that the agent's current action has on instantaneous payoffs is compensated by a dynamic effect in the continuation value\textemdash a feature common in the bandit literature.\footnote{In \Cref{sub:evolution_of_preference} we provide a derivation of \eqref{eq:gamme_evol} illustrating how the direct effect of the action drops out. Note, however, that there is a second-order effect of the agent's action through the belief $p_\tau$.}

As we increase the time to the deadline, \cref{eq:gamme_evol} describes the three incentives that determine the change in the agent's relative preference: (i) the change in the value of progress due to the reduced time pressure (the deadline effect); (ii) the payoff differential between the two arms upon an arrival (payoff-on-arrival effect); and (iii) the change in the difference in the expected effort required to reach progress or a solution (effort-to-arrival effect).

The deadline effect is the only effect that can be signed unambiguously and is always positive\textemdash pushing the agent towards the thinking arm. As the deadline moves further away, obtaining progress has a higher value. The additional time makes it more likely to convert progress before the deadline. The other two effects can be either positive or negative. The sign of the payoff-on-arrival effect depends on the relative payoffs between arms. If $V(\tau)<B$, the effect is negative\textemdash pushing the agent towards doing. As the deadline moves further away, a negative payoff-on-arrival effect pushes the agent to spend the additional time on the doing arm. The effort-to-arrival effect measures the relative expected cost difference of the arms. Thinking is expected to deliver an arrival faster than doing if $\mu>p_\tau \lambda$. In this case, there is a positive effort-to-arrival effect\textemdash pushing the agent towards thinking.

We derive the intuition for our critical insight---the agent pulls the thinking arm in at most one connected interval of time---from \cref{eq:gamme_evol} and the illustrated effects. We use the following construction. Suppose that the agent splits her thinking effort into two disjoint intervals of time. Then, there has to be a doing interval that is both preceded and succeeded by a thinking period. The existence of that interval implies that \eqref{eq:gamme_evol} is positive at the beginning of this doing period, i.e., when $\tau$ is high. In this case, an expanding time window pushes the agent toward thinking. At the same time, \eqref{eq:gamme_evol} is negative toward the end of the doing period, i.e., when $\tau$ is low. An expanding time window pushes the agent toward doing. 

To satisfy this property, \eqref{eq:gamme_evol} has to change signs during the doing interval from negative to positive as $\tau$ increases; thus, \eqref{eq:gamme_evol} has to cross zero from below. Formally, the agent's relative preference must attain an interior minimum during the doing interval. Economically, this implies that an increase in the time remaining---while the agent is pulling the doing arm---must change its effect on the relative preference: from pushing the agent further toward the doing arm to pushing her back toward the thinking arm. At such a minimum, the three effects must exactly balance each other. In particular, the positive deadline effect must be compensated for by the sum of the payoff-on-arrival and effort-to-arrival effects.

Observe that the deadline effect declines as the time remaining increases, $V''(\tau)<0$. Thus, time pressure becomes less of a consideration in the agent's decision. One force pulling the agent to the thinking arm becomes weaker as $\tau$ increases. Recall that the supposed strategy requires the relative preference to attain an interior minimum. For this to occur, the payoff-on-arrival and effort-to-arrival effects combined (i) must pull the agent sufficiently toward the doing arm to compensate for the deadline effect and (ii) must evolve sufficiently in favor of the thinking arm to dominate the decline in the deadline effect. However, both properties can never be satisfied simultaneously if the decline in the deadline effect is sufficiently strong. In particular, they cannot be satisfied if $V(\tau)$ is sufficiently concave: $-V''(\tau)\geq p_\tau \lambda V'(\tau)$. Thus, any doing period is preceded or succeeded by a thinking period but not both.\footnote{For the special case $\mu>\bar{p} \lambda$, observe that the effort-to-arrival effect is always positive. Moreover, the payoff-on-arrival effect has to be negative, $V(\tau)<B$, in any doing period that succeeds a thinking period. Thus, \eqref{eq:gamme_evol} increases in the time remaining only if $V'(\tau) + \lambda p_\tau V(\tau)$ increases. Given our concavity assumption, this is not the case. Hence, \eqref{eq:gamme_evol} is nonincreasing and $y_\tau$ concave.}

Intuitively, the agent will not think twice because of the following observation: If the agent, at some point, finds the switch \emph{doing $\rightarrow$ thinking} optimal, then she cannot have found the switch \emph{thinking $\rightarrow$ doing} optimal when she had more time remaining. To see this, note that at the switch \emph{doing $\rightarrow$ thinking}, the deadline effect must be weak because the agent found it optimal to do \emph{before} that switch.  However, as we move backward in time, i.e., increase the time remaining, the deadline effect weakens. If it weakens sufficiently fast, the agent is pulled more toward doing, the further away from the deadline. Thus, it is never optimal for the agent to think twice. The examples introduced in \Cref{ssub:examples} all satisfy this property.

\cref{lem:ded} therefore follows from these two observations: (i) the agent never returns to the thinking arm and (ii) a Hail Mary is inevitable once the agent runs out of time, which delivers \cref{lem:ded}. Either the agent starts with the Hail Mary or has at most one thinking period that precedes the Hail Mary period.

The necessary conditions provide substantial structure as they limit the set of possible solutions. In the following subsection, we derive an intuitive algorithm to compute the optimal policy based on the necessary conditions. Under a mild additional assumption, the algorithm delivers the unique solution to the agent's problem. Moreover, it provides further intuition for the economics of the agent's problem and comparative statics.

\subsection{Characterization of the Optimal Policy} \label{ssub:characterization}

The basic intuition for the optimal policy follows from combining the two benchmarks in the previous section: The agent may do when her belief about the doing arm is high, and the deadline is far as in \Cref{prop:Tinf}. Moreover, the agent will do\textemdash independent of her belief about the doing arm\textemdash when the deadline is close as in \Cref{prop:czero}.

We now construct a solution algorithm that, under the following assumption, delivers the unique solution to the agent's problem. Let
\[\hat{q}(\tau):=\frac{\mu \left(V(\tau)+c \tau\right)}{\mu (B+c\tau)+ (\lambda-\mu)\left(B-\left(1-e^{-\lambda \tau}\right)\left(B-\frac{c}{\lambda}\right)\right)}.\]

\begin{assumption}\label{ass:qincre} ~ \\
	(i) $V(\infty)\leq B+\frac{c}{\mu}$.\\ (ii) If $\mu>\lambda$, then $\frac{\mu V''(\tau)}{(\mu-\lambda)U''(\tau)}-\hat{q}(\tau)$ is monotonic.
\end{assumption}
\Cref{ass:qincre} is only a sufficient condition to ensure that a unique strategy satisfies the necessary conditions. It is a technical and by no means a necessary condition.\footnote{The crucial aspects for uniqueness are (i) that the belief at the beginning of the Hail Mary period is monotonic in the length of that period and (ii) that the maximum length of the thinking period is monotonic in the length of the Hail Mary period that follows. Both aspects must be true `in the relevant regions.' Our assumptions on primitives ensure that they are universally true. Moreover, uniqueness only facilitates the computation. If it fails, then our algorithm can be straightforwardly extended to determine all candidate solutions, which then have to be compared to determine the global solution. In particular, all solution candidates must satisfy \Cref{lem:ded}.} The first part of the assumption implies that the doing arm is sufficiently attractive to consider it a valuable arm beyond the Hail Mary period. 

The interpretation of the second part of the assumption is somewhat more subtle. Note first that condition (ii) is only relevant if progress on the thinking arm is expected to arrive faster than a solution on the doing arm conditional on the doing arm being good. In such a case, the condition ensures that $\hat{q}(\tau)$ is monotonic on the relevant part by requiring that the curvature of the arms' values is sufficiently regular.

Our algorithm constructs the optimal solution by working backward from the Hail Mary period. As we show in the appendix, condition (ii) guarantees that the length of the Hail Mary period is continuous and monotonic in the belief that the agent holds about the doing arm at the beginning of that period. This observation allows us to use marginal arguments to show uniqueness. Invoking \cref{lem:ded}, we state the optimal policy in terms of three variables 
\begin{enumerate}
	\item the time spent in the Hail Mary period, $\tau_3$,
	\item the time spent in the thinking period, $\tau_2$, and
	\item the time spent in the initial doing period, $\tau_1$.
\end{enumerate}
We are looking for a solution to the equation $\tau_1+\tau_2+\tau_3=T$. We make use of the following expressions:

\[q(\tau):= \min (1,\hat{q}(\tau)).\]

\[\dot{y}(s;p,\xi) := \left.\frac{d  y_{\xi+s}}{d s}\right|_{y_\xi=0}= \mu V'(\xi+s) + p \mu \lambda (V(s+\xi) - B) + (\mu- \lambda p)c, \text{ and }\]

\[\hat{y}(\tau;p,\xi):= \int_0^{\tau} e^{\mu s} \dot{y}(s;p,\xi) ds.\]

The first, $q(\tau)$, is the agent's belief when entering the Hail Mary period with time $\tau$ remaining. It originates from the agent's indifference between entering the Hail Mary period immediately and pulling the thinking arm for an infinitesimal measure of time before entering the Hail Mary period.

The second, $\dot{y}(s;p,\xi)$, describes the change in the relative preference due to a marginal increase in the deadline of an agent who enters a Hail Mary period with belief $p$ and time remaining $\xi$ and who pulls the thinking arm during the remaining times $[\xi+s,\xi)$ before switching to the doing arm with time remaining $\xi$. 

The third, $\hat{y}(\tau;p,\xi)$, describes the value of the relative preference with deadline $T=\tau+\xi$ by integrating over the survival-weighted evolution of the relative preference assuming indifference at the start of the Hail Mary period $\hat{y}(0;p,\xi)=0$.

We use these expressions to define the length of the initial doing period and the length of the thinking period as a function of the length of the final doing period: 
\[\tau_1(\tau_3):= \frac{1}{\lambda} \ln\left(\frac{\bar{p}}{1-\bar{p}}\frac{1-q(\tau_3)}{q(\tau_3)}\right)\text{, and}\]
\[\tau_2(\tau_3):= \begin{cases}
 \min \tau>0 \text{ s.t. }\hat{y}(\tau; q(\tau_3),\tau_3)=0, &\text{if a root for $y$ given $\tau_3$ exists,} \\ \infty &\text{otherwise.}
\end{cases}\] 
The first, $\tau_1(\tau_3)$, follows because the belief when entering the Hail Mary period is determined by the time spent doing in the initial doing period and Bayes' rule.

The second, $\tau_2(\tau_3)$, follows because indifference is necessary when switching from doing to thinking for the first time and when switching back.

We are now ready to state our algorithm that solves the fixed-point problem $T=\tau_1(\tau_3)+\tau_2(\tau_3)+\tau_3$ and thereby characterizes the agent's optimal strategy. We provide a further discussion of the algorithm after stating the characterization result.\footnote{A MATLAB program implementing the algorithm is available from the authors.}

\paragraph{Algorithm. }
 \begin{enumerate}
 	\item \label{step:initial} Set $\tau_1=\tau_2=\tau_3=0$.
 	\item \label{step:onlyd} Find the largest $\overline{\tau}_3$ such that 
 	\[\forall t \in [0,\overline{\tau}_3]\qquad q(\overline{\tau}_3-t)\leq\frac{\bar{p} e^{-\lambda t}}{(\bar{p} e^{-\lambda t}+1-\bar{p})}  .\]
 	 If $\overline{\tau}_3\geq T$, set $\tau_3=T$, $\tau_2=\tau_1=0$ and stop.
 	\item \label{step:canhaveed} If $q(\overline{\tau}_3)\neq \bar{p}$ go to \ref{step:dedHMmax}.
 	\item \label{step:ed} If $\tau_2(\overline{\tau}_3)\geq T-\overline{\tau}_3$, set $\tau_3=\overline{\tau}_3$ and $\tau_2=T-\overline{\tau}_3$ and stop.
 	\item \label{step:dedHMmax} Replace $\overline{\tau}_3$ by the largest $z$ such that
 	 \[ \forall t \in [0,z]\qquad q(z-t) \leq \frac{q(z) e^{-\lambda t}}{q(z) e^{-\lambda t}+ 1-q(z)} .\]
 	\item \label{step:convergence} Set $\tau_3=z$, $\tau_1=\tau_1(\tau_3)$ and $\tau_2=\tau_2(\tau_3)$. If $\tau_1(\tau_3)+\tau_2(\tau_3)+\tau_3=T$, stop. Otherwise, reduce $\overline{\tau}_3$ marginally and repeat \ref{step:convergence}.
\end{enumerate}

\begin{proposition}[Optimal Policy~\textendash~Characterization]\label{prop:characterisation}
Under \cref{ass:qincre}, the above algorithm determines the unique optimal policy.
\end{proposition}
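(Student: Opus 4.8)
The plan is to argue that Pontryagin's necessary conditions — the shape restriction of \Cref{lem:ded}, together with the indifference equalities that continuity of the co-state $\eta_\tau$ forces at every interior switch and the boundary condition $\eta_0=0$ — admit a \emph{unique} solution; that the algorithm computes precisely that solution; and that, since an optimal control exists (a standard compactness argument) and must satisfy those conditions, the algorithm's output is the unique optimal policy. So the work splits into (a) reducing the candidate set to a one-parameter family, (b) showing the parameter is pinned down uniquely, and (c) checking the algorithm implements this.

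For (a): by \Cref{lem:ded} an optimal policy is a triple $(\tau_1,\tau_2,\tau_3)$ with $\tau_i\geq 0$ and $\tau_1+\tau_2+\tau_3=T$ (cases 1 and 2 being $\tau_1=\tau_2=0$ and $\tau_1=0$). Since $\eta_\tau$ solves \eqref{eq:evoeta} it is continuous, hence $\gamma_\tau$ is continuous and vanishes at every interior switch. Indifference at the thinking$\to$doing switch — which also carries the terminal condition $\eta_0=0$ through the Hail-Mary dynamics — fixes the belief there at $q(\tau_3)$; Bayes' rule, noting that the only doing effort before the Hail Mary is the initial phase of length $\tau_1$, gives $p_{\tau_3}=\bar{p}e^{-\lambda\tau_1}/(\bar{p}e^{-\lambda\tau_1}+1-\bar{p})=q(\tau_3)$, i.e. $\tau_1=\tau_1(\tau_3)$ (the cap $q=\min\{1,\hat q\}$ absorbs the degenerate case with no interior thinking$\to$doing switch). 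Since $\gamma_\tau$ is a positive multiple of $y_\tau$, and $y$ vanishes at both ends of the thinking interval while evolving by $\dot y$, the doing$\to$thinking switch is the first positive zero of $\hat y(\cdot;q(\tau_3),\tau_3)$, i.e. $\tau_2=\tau_2(\tau_3)$; the concavity estimate underlying \Cref{lem:ded} guarantees this first zero is the only candidate (and $\tau_2(\tau_3)=\infty$, i.e. case 2, when none exists). Thus every candidate is indexed by the scalar $\tau_3$ and must solve $\Phi(\tau_3):=\tau_1(\tau_3)+\tau_2(\tau_3)+\tau_3=T$ on the feasible range — those $\tau_3$ with $q(\tau_3)\leq\bar{p}$ (so $\tau_1(\tau_3)\geq 0$) and with the Hail-Mary belief path staying on the doing side, $q(\tau_3-t)\leq q(\tau_3)e^{-\lambda t}/(q(\tau_3)e^{-\lambda t}+1-q(\tau_3))$ for all $t\in[0,\tau_3]$ (equivalently $y<0$ throughout the Hail Mary).

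For (b), the heart of the argument, and where \Cref{ass:qincre} enters: I would show that on the feasible range $\Phi$ is continuous and strictly decreasing. First, $q$ is continuous and monotone — immediate when $\mu\leq\lambda$, and when $\mu>\lambda$ a consequence of \Cref{ass:qincre}(ii), which regularizes the curvature ratio governing the sign of $\hat q'$. Second, $\tau_1(\tau_3)+\tau_3$ is nonincreasing: differentiating the explicit logarithm gives $\tau_1'(\tau_3)=-q'(\tau_3)/(\lambda q(\tau_3)(1-q(\tau_3)))$, and the Hail-Mary feasibility constraint, which always binds at $t=0$ and must be slack just inside, forces $q'(\tau_3)\geq\lambda q(\tau_3)(1-q(\tau_3))$ there, i.e. $\tau_1'(\tau_3)+1\leq 0$. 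Third — the part that genuinely needs \Cref{ass:qincre}(i), $V(\infty)\leq B+c/\mu$ — $\tau_2(\tau_3)$ is continuous and (strictly) decreasing: a higher Hail-Mary entry belief $q(\tau_3)$ (which accompanies larger $\tau_3$ by the first point) makes $\dot y$ more negative, so $\hat y(\cdot;q(\tau_3),\tau_3)$ returns to zero sooner, and \Cref{ass:qincre}(i) is what lets one sign this comparative static cleanly. Hence $\Phi'<0$ and $\Phi(\tau_3)=T$ has at most one solution; when it has none, $\Phi$ lies strictly above or below $T$ throughout, and the optimum is of case 1 or 2, each again pinned down uniquely by the same indifference/Bayes logic. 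It then remains to match the algorithm to this characterization: step 2 locates the top of the case-1 range (all-doing belief path still above $q$), steps 3--4 test case 2, step 5 enforces Hail-Mary feasibility to get the top of the case-3 range, and step 6 is exactly the downward search for the unique root of $\Phi=T$, which converges by monotonicity. Combining with existence and \Cref{lem:ded} gives the claim.

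The main obstacle is the third point in (b): tracking the first zero of $\hat y(\cdot;q(\tau_3),\tau_3)$ as $\tau_3$ varies requires implicit differentiation through both the integrand $\dot y$ (which depends on $\tau_3$ explicitly and through $q(\tau_3)$) and the root itself, and signing the result is precisely what \Cref{ass:qincre}(i) is built for; getting the two regimes $\mu\lessgtr\lambda$ and the boundary behaviors — where $q$ hits the cap $1$, where $\tau_2=\infty$, and where the feasibility constraint binds away from $t=0$ — to fit together is the delicate bookkeeping. Establishing monotonicity of $q$ under $\mu>\lambda$ from \Cref{ass:qincre}(ii) is the other place where an explicit, rather than soft, computation is unavoidable.
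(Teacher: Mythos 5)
Your proposal is correct and rests on the same building blocks as the paper's proof: existence of an optimal control by standard arguments, the reduction to the triple $(\tau_1,\tau_2,\tau_3)$ via \Cref{lem:ded} together with the indifference conditions at each switch, the monotonicity of $q(\tau_3)$ (with \Cref{ass:qincre}(ii) doing the work when $\mu>\lambda$), and the monotonicity of $\tau_2(\tau_3)$ (where \Cref{ass:qincre}(i) is what signs $d\dot{y}/d\tau_3<0$) --- these are exactly the paper's auxiliary lemma on $q$, $\tau_1$, and $\tau_2$. Where you genuinely differ is in how uniqueness is extracted from these facts. The paper fixes two candidate Hail-Mary lengths $\tau_3'>\tau_3$ and runs a three-way case analysis on the associated terminal beliefs $\underline{p}$ versus $\underline{p}'$, deriving a contradiction in each case. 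You instead prove one additional monotonicity statement --- that $\tau_1(\tau_3)+\tau_3$ is nonincreasing, obtained from the first-order consequence $q'(\tau_3)\geq\lambda q(\tau_3)(1-q(\tau_3))$ of the Hail-Mary feasibility constraint binding at $t=0$ --- and conclude that $\Phi(\tau_3)=\tau_1(\tau_3)+\tau_2(\tau_3)+\tau_3$ is strictly decreasing, so $\Phi=T$ has at most one root. Your extra fact is precisely the differential version of the paper's ``$\underline{p}>\underline{p}'$'' case (which the paper rules out globally along the belief path using the same feasibility constraint), and strict monotonicity of $\tau_2$ disposes of the remaining two cases at once. The payoff of your packaging is a single monotone fixed-point map in place of the case split; the cost is that you must be a little more careful about differentiability of $q$ and about the boundary regimes ($\tau_2=\infty$, $q$ hitting the cap at $1$, $\tau_2=0$), which you correctly flag as the delicate bookkeeping.
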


\begin{figure}
\includegraphics[width=.43\textwidth]{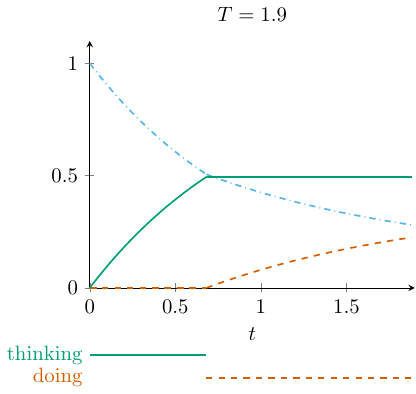}\hfill
\includegraphics[width=.43\textwidth]{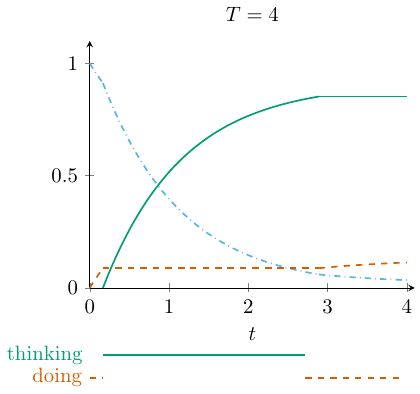}
\caption{\emph{Agent's optimal strategy and arrival probabilities.} The solid line plots the probability that the agent has made progress by time $t$, the dashed line the probability that the agent has found a solution through the doing arm by $t$, and the dash-dotted line the probability that the agent has neither made progress nor found a solution by $t$. Below we plot the time intervals in which the agent thinks or does absent any arrival. The left panel considers the optimal strategy given deadline $T=1.9$, and the right panel considers the optimal strategy given deadline $T=4$.\newline Parameters: $B=5,\bar{p}=3/4,c=1/2,\lambda=3/4,\mu=1$,$V(\tau)=(1-e^{-\tau})(B-c)$.}
\end{figure}

To build an intuition for the algorithm, recall that \cref{lem:ded} implies that it is without loss of generality to focus on three disjoint time intervals to characterize the solution: the Hail Mary period of length $\tau_3$, the thinking period of length $\tau_2$, and the initial doing period of length $\tau_1$. 

The algorithm constructs the solution via backward induction. It takes advantage of the property that the three time periods have to sum to the total time available to the deadline $T$\textemdash thereby constraining each other.

Given any length of the final doing period, $\tau_3$, we can determine the agent's belief upon entering the Hail Mary period. There are two cases: either (i) the agent enters the Hail Mary period immediately with belief $\bar{p}$, or (ii) she enters the Hail Mary period after at least one time interval of thinking. If, in addition, she enters the Hail Mary period with a belief $\hat{p}<\bar{p}$, then she must have pulled the doing arm before she started thinking.

First, in the case of (i), we need to ensure that the agent never finds it optimal to think for some positive measure of time until the deadline. To ensure this, the agent has to be sufficiently optimistic about the doing arm for any remaining time $\tau<T$. \Cref{step:onlyd} finds the largest deadline $T=\overline{\tau}_3$ such that the agent is sufficiently optimistic about pulling the doing arm throughout. It takes both declining beliefs and declining time windows into account. If $T\leq \overline{\tau}_3$, then the algorithm has found a solution.

Second, in the case of (ii), we need to ensure that the agent finds it optimal to switch to doing at the designated time $\tau_3$ and not to switch back to thinking thereafter. Thus, in addition to satisfying $q(\tau)<p_\tau$ for all $\tau<\tau_3$, we require $q(\tau_3)=p_{\tau_3}$. The latter ensures that the agent is indifferent between the arms when time $\tau_3$ remains. Optimality requires that the agent indeed prefers to think before the Hail Mary period. Using backward induction again\textemdash conditional on switching to doing at time $\tau_3$\textemdash the agent finds it optimal to think with remaining time $\tau+\tau_3$ if and only if the expression $\hat{y}(\tau';q(\tau_3),\tau_3)\geq0$ holds for all $\tau' \in [0,\tau]$. Conditional optimality follows because $\dot{y}(s;q(\tau_3),\tau_3)$ describes the evolution of the agent's relative preference between the arms\textemdash derived from the necessary conditions of the optimal control problem\textemdash assuming that she switches to the doing arm with time $\tau_3$ remaining. \Cref{step:ed} of the algorithm stops if $\hat{y}(\tau;q(\tau_3),\tau_3)\geq 0$ for all $\tau \in [0,T-\tau_3]$: the agent finds it optimal to start by thinking for a period of time $\tau_2=T-\tau_3$.

Third, if the agent engages in an initial doing period, we must be in case (ii). We know that the initial doing period determines the agent's belief for the final doing period via Bayes' rule as a function of the length of the initial doing period, $\tau_1$, and the agent's ex ante belief, $\bar{p}$. The expression $\tau_1(\tau_3)$ describes the time that the agent has to experiment without success on the doing arm such that her belief deteriorates to $q(\tau_3)$.

Fourth, the agent needs to be indifferent both after the initial doing period and when starting the final doing period. Whenever the agent finds it optimal to spend time $\tau_2=T-\tau_1(\tau_3)-\tau_3$ in the thinking period, $\hat{y}(\tau_2;q(\tau_3),\tau_3)=0$. If the expression $\tau_2(\tau_3)>0$, then that indifference is guaranteed with time remaining $\tau_2+\tau_3$. If, instead, $\tau_2(\tau_3)=\infty$, then it is never optimal to leave the initial doing period with a belief $q(\tau_3)$. \Cref{step:convergence} of the algorithm stops only if all conditions are met and thus determines the fixed point $T=\tau_1(\tau_3)+\tau_2(\tau_3)+\tau_3$.

\section{Application: Entrepreneurial Problem Solving} \label{sec:finalremarks}

Our model highlights the tradeoff that an agent faces under time pressure: should she try to apply an uncertain method ready at hand, or should she take a step back and develop a different method that involves less fundamental uncertainty? The optimal strategy is a function of both the time horizon and her belief about the initial method. Our results emphasize how learning and time pressure interact. 

We now apply our findings to our motivating application\textemdash an entrepreneur's decision to meet a target to secure follow-up financing by a deadline. We first relate the model to the specific context of entrepreneurs who need to achieve a milestone to obtain the next round of funding. After that, we derive implications from our main theoretical results regarding the application considered and relate it to empirical phenomena. 

\subsection{Entrepreneurial Problem Solving} \label{sub:entre}
Consider an entrepreneur who has raised funding for her venture. She has to prove the business's prospects by some deadline. Deadlines are ubiquitous in innovative entrepreneurship. Among many other reasons, they may come from (i) funders explicitly setting deadlines \citep[e.g., via staged contracts, see][]{kaplan2003}, (ii) the need to raise new funds before the startup runs out of cash,\footnote{See \citet{whystartupsfail}\textemdash running out of cash is the most frequent reason startups fail based on CB Insights' analysis of startup failure post-mortems.} or (iii) implicitly, according to the expectation that the market moves on after some time either by changing focus or by adopting a competitor's product.\footnote{The drone analytics provider, Airware, went out of business because, initially, they bundled their software with a self-engineered drone. However, once they were ready to launch the bundled product, cheaper alternative drones were already available. They pivoted to focusing on software development only but ran out of money and eventually ceased operations. Perhaps more famously, despite being a corporate favorite, Blackberry failed to innovate until it missed its deadline and the market had moved to iOS and Android. See also the discussion in \citet{gans2018strategy}.}

The flow cost of working on the problem, $c$, has two interpretations in the context of startups. The first interpretation is literal and derives from what is called the startup's \emph{burn rate}. Each period, the startup has to pay its employees, rent an office or lab space, purchase equipment, etc. If the startup has access to initial funds $C$, then the burn rate implies an implicit deadline $T=C/c$ by which it has to have raised new funds.\footnote{The fact that we use continuous time with a bounded per-unit effort strengthens this interpretation. Instead of seeing the deadline as a clock ticking, we could interpret $T$ as the total effort budget available. Investing $c dt$ units of effort on doing or thinking from this budget implies an arrival with rates $p_\tau \lambda dt$ and $\mu dt$, respectively. The value of progress depends on the effort remaining within the budget. Having invested $T$ without a solution makes the agent perish.} The entrepreneur wants to complete the task with funds remaining in her pocket under this interpretation. She can invest these leftover funds in later stages. The second interpretation is to consider $c$ as the agent's (linear) time cost. Such an interpretation is proposed by \citet{whystartupsfail2}. He argues that entrepreneurs have a direct time cost in the form of an action bias and prefer to get things done as quickly as possible.

Our focus is on the entrepreneur's approach in trying to meet the requirements before a deadline, for example, successfully launching a product, developing a new product, improving an existing product, or meeting a revenue threshold. We assume that the entrepreneur has an initial idea that she is not fully sure is suitable for completing the next step. The entrepreneur can try to go to the next step without further ado\textemdash she pulls the doing arm. Alternatively, she can attempt to pivot. To prepare the pivot, she searches for a better approach to meet the target\textemdash she pulls the thinking arm. Investing time and effort into a change of the startup's strategy is commonly observed \citep[see, for example,][]{Kirtley2021} but has received little theoretical attention. 

Our model of the doing arm resembles the standard experimentation approach. It is commonly used in modeling entrepreneurial strategy \citep[for an overview, see][]{kerr2014}. The thinking arm captures that the value of new ideas depends on the resources and time available to convert them into solutions. To make the difference between fundamental risk and time risk clear, \cref{example_implications}, assumes no fundamental uncertainty about the thinking arm. However, that assumption is not crucial and the other examples introduced in \Cref{ssub:examples} share the same qualitative features.

\subsection{Implications} \label{sub:implications}

Both beliefs and the time horizon matter for entrepreneurs when contemplating how to invest resources in their venture \citep[see, for example,][]{Kirtley2021,Rahmani2021}. Moreover, entrepreneurs have a tendency to \emph{do} early to ``get things done'' \citep[see, for example,][]{whystartupsfail2,gans2018strategy}. Translated to our model, entrepreneurs take risks early on to arrive at a solution quickly. This observation is in line with our finding that the agent starts by doing if she is sufficiently optimistic about her initial approach and the time pressure is not too high initially (see \cref{lem:ded}).

However, this strategy comes at a cost: The entrepreneur reduces the expected time to solve the current problem by doing early. At the same time, this strategy reduces the probability of solving the problem in time. The reason is that hoping for an early solution produces a \emph{false start} \citep{whystartupsfail2}: Doing early delays thinking about a pivot\textemdash e.g., shifting business to accommodate a different market\textemdash yet pivots occur with positive probability. If the entrepreneur ends up pivoting, she suffers from the shrunk time window.

We put structure on the thinking arm to address false starts formally. We wish to compare how the agent trades off the expected effort cost against the probability of finding a solution. Therefore, we need to take a stance on \emph{how} the agent converts progress into a solution. For clarity, here, we restrict attention to \Cref{example_implications}: An arrival on the thinking arm delivers a new bandit with a known arrival rate $\nu \geq \lambda \overline{p}$. To simplify further, we assume $B_\nu=B$ and $c_\nu=c$. The properties of \cref{example_implications} resemble the discussion in \citet{whystartupsfail2}, in particular, the discussion of the Triangulate venture and the choices of its founder.

As a first result, we see that thinking early and backloading doing improves the ex ante probability of obtaining a solution in \cref{example_implications}.
\begin{proposition}\label{prop:doing_hurts_success}
	Consider \cref{example_implications} with $B_\nu{=}B,c_\nu{=}c$. For any potential strategy inducing $\tau_1> 0, \tau_2>0$, and $\tau_3>0$, backloading all effort on the doing arm, i.e., choosing $\tau_1'=0,\tau_2'>\tau_2$, and $\tau_3'=\tau_1+\tau_3$, increases the probability that a solution is found by the deadline.
\end{proposition}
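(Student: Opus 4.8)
The plan is to write the ex ante probability that a solution is ever found as an explicit function of the three period lengths that \cref{lem:ded} shows suffice to describe a strategy, and then to verify that moving the initial doing period to the end strictly raises it.

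First I would set up the success probability. Under \cref{example_implications}, once progress arrives with time $\tau$ remaining the agent faces a known Poisson arm of intensity $\nu$; since $B$ is large she pulls it at full effort, so progress is converted in time with probability $1-e^{-\nu\tau}$. Along a deterministic schedule ``do for $\tau_1$, think for $\tau_2$, do for $\tau_3$'' with $\tau_1+\tau_2+\tau_3=T$, a solution is obtained through exactly one of three disjoint events: (i) the risky arm fires during the initial doing period, probability $\bar{p}(1-e^{-\lambda\tau_1})$; (ii) the risky arm has not fired there, progress arrives $y\in[0,\tau_2]$ time units into the thinking period (hence with $\tau_2+\tau_3-y$ remaining to the deadline) and is converted, probability $Q_1\int_0^{\tau_2}\mu e^{-\mu y}\bigl(1-e^{-\nu(\tau_2+\tau_3-y)}\bigr)\,dy$, where $Q_1:=1-\bar{p}(1-e^{-\lambda\tau_1})$ is the chance the risky arm has not yet fired; (iii) none of that happens --- in particular no progress during the thinking period, a factor $e^{-\mu\tau_2}$ --- and the risky arm fires during the Hail Mary period, probability $\bar{p}e^{-\lambda\tau_1}e^{-\mu\tau_2}(1-e^{-\lambda\tau_3})$. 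Summing yields a closed form $P(\tau_1,\tau_2,\tau_3)$, and the success probability of the backloaded schedule is $P(0,\tau_2,\tau_1+\tau_3)$.

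Then I would take the difference, substituting $Q_1=1-\bar{p}(1-e^{-\lambda\tau_1})$. The two Hail-Mary-doing terms combine through $1-e^{-\lambda(\tau_1+\tau_3)}-e^{-\lambda\tau_1}(1-e^{-\lambda\tau_3})=1-e^{-\lambda\tau_1}$, and the conversion terms combine using only the identities $\int_0^{\tau_2}\mu e^{-\mu y}\bigl(1-e^{-\nu(\tau_2+\tau_3-y)}\bigr)\,dy=(1-e^{-\mu\tau_2})-J$ and $\int_0^{\tau_2}\mu e^{-\mu y}\bigl(1-e^{-\nu(\tau_1+\tau_2+\tau_3-y)}\bigr)\,dy=(1-e^{-\mu\tau_2})-e^{-\nu\tau_1}J$, where $J:=\int_0^{\tau_2}\mu e^{-\mu y}e^{-\nu(\tau_2+\tau_3-y)}\,dy>0$. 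After cancellation the whole difference collapses to
\[ P(0,\tau_2,\tau_1+\tau_3)-P(\tau_1,\tau_2,\tau_3)=\bigl(1-e^{-\nu\tau_1}-\bar{p}(1-e^{-\lambda\tau_1})\bigr)\,J. \]
In particular, the conversion integral never has to be evaluated in closed form.

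Finally I would sign the bracket, which is precisely where the hypothesis $\nu\ge\lambda\bar{p}$ is used. From $\nu\tau_1\ge\bar{p}\lambda\tau_1$ we get $1-e^{-\nu\tau_1}\ge 1-e^{-\bar{p}\lambda\tau_1}$; and the map $p\mapsto\bigl(1-e^{-p\lambda\tau_1}\bigr)-p\bigl(1-e^{-\lambda\tau_1}\bigr)$ is strictly concave on $[0,1]$ (second derivative $-(\lambda\tau_1)^2e^{-p\lambda\tau_1}<0$) and vanishes at $p=0$ and $p=1$, hence is strictly positive on $(0,1)$, so evaluating at $p=\bar{p}$ gives $1-e^{-\bar{p}\lambda\tau_1}>\bar{p}(1-e^{-\lambda\tau_1})$. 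Chaining the two inequalities, the bracket is strictly positive whenever $\tau_1>0$ and $\bar{p}\in(0,1)$; since $J>0$, the displayed difference is strictly positive, which is the claim.

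The main obstacle is the bookkeeping in the first step: one must be careful that progress during the thinking period forecloses the Hail Mary period (the $e^{-\mu\tau_2}$ factor) and that the initial doing period enters the conversion event only through the survival factor $Q_1$. Once that decomposition is right, the algebra is mechanical and the sign reduces to the two one-line convexity facts above. A coupling of the two schedules --- same state $\theta$, same latent arrival and conversion clocks --- would reach the same conclusion and make the role of $\nu\ge\lambda\bar{p}$ more transparent, but the direct computation is shorter.
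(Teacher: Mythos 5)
Your proof is correct, and it rests on the same object as the paper's: the closed-form ex ante success probability $P(\tau_1,\tau_2,\tau_3)$ built from the three disjoint events (solution in the initial doing period, progress followed by conversion, solution in the Hail Mary period). Where you diverge is the final step. The paper substitutes $\tau_3=T-\tau_1-\tau_2$, differentiates $P$ with respect to $\tau_1$ holding $\tau_2$ and $T$ fixed, and shows the derivative is negative whenever $\nu\geq p_{\tau_1}\lambda$ --- a condition implied by the example's hypothesis $\nu\geq\bar{p}\lambda$ since $p_{\tau_1}\leq\bar{p}$ --- so that integrating from $0$ to $\tau_1$ delivers the claim. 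You instead compute the difference $P(0,\tau_2,\tau_1+\tau_3)-P(\tau_1,\tau_2,\tau_3)$ directly and obtain the factorization $\bigl(1-e^{-\nu\tau_1}-\bar{p}(1-e^{-\lambda\tau_1})\bigr)J$ with $J>0$; I checked the algebra and the Hail-Mary terms and the $(1-e^{-\mu\tau_2})$ pieces of the two conversion integrals do cancel exactly as you claim. Your signing step (monotonicity in $\nu$ plus strict concavity of $p\mapsto 1-e^{-p\lambda\tau_1}$ vanishing at $p=0$ and $p=1$) is a slightly different and quite transparent route to the same inequality. The paper's derivative argument buys a marginally sharper statement --- the success probability falls pointwise as each unit of doing time is shifted from the Hail Mary period to the front --- whereas your version avoids evaluating the conversion integral in closed form and isolates the entire gain from backloading in a single positive product. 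Both are valid; note only that, as you implicitly do and as the paper's own computation and Figure 2 confirm, the comparison holds the thinking duration fixed at $\tau_2'=\tau_2$ (the proposition's ``$\tau_2'>\tau_2$'' cannot literally hold alongside $\tau_1'=0$ and $\tau_3'=\tau_1+\tau_3$ when all three periods must sum to $T$).
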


\cref{prop:doing_hurts_success} gives a theoretical foundation for the empirical phenomenon of \emph{false starts}\textemdash the delay or entire absence of customer research before launching a minimum viable product. Whenever the entrepreneur starts with her initial idea right away, i.e., does early on, she sacrifices success probability\textemdash unless the initial deadline is very short. To see why this occurs, observe that it is straightforward to rewrite the agent's problem as
\begin{align*}
	\max_{\textbf{a}:=(a_t)_{t=0}^T} \mathbb{P}^{\textbf{a}}\left[\text{solution before }T\right]B - \mathbb{E}^{\textbf{a}}\left[\text{time worked}\right]c.
\end{align*}
This rewriting makes it apparent that the agent wants to balance the probability of success against the expected time to solve the problem.

\begin{figure}
\includegraphics[width=.45\textwidth]{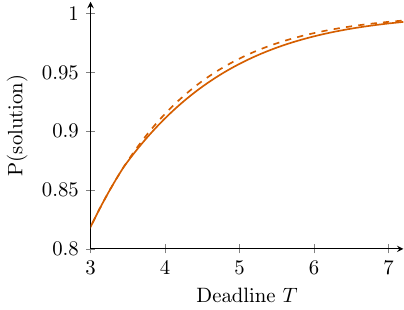}\hfill
\includegraphics[width=.45\textwidth]{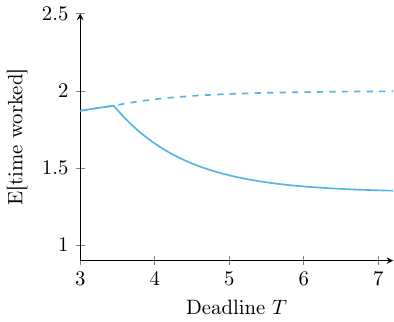}
\caption{\emph{Solution probability vs. cost reduction.} The left panel plots the ex ante probability of finding a solution against the ex ante deadline length. \newline The right panel plots the expected time the agent works. \newline Solid lines represent these under the agent-optimal strategy, and dashed lines represent these using the same total amounts of doing and thinking but backloading all doing. For deadlines shorter than the depicted range, the agent enters the Hail Mary period immediately, and the two curves coincide.\newline  Parameters: $B=5,\bar{p}=3/4,c=1/2,\lambda=3/4,\mu=\nu=1$.}\label{fig:SolProbCost}
\end{figure}

Within our model, the agent aims to reduce the expected effort because it is costly. In reality, there are multiple underpinnings for such effort cost: among these is a direct disutility of effort, the desire to save funds for the future, or a bias for moving forward fast with the venture. \Cref{fig:SolProbCost} highlights the consequences. The larger the time horizon is, the more the agent saves on her expected effort. Perhaps surprisingly, the time the agent expects to work can decline in the deadline length. The reason is straightforward: the agent adjusts her strategy to do early. If doing is successful, then she finishes earlier, which, in turn, reduces the expected effort invested. However, the agent's investment choices come at the cost of reducing the expected probability of succeeding at all.

From a venture capitalist's perspective, the return of marginally expanding the entrepreneur's deadline may thus not fully translate into an increase in the probability of finding a solution\textemdash even though we expect the agent to work absent a solution. Instead, the entrepreneur may sacrifice some of the extra potential to arrive at a solution faster. 

\begin{proposition}\label{lem:tau_cs}
		The length of the initial doing period, $\tau_1$, and the length of the thinking period, $\tau_2$, are nondecreasing in $T$. As $T \rightarrow \infty$, $\tau_2 \rightarrow \infty$. 
\end{proposition}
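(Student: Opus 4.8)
The plan is to exploit the structure of the algorithm in \Cref{ssub:characterization}, which parameterizes the optimal policy by the length of the Hail Mary period $\tau_3$ together with the two functions $\tau_1(\cdot)$ and $\tau_2(\cdot)$. The key observation is that, by \Cref{prop:characterisation} and \Cref{ass:qincre}, the optimal $\tau_3=\tau_3(T)$ is the largest solution to the fixed-point equation $T=\Phi(\tau_3):=\tau_1(\tau_3)+\tau_2(\tau_3)+\tau_3$, and the uniqueness argument in the appendix shows that $\Phi$ is continuous and strictly increasing in $\tau_3$ over the relevant region. Hence $\tau_3(T)=\Phi^{-1}(T)$ is well-defined, continuous, and nondecreasing in $T$. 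So the whole task reduces to showing that $\tau_1(\tau_3)$ and $\tau_2(\tau_3)$ are each nondecreasing in $\tau_3$; then composing with the nondecreasing map $T\mapsto\tau_3(T)$ gives the claim. (One must also handle the regime $T\le\overline\tau_3$ where $\tau_1=\tau_2=0$ and the policy is a pure Hail Mary; there the statement is trivial, and one checks continuity at the threshold $T=\overline\tau_3$.)

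First I would establish monotonicity of $\tau_1(\tau_3)=\frac1\lambda\ln\!\big(\frac{\bar p}{1-\bar p}\frac{1-q(\tau_3)}{q(\tau_3)}\big)$. Since $x\mapsto\ln\frac{1-x}{x}$ is strictly decreasing, it suffices to show $q(\tau_3)$ is nonincreasing in $\tau_3$ on the relevant range — equivalently that the belief $p$ at which the agent enters the Hail Mary period falls as that period lengthens. This is exactly the monotonicity of $q$ that \Cref{ass:qincre}(ii) is designed to guarantee (via the regularity of $\mu V''/((\mu-\lambda)U'')-\hat q$); I would cite the appendix lemma that delivers it, or differentiate $\hat q(\tau)$ directly using $V'>0$ and the concavity bound $-V''/V'\ge\bar p\lambda$, noting the min with $1$ preserves monotonicity. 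Second, for $\tau_2(\tau_3)$: recall $\tau_2(\tau_3)$ is the smallest positive root of $\tau\mapsto\hat y(\tau;q(\tau_3),\tau_3)=\int_0^\tau e^{\mu s}\dot y(s;q(\tau_3),\tau_3)\,ds$ (or $\infty$ if none exists). I would show that as $\tau_3$ increases, the integrand $\dot y(s;q(\tau_3),\tau_3)=\mu V'(\xi+s)+p\mu\lambda(V(s+\xi)-B)+(\mu-\lambda p)c$ — with $\xi=\tau_3$ and $p=q(\tau_3)$ — decreases pointwise in $s$ for $s$ small: the deadline term $\mu V'(\tau_3+s)$ falls by concavity of $V$; the payoff-on-arrival term, being negative ($V<B$) and scaled by $p=q(\tau_3)$ which was just shown to fall, moves toward $0$ but the dominant effect (established via the same concavity estimate used in \Cref{ssub:necessary_conditions}) keeps $\dot y$ shifting down; hence $\hat y(\cdot;q(\tau_3),\tau_3)$ is shifted down, so its first positive root $\tau_2(\tau_3)$ moves to the right, i.e. increases (and the set of $\tau_3$ for which no root exists, giving $\tau_2=\infty$, is an up-set). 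Combining, $\Phi=\tau_1+\tau_2+\tau_3$ is strictly increasing, $\tau_3(T)$ nondecreasing, and each of $\tau_1,\tau_2$ nondecreasing in $T$.

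For the limit $\tau_2\to\infty$ as $T\to\infty$: either $\tau_3(T)$ stays bounded or it diverges. If $\tau_3(T)\to\infty$, then since $\tau_1(\tau_3)$ is bounded (it is $\le\frac1\lambda\ln\frac{\bar p(1-\hat p)}{\hat p(1-\bar p)}$-type quantity once $q(\tau_3)$ converges; note $\hat q(\tau)$ has a finite positive limit using $V(\infty)<\infty$, the technical condition $\lim V',\lim V''$ exist, and $V(\infty)>c/\mu$), the identity $\tau_2=T-\tau_1(\tau_3)-\tau_3$ forces $\tau_2\to\infty$ unless $\tau_3$ absorbs essentially all of $T$ — but I would rule that out by showing $\tau_3(T)$ is in fact bounded: for large $\tau$ the belief $q(\tau)$ converges to a constant strictly below $\bar p$ (since $\hat q(\infty)<\bar p$ under \Cref{ass:qincre}(i) together with $V(\infty)\le B+c/\mu$), so the self-consistency requirement $q(\tau_3-t)\le p_{\tau_3-t}$ driving Steps 2 and 5 eventually caps $\overline\tau_3$; hence $\tau_3(T)\to\tau_3^\ast<\infty$ and $\tau_1(\tau_3(T))\to\tau_1^\ast<\infty$, whence $\tau_2=T-\tau_1-\tau_3\to\infty$.

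The main obstacle I anticipate is the second monotonicity step — showing the first positive root of $\hat y(\cdot;q(\tau_3),\tau_3)$ moves monotonically in $\tau_3$. The two channels through which $\tau_3$ enters $\dot y$ (directly as the shift $\xi$, and indirectly through the belief $p=q(\tau_3)$) push the payoff-on-arrival term $p\mu\lambda(V(s+\xi)-B)$ in opposite directions, so pointwise monotonicity of the integrand is not immediate and genuinely relies on the concavity bound $-V''/V'\ge\bar p\lambda$ and on \Cref{ass:qincre}; getting the inequality to go the right way uniformly in $s\in[0,\tau_2]$, rather than just at $s=0$, is the delicate part and is presumably where the appendix does the real work. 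Everything else is bookkeeping: inverting a monotone $\Phi$, composing monotone maps, and taking limits using the stated behavior of $V$ at infinity.
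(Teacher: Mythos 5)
Your proposed architecture rests on monotonicity claims whose directions are reversed, and this is not a matter of convention: it breaks the argument. In the paper, $\hat q(0)=0$ and $\lim_{\tau\to\infty}\hat q(\tau)=1$, and \cref{lem:monotonicityqtau12} establishes that $q(\tau_3)$ is monotonically \emph{increasing} in $\tau_3$ (a longer final doing period requires a \emph{higher} entry belief, since the agent must be willing to do for longer). Consequently $\tau_1(\tau_3)=\frac1\lambda\ln\bigl(\frac{\bar p}{1-\bar p}\frac{1-q(\tau_3)}{q(\tau_3)}\bigr)$ is \emph{decreasing} in $\tau_3$, and the paper also shows $\tau_2(\tau_3)$ is \emph{decreasing} in $\tau_3$. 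You assert the opposite in all three cases. With the correct signs, your map $\Phi(\tau_3)=\tau_1(\tau_3)+\tau_2(\tau_3)+\tau_3$ adds one increasing and two decreasing terms, so it is not obviously monotone and cannot simply be inverted; moreover, if $\Phi$ \emph{were} increasing and $\tau_1(\cdot),\tau_2(\cdot)$ were increasing as you claim, the composition would make $\tau_1,\tau_2$ increasing in $T$ only because every premise was flipped. In fact the optimal $\tau_3$ is \emph{nonincreasing} in $T$ (in the regime with an initial doing period), which is exactly why the paper cannot use your composition strategy and instead argues by contradiction: supposing $\tau_1'<\tau_1$ for $T'>T$ forces $q(\tau_3')>q(\tau_3)$, hence $\tau_3'>\tau_3$ by monotonicity of $q$, hence $\tau_2'\le\tau_2$ by monotonicity of $\tau_2(\cdot)$, which is incompatible with the time budget $T'=\tau_1'+\tau_2'+\tau_3'>T$. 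Monotonicity of $\tau_2$ in $T$ then follows by chaining $\tau_1\uparrow\Rightarrow q(\tau_3)\downarrow\Rightarrow\tau_3\downarrow\Rightarrow\tau_2\uparrow$.

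Your treatment of the limit is closer in spirit to the paper's (boundedness of $\tau_1+\tau_3$ forces $\tau_2=T-\tau_1-\tau_3\to\infty$), but the stated reason is again wrong: you invoke ``$\hat q(\infty)<\bar p$ under \cref{ass:qincre}(i),'' whereas the paper proves $\lim_{\tau\to\infty}\hat q(\tau)=1>\bar p$. It is precisely \emph{because} $\hat q$ rises to $1$ that the entry belief constraint $q(\tau_3)\le\bar p$ caps $\tau_3$ at the finite value $q^{-1}(\bar p)$, and the lower bound $\hat p>0$ on the entry belief caps $\tau_1+\tau_3$. You correctly anticipated that the monotonicity of the root of $\hat y(\cdot;q(\tau_3),\tau_3)$ is the delicate step, but the inequality there also runs the other way: the paper shows $d\dot y/d\tau_3<0$ using $q'>0$, $V(\tau)\le B+c/\mu$, and the relative concavity bound, and concludes $\tau_2(\tau_3)$ \emph{decreasing}.
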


\begin{figure}
\includegraphics[width=.46\textwidth]{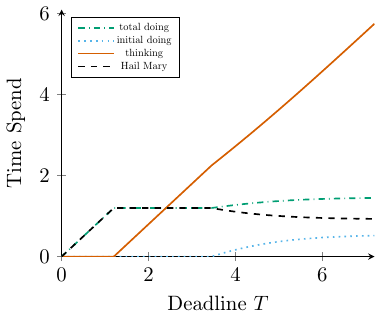}\hfill
\includegraphics[width=.46\textwidth]{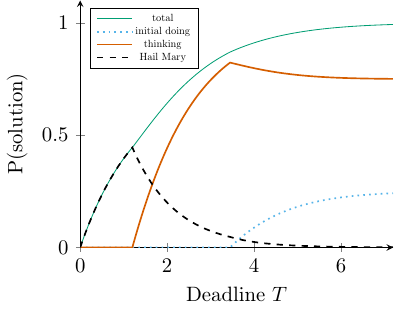}\hfill
\caption{\emph{Time spent in different periods (left), and probability of a solution by period (right) for different ex ante deadlines $T$.} The left panel plots the time spent in different phases under the optimal strategy against the initial time horizon: initial doing (dotted), thinking (solid), and Hail Mary (dashed). The dash-dotted line depicts the maximum time spent on the doing arm.\newline
The right panel plots the ex ante probabilities of obtaining a solution by phase against the initial time horizon: initial doing (dotted), Hail Mary (dashed), and thinking\textemdash progress \& conversion (thick solid). The thin solid line is the aggregate probability that a solution occurs (the sum of the other three curves).\newline
\emph{Note:} This figure compares different ex ante time horizons and must not be confused with the agent's decision over time. Parameters: $B=5,\bar{p}=3/4,c=1/2,\lambda=3/4,\mu=\nu=1$.}\label{fig:periodsandprobs}
\end{figure}

\cref{lem:tau_cs} shows that in the beginning, the agent never decreases the time devoted to doing. As the left panel of \cref{fig:periodsandprobs} suggests, the frontloading of doing and thus false starts become a larger problem with longer deadlines. Therefore, an increase in the deadline can not improve the induction of false starts. However, once the deadline offered is long enough, the problem becomes second-order: $\tau_2$ becomes arbitrarily large, and the probability of obtaining some solution converges to 1.

Depending on the product at hand, the venture capitalist may not only care about the entrepreneur finding a solution but also benefit from potential externalities depending on the solution method. For example, venture capitalists may benefit from customer research for other projects. In particular, if\textemdash as \citet{gompers1995,kaplan2003} suggests\textemdash the entrepreneur's motivation within a financing stage comes mainly from meeting the explicit requirements, i.e., from solving the problem in time, the venture capitalist's primary instrument is to expand or tighten the duration of the stage.

As we see in the right panel of \cref{fig:periodsandprobs}, tightening the entrepreneur's deadline may increase the probability that she finds her solution using customer research. The reason is that with low time pressure, the entrepreneur spends large portions of the extra time gambling on a quick and successful launch of her initial minimum viable product. Even if she fails initially, she remains confident that there is enough time to generate and convert insights from customer research. With a tighter deadline, the same entrepreneur engages in customer research earlier.

A direct consequence of this observation is that a venture capitalist may provide deadlines that are shorter than his actual time horizon to induce\textemdash perhaps surprisingly\textemdash a more thorough approach by the entrepreneur. The deadline discourages the entrepreneur from gambling on quick successes and incentivizes her to begin the project with customer research.\footnote{Interestingly, external risk, described as risks equally uncertain to both the venture capitalist and the entrepreneur\textemdash e.g., future demand for an undeveloped product\textemdash significantly lowers the time until the subsequent financing round in \citet{kaplan2003}. Our results provide one mechanism that can rationalize this observation: when external risk is high, the value of customer research is high. Venture capitalists can encourage early customer research with intermediate deadlines: long enough to prevent an immediate Hail Mary but short enough to discourage a false start.}

Unsurprisingly, the agent's time spent on each approach is a function of the agent's initial belief. For example, suppose the entrepreneur is pessimistic that a launch will succeed without additional insights from customer research. In this case, she is unwilling to launch it\textemdash unless the time window is small\textemdash and rather engages in customer research first. Instead, if the entrepreneur is optimistic, she tries launching first to save her effort on customer research.

Indeed, as the following proposition shows, if the initial belief $\bar{p}$ is large, then---independent of the deadline---the agent never begins by thinking. At the same time, if the initial belief is low, then the agent only starts with doing when under immediate time pressure.

Recall $\hat{p}$ from \cref{prop:Tinf},

\[\hat{p} = \frac{c/\lambda}{B-(V(\infty)-c/\mu)}= \frac{\mu \nu}{\lambda (\mu+\nu)}\]
which, in \cref{example_implications}, is independent of $B$ and $c$.\footnote{The proof of \cref{prop:beliefs} also makes the proposition applicable outside of \cref{example_implications}.}

\begin{proposition}\label{prop:beliefs}
Consider \cref{example_implications} with $B_\nu{=}B,c_\nu{=}c$. Fix $B,c,\lambda,\mu$, and $V$ such that \cref{ass:qincre} holds. Independent of the time horizon $T$, the following statements hold:\begin{enumerate}
	\item \label{it:overlinep} there is a $\tilde{p}$ such that if $\bar{p
}>\tilde{p}$, then the agent begins with a doing period and $\tilde{p}$ solves
\[ \tilde{p}=\frac{V'(q^{-1}(\tilde{p}))+c}{\lambda (B+c/\mu - V(q^{-1}(\tilde{p}))};
\]
	\item \label{it:hatp} if $\bar{p
}< \hat{p}$, then the agent switches arms at most once and only from thinking to doing;
	\item \label{it:underlinep} if $\bar{p
}\geq \hat{p}$, then the agent's belief never falls below \[\check{p}:=\frac{\hat{p} e^{-\lambda q^{-1}(\hat{p})}}{\hat{p} e^{-\lambda q^{-1}(\hat{p})}+ 1-\hat{p}}\geq \frac{\bar{p
} e^{-\lambda T}}{\bar{p
} e^{-\lambda T}+ 1-\bar{p
}}.\]
\end{enumerate} 
\end{proposition}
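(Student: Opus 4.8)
\emph{Proof plan.} I would build on \cref{lem:ded} (every optimal policy is pure doing, think--then--do, or do--think--do), on the infinite-horizon benchmark \cref{prop:Tinf}, and on the fact---part of what \cref{ass:qincre} secures and is used in the proof of \cref{prop:characterisation}---that $q(\cdot)$ is continuous and strictly increasing from $q(0)=0$ to $q(\infty)=1$, so $q^{-1}$ is well defined and increasing. Write $g(\tau;p):=\mu V'(\tau)+p\mu\lambda(V(\tau)-B)+(\mu-\lambda p)c$ for the right-hand side of \eqref{eq:gamme_evol}, so that $dy_\tau/d\tau=g(\tau;p_\tau)$ along any path, and $h(p):=g(q^{-1}(p);p)=\dot y(0;p,q^{-1}(p))$. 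Note that ``begins with a doing period'' is the negation of ``think--then--do'', while ``switches at most once, only from thinking to doing'' is the negation of ``do--think--do'' (i.e.\ $\tau_1=0$). Two elementary facts get used: since $V'(\infty)=0$, $g(\infty;p)=p\mu\lambda(V(\infty)-B)+(\mu-\lambda p)c$ is \emph{strictly decreasing} in $p$, because $\partial_p g(\infty;p)=\lambda(\mu V(\infty)-\mu B-c)<0$ by \cref{ass:qincre}(i); and $g(\infty;p)=0$ exactly at $p=\hat p$, which is just the defining identity of $\hat p$ in \cref{prop:Tinf}.

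\emph{Part 2 (and an ingredient for Part 3).} Suppose the optimal policy is do--think--do, so $\tau_1>0$ and the doing$\,\to\,$thinking switch occurs at time remaining $\hat\tau:=\tau_2+\tau_3<\infty$ with belief $q(\tau_3)$, at which $y_{\hat\tau}=0$. On the initial doing interval $y_\tau\le0$ with $y_{\hat\tau}=0$, hence $dy_\tau/d\tau\le0$ at $\hat\tau$, i.e.\ $g(\hat\tau;q(\tau_3))\le0$. Because $q(\tau_3)<\bar p$ (the initial doing lowers the belief) and $V''/V'\le-\bar p\lambda$, the map $\tau\mapsto g(\tau;q(\tau_3))$ is strictly decreasing (its derivative is $\mu V'(\tau)(V''/V'+q(\tau_3)\lambda)<0$), so $g(\infty;q(\tau_3))<g(\hat\tau;q(\tau_3))\le0$; by the facts above this forces $q(\tau_3)>\hat p$. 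Combined with $q(\tau_3)<\bar p$ this gives $\bar p>\hat p$. Contrapositively, if $\bar p<\hat p$ (indeed if $\bar p\le\hat p$) no do--think--do policy is optimal, which is Part 2.

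\emph{Part 1.} Rearranging the equation defining $\tilde p$ shows it is precisely $h(\tilde p)=0$; a root in $(0,1)$ exists since $h(0^+)=\mu V'(0)+\mu c>0$ and $h(1^-)=g(\infty;1)=\mu\lambda(V(\infty)-B)+(\mu-\lambda)c<0$ (equivalent to $\hat p<1$, a standing assumption). I would then show $h$ is strictly decreasing on $(\tilde p,\bar p)$: $h'(p)=\tfrac{\mu V'}{q'}\big(\tfrac{V''}{V'}+p\lambda\big)+\lambda(\mu V-\mu B-c)$, with $V,V',V'',q'$ evaluated at $q^{-1}(p)$, where the second term is $\le0$ by \cref{ass:qincre}(i) and, for $p<\bar p$, the factor $V''/V'+p\lambda\le(p-\bar p)\lambda<0$ by concavity makes the first term strictly negative; hence $h(\bar p)<h(\tilde p)=0$ when $\bar p>\tilde p$. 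Finally, in a think--then--do policy the belief is constantly $\bar p$ through the thinking phase, so the Hail Mary begins at $\tau_3=q^{-1}(\bar p)$, and optimality of thinking just before it forces $\left.\tfrac{d}{d\tau}\hat y(\tau;\bar p,\tau_3)\right|_{\tau=0}=h(\bar p)\ge0$, contradicting $h(\bar p)<0$. (If $q^{-1}(\bar p)\ge T$ no thinking phase fits and the policy is pure doing, which also begins with doing.) So $\bar p>\tilde p$ rules out think--then--do.

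\emph{Part 3.} The belief is weakly decreasing along any path, so its minimum is the deadline belief $\bar p e^{-\lambda A_0}/(\bar p e^{-\lambda A_0}+1-\bar p)$, with $A_0=\tau_1+\tau_3$ the total doing effort. When a thinking phase is present, inserting $e^{-\lambda\tau_1}=q(\tau_3)(1-\bar p)/(\bar p(1-q(\tau_3)))$ (from the definition of $\tau_1(\tau_3)$) and $\tau_3=q^{-1}(q(\tau_3))$ shows this minimum equals $\Psi(q(\tau_3))$, where $\Psi(p):=p e^{-\lambda q^{-1}(p)}/(p e^{-\lambda q^{-1}(p)}+1-p)$, and $\check p=\Psi(\hat p)$. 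If $\bar p\ge\hat p$ then $q(\tau_3)\ge\hat p$: for think--then--do, $q(\tau_3)=\bar p\ge\hat p$; for do--think--do, $q(\tau_3)>\hat p$ by the Part-2 argument. Since $\Psi$ is nondecreasing (again from the curvature regularity in \cref{ass:qincre}), the minimum belief is $\ge\Psi(\hat p)=\check p$. If the policy is pure doing instead, then $T\le\overline{\tau}_3\le q^{-1}(\bar p)$ (the pure-doing branch of the algorithm forces $q(\overline{\tau}_3)\le\bar p$), so the deadline belief is $\ge\Psi(\bar p)\ge\Psi(\hat p)=\check p$. The displayed inequality $\check p\ge\bar p e^{-\lambda T}/(\bar p e^{-\lambda T}+1-\bar p)$ is the bound obtained precisely when a thinking phase occurs ($A_0<T$); on the short horizons where a pure Hail Mary is optimal, the minimum belief is just $\bar p e^{-\lambda T}/(\bar p e^{-\lambda T}+1-\bar p)$, which then exceeds $\check p$, so ``never below $\check p$'' still holds. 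The load-bearing steps are the two monotonicity claims---$h$ strictly decreasing on $(\tilde p,\bar p)$, and $\Psi$ nondecreasing---which is exactly where the concavity hypothesis and \cref{ass:qincre} are used; the rest is bookkeeping around \cref{lem:ded}, \cref{prop:Tinf}, and the algorithm. (A softer alternative route to $q(\tau_3)\ge\hat p$ uses \cref{lem:tau_cs}: $\tau_1$ is nondecreasing in $T$ and $q(\tau_3)=\bar p e^{-\lambda\tau_1}/(\cdots)\to\hat p$ as $T\to\infty$.)
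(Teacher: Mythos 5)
Your Parts 1 and 2 are correct and follow essentially the same route as the paper's proof: Part 1 is the paper's construction of $\tilde p$ as the root of $\dot y(0;p,q^{-1}(p))=0$ combined with monotonicity of that expression in $p$ (your explicit chain-rule computation of $h'$ is in fact more careful than the paper's one-line justification), and Part 2 is the paper's observation that a do--think--do policy forces the relative preference to have nonpositive slope at the do$\,\to\,$think switch, which by the relative concavity of $V$ propagates to $\tau=\infty$ and pins down $q(\tau_3)>\hat p$, hence $\bar p>\hat p$.

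Part 3 contains a genuine gap: the claim that $\Psi(p)=pe^{-\lambda q^{-1}(p)}/(pe^{-\lambda q^{-1}(p)}+1-p)$ is nondecreasing does not follow from \cref{ass:qincre} and is false in general. In odds form, $\Psi\circ q$ is nondecreasing at $\tau$ iff $q'(\tau)\geq\lambda q(\tau)(1-q(\tau))$; for \cref{example_implications} one checks that $1-\hat q(\tau)$ decays like $1/\tau$ while $\hat q'(\tau)$ decays like $1/\tau^{2}$, so this inequality fails for large $\tau$ (numerically it already fails around $\tau\approx 10$ for the paper's own parameter values), i.e., $\Psi$ is eventually \emph{decreasing}. \Cref{ass:qincre}(ii) only delivers monotonicity of $q$, not a lower bound on its rate of increase. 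Your conclusion is nevertheless true, but the correct ingredient is the necessary condition that the belief stays weakly above $q(\cdot)$ throughout the final doing period (\cref{lem:thinkingbetterwithlonghorizons}, equivalently Steps 2 and 5 of the algorithm): writing the terminal odds as $\frac{p_{\tau_3}}{1-p_{\tau_3}}e^{-\lambda\tau_3}$, the requirement $p_s\geq q(s)$ at $s=q^{-1}(\hat p)\leq\tau_3$ yields terminal odds at least $\frac{\hat p}{1-\hat p}e^{-\lambda q^{-1}(\hat p)}$, which is exactly the odds of $\check p$. This is what the paper's proof does, phrased as a contradiction: a terminal belief below $\check p$ would put the agent in the Hail Mary period at time remaining $q^{-1}(\hat p)$ with belief below $q(q^{-1}(\hat p))=\hat p$, violating the necessary condition. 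The same substitution is needed in your pure-doing branch, where you again invoke $\Psi(\bar p)\geq\Psi(\hat p)$.
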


\Cref{prop:beliefs} provides insights into the potential for the venture capitalist who receives a payoff $\Pi>0$ if the entrepreneur successfully launches the product in some way before a deadline $T^{VC}$. 

Note that the venture capitalist's first best is identical to that derived in \cref{prop:czero}. The venture capitalist does not incur the entrepreneur's cost. Whether he can implement his first best depends on whether he can design a contract $(B,T)$ such that the agent switches once and at the right time.

It is trivial that the venture capitalist can induce his first best if his preferred strategy is to throw the Hail Mary throughout, i. e., if his own deadline $T^{VC}$ is relatively short. However, for larger deadlines, he has to find a payment $B$ such that the entrepreneur's optimal switching time $q^{-1}(\bar{p})$ coincides with his optimal switching time $\tau^{VC}$. It turns out that such a payment may not exist. For example, for the case $\mu\geq \lambda=\nu=1$, no such $B$ exists for any $\bar{p}, B>0$, and $c>0$.\footnote{The venture capitalist's switching time, $\tau$, solves $\bar{p}=\frac{\mu V^{VC}(\tau)}{\Pi\left(\mu + (\lambda-\mu)e^{-\nu \tau}\right)}= \frac{\mu (1-e^{-\nu \tau})}{\left(\mu + (\lambda-\mu)e^{-\lambda \tau}\right)}$;\\ the entrepreneur's switching time, $\tau$, solves  $\bar{p}= \frac{\mu \left((1-e^{-\nu \tau})(B-c/\nu)+ c \tau\right)}{\mu\left(B + c \tau \right) + (\lambda-\mu)\left(B-(1-e^{-\lambda \tau})(B-c/\lambda)\right)}$. If $c>0$, for example, then both equations cannot hold for any $B$ if $\mu\geq\nu=\lambda=1$.}

If the venture capitalist cannot control $B$, for example, because the entrepreneur is motivated by success rather than by payments from the venture capitalist and does not face immense time pressure, then obtaining the venture capitalist's first best reduces at most to a nongeneric coincidence. Whenever the belief about the entrepreneur's initial idea is too high, $\bar{p}>\tilde{p}$, then achieving it within any deadline is impossible. 

\cref{prop:beliefs} admits the following corollary, which has further implications on how venture capitalists can use time pressure to induce entrepreneurs to exert desired actions.
 
\begin{corollary}\label{cor:maxdoing}
The agent's belief during the thinking phase is larger than $\min\{\hat{p},\bar{p}\}$. If $\bar{p}\leq \hat{p}$, then the probability that the agent solves the problem through the doing arm is maximized with deadline $T_1=q^{-1}(\bar{p})$.
\end{corollary}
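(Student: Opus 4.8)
The statement has two independent parts, which I would prove separately. Throughout I would use \Cref{lem:ded}: the optimal policy is pure doing, thinking-then-doing, or doing-then-thinking-then-Hail-Mary; and since the doing arm is idle during any thinking interval, the belief is constant there and equals the belief $q(\tau_3)$ with which the agent enters the subsequent Hail Mary of length $\tau_3$ — unless there is no initial doing period, in which case that belief is simply $\bar p$.

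\emph{First part.} If the policy has no initial doing period, the belief while thinking is $\bar p\ge\min\{\hat p,\bar p\}$ and we are done; so suppose there is an initial doing period. Then $q(\tau_3)<\bar p$ (a positive-length doing spell strictly lowers the belief) and, by the second part of \Cref{prop:beliefs}, $\bar p\ge\hat p$, so $\min\{\hat p,\bar p\}=\hat p$. On the thinking interval $[\tau_3,\tau_2+\tau_3]$ the quantity $y_\tau$ of the necessary conditions is continuous, vanishes at both endpoints (indifference instants, where $\gamma_\tau=0$ and $\gamma_\tau$ is a strictly positive multiple of $y_\tau$), and is positive in the interior; hence it has an interior maximum at some $\tau^\star$ with $dy_\tau/d\tau|_{\tau^\star}=0$. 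Since $V(\cdot)<V(\infty)\le B+c/\mu$ by \Cref{ass:qincre}, the right-hand side of \eqref{eq:gamme_evol} is strictly decreasing in the belief, so this first-order condition determines the constant belief during thinking as
\[
q(\tau_3)=q^\star(\tau^\star),\qquad q^\star(\tau):=\frac{V'(\tau)+c}{\lambda\big(B+c/\mu-V(\tau)\big)} .
\]
It remains to show $q^\star(\tau^\star)>\hat p$. Because $V$ is increasing and concave with a finite limit, $V'(\infty)=0$, so $\lim_{\tau\to\infty}q^\star(\tau)=c/\big(\lambda(B+c/\mu-V(\infty))\big)=\hat p$. Differentiating, $(q^\star)'(\tau)$ has the sign of $V''(\tau)\big(B+c/\mu-V(\tau)\big)+V'(\tau)\big(V'(\tau)+c\big)$; substituting $V'+c=\lambda(B+c/\mu-V)\,q^\star$ and using the concavity hypothesis $-V''\ge\bar p\lambda V'$, this expression is at most $\lambda V'(\tau)\big(B+c/\mu-V(\tau)\big)\big(q^\star(\tau)-\bar p\big)$, hence strictly negative wherever $q^\star(\tau)<\bar p$. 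Since $q^\star(\tau^\star)=q(\tau_3)<\bar p$, the set $\{q^\star<\bar p\}$ is forward-invariant on $[\tau^\star,\infty)$ and $q^\star$ is strictly decreasing there, so $q(\tau_3)=q^\star(\tau^\star)>\lim_{\tau\to\infty}q^\star(\tau)=\hat p$.

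\emph{Second part.} Let $\bar p\le\hat p$. By the second part of \Cref{prop:beliefs} (and continuity at $\bar p=\hat p$) the agent never does before thinking, so the policy is to think for $\tau_2$ and then Hail-Mary for $\tau_3$, with $\tau_2+\tau_3=T$. The belief entering the Hail Mary is still $\bar p$; if a thinking phase is present ($\tau_2>0$) the indifference at the switch gives $q(\tau_3)=\bar p$, i.e.\ $\tau_3=q^{-1}(\bar p)=:T_1$, while if $T\le T_1$ the agent Hail-Marys throughout and $\tau_3=T$; hence $\tau_3=\min\{T,T_1\}$. A solution is obtained through the doing arm precisely when no progress arrives during the thinking phase \emph{and} the doing arm then delivers during the Hail Mary — two independent events — so
\[
\mathbb P\big[\text{solved through the doing arm}\big]=e^{-\mu\tau_2}\,\bar p\,\big(1-e^{-\lambda\tau_3}\big).
\]
For $T\le T_1$ this equals $\bar p(1-e^{-\lambda T})$, strictly increasing in $T$; for $T\ge T_1$ it equals $e^{-\mu(T-T_1)}\bar p(1-e^{-\lambda T_1})$, strictly decreasing in $T$; the two expressions agree at $T_1$. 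Hence the probability is uniquely maximized at $T=T_1=q^{-1}(\bar p)$.

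\emph{Main obstacle.} The second part is routine once the policy shape is fixed via \Cref{prop:beliefs}. The crux is the first part: (i) extracting $q(\tau_3)=q^\star(\tau^\star)$ cleanly from the optimal-control conditions, which requires establishing that $\gamma_\tau$ (hence $y_\tau$) is continuous and vanishes exactly at the two boundaries of the thinking interval, so that an interior stationary point of $y_\tau$ genuinely exists; and (ii) the curvature/invariant-region argument that, through the concavity hypothesis and $\lim_{\tau\to\infty}q^\star(\tau)=\hat p$, upgrades this to the \emph{strict} inequality $q^\star(\tau^\star)>\hat p$. A minor loose end is verifying from the Algorithm that, when $\bar p\le\hat p$, the Hail-Mary-throughout regime is exactly $\{T\le q^{-1}(\bar p)\}$, which relies on the monotonicity of $q$ from \Cref{ass:qincre}.
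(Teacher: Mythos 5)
Your proof is correct, and the second part coincides with the computation the paper intends: the corollary is stated as a direct consequence of \Cref{prop:beliefs}, for which the paper gives no separate proof, and your identification of the policy shape via \Cref{prop:beliefs}(2), $\tau_3=\min\{T,q^{-1}(\bar p)\}$, and the probability $e^{-\mu\tau_2}\bar p(1-e^{-\lambda\tau_3})$ being single-peaked at $T_1$ is exactly the intended argument. Where you genuinely diverge is the first part. The paper's route is essentially free: the first step of the proof of \Cref{prop:beliefs}(3) already establishes that whenever $\bar p\geq\hat p$ the agent enters the Hail Mary period with belief at least $\hat p$ (by a contradiction argument applying item (2) to the continuation game at the moment the belief would first drop below $\hat p$), and since the belief is frozen during the thinking phase and equals the belief at entry into the Hail Mary, the bound $\min\{\hat p,\bar p\}$ follows in one line. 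You instead re-derive the bound from scratch: you locate the interior stationary point $\tau^\star$ of $y_\tau$ on the thinking interval (using indifference at both endpoints and strict concavity from \Cref{lem:minimum}), solve the first-order condition for the belief to get $q(\tau_3)=q^\star(\tau^\star)$ with $q^\star(\tau)=\bigl(V'(\tau)+c\bigr)/\bigl(\lambda(B+c/\mu-V(\tau))\bigr)$, and then run an invariance/monotonicity argument on $q^\star$ using the relative-concavity hypothesis and $\lim_{\tau\to\infty}q^\star(\tau)=\hat p$. This is heavier but sound (it needs $V(\tau)<B+c/\mu$ from \Cref{ass:qincre}(i) and $\lim_{\tau\to\infty}V'(\tau)=0$, both available), and it buys you something the paper's statement does not claim explicitly: a \emph{strict} inequality $q(\tau_3)>\hat p$ whenever an initial doing period occurs, whereas citing \Cref{prop:beliefs}(3) only gives the weak one. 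The one loose end you flag yourself\textemdash that for $T>T_1$ a thinking phase is actually present rather than pure doing\textemdash closes immediately from \Cref{lem:thinkingbetterwithlonghorizons} and the monotonicity of $q$ in \Cref{lem:monotonicityqtau12}, since $q(T)>q(T_1)=\bar p$ rules out doing throughout.
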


The corollary states that the agent is constrained in the amount of experimentation she is willing to exert by her option to think instead.

The corollary is, for example, relevant in the following setting. Suppose that doing corresponds to the entrepreneur trying to launch a product in a particular business-to-business (B2B) context. In contrast, thinking corresponds to exploring potential direct-to-consumer (D2C) markets where the entrepreneur can pivot with her product. Suppose further that it is known that B2B is not the ideal market so the feasibility of a successful launch is uncertain. However, it is also uncertain \emph{which} D2C market is the right market. To resolve this uncertainty, the entrepreneur needs to carry out customer research. However, the venture capitalist may be interested in entering the B2B market to establish his reputation and therefore may have the preference (subject to success) that the entrepreneur launches in the B2B market.

Because the entrepreneur can pivot, she is not exploring all the options to launch in the B2B market. By imposing time pressure on the entrepreneur, the venture capitalist can maximize the chances of entering the B2B market.\footnote{Although the possibility result stated in \cref{cor:maxdoing} relies on the fact that a pessimistic entrepreneur is not going to have an initial doing phase, it is often the case that even if an infinite deadline involves an initial doing phase, the likelihood of obtaining a solution through the doing arm is maximized with $T=T_1$; see, e.g., the configuration in the left panel of \cref{fig:periodsandprobs}: Even as $T \rightarrow \infty$, the dash-dotted line will not be higher than at the interior maximum.}

\begin{figure}\centering
\includegraphics[width=.43\textwidth]{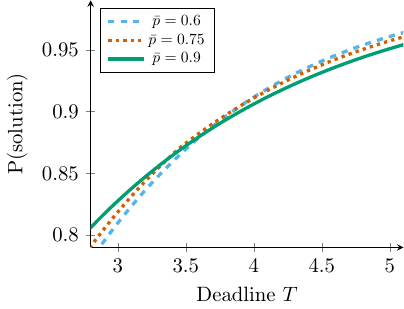}
\caption{\emph{Solution probability for different initial beliefs $\bar{p}$ against the initial time horizon.}\newline Parameters: $B=5,\bar{p}=3/4,c=1/2,\lambda=3/4,\mu=\nu=1$.}\label{fig:nonmonop}
\end{figure}

Although one may conjecture that higher initial beliefs $\bar{p}$ always increase the ex ante probability of obtaining a solution, this need not be the case. A higher initial belief may push the entrepreneur toward increasing her initial efforts to obtain a solution through the doing arm and to save on the cost of effort. Such a choice, in turn, may lower the probability of obtaining a solution in the given time frame. See \cref{fig:nonmonop} for an example.\footnote{We cannot derive meaningful conditions for when that nonmonotonicity occurs, but numerically it appears relatively robust.}

We want to stress that our findings in this section are in stark contrast to those from a classical infinite horizon two-armed bandit model in which both arms have time-independent payoffs, and one arm has a lower arrival rate. False starts would not arise in the canonical model. There, for any given sequence of actions, changing their order does not influence the overall success probability. Moreover, in the canonical model, the agent pulls the risky arm until the arms' instantaneous arrival rates are equal. However, the false-start notion builds on the idea that the agent conducts customer research too late. The canonical model unambiguously predicts nondecreasing probabilities of solving the task with a particular arm for increases in both the deadline and the initial belief. In contrast, our model highlights a significant economic incentive that is absent in a model that does not feature a time dependence of the thinking arm's payoff: the deadline regulates not only the overall success probability but also how success is achieved.\footnote{Indeed, it may also happen that an increase in the deadline reduces the probability of a solution through the doing arm\textemdash see, e.g., the left panel of \cref{fig:periodsandprobs}. Thus, an increase in the deadline\textemdash holding the initial belief about the doing arm fixed\textemdash increases the probability of a \emph{pivot}.}

\section{Discussion of Modeling Choices} \label{sec:discussion}
In this section, we discuss the motivation behind our modeling choices and their direct consequences on a more abstract level. We specifically want to emphasize the role of three model ingredients: (i) the model includes a deadline but excludes exponential discounting; (ii) the value of successful thinking diminishes and does so at a sufficiently increasing rate; and (iii) the value of successful thinking is independent of the belief about the doing arm.

\paragraph{Time Cost.} Our choice not to include standard exponential discounting is motivated by our focus on the changing time pressure. The implicit assumption in (infinite-horizon) exponential discounting models is that time pressure is constant at any point in time\textemdash e.g., because the risk of an exogenous termination of the game is constant. However, with a deadline in mind, this form of time pressure becomes less relevant\textemdash at least close to the deadline. Instead, the agent fears that she has insufficient time to finish her task before the game ends with certainty.

In our applications, deadlines are foreseeable dates on the time horizon, and time pressure increases as agents move closer to the deadline. Therefore, to ensure a transparent and tractable discussion of the effect of changing time pressure, we abstract from additional exponential discounting.

However, it should become clear from the analysis that including exponential discounting would not alter the economic effects of our model but would come at a substantive loss of tractability.\footnote{Indeed, it is straightforward yet cumbersome to adjust our key lemmata to include exponential discounting and to verify that \Cref{lem:ded} continues to hold. The same holds when deadlines arrive stochastically but become more and more likely as time progresses.} We focus on a world in which the agent has no incentive to shirk. As we see, e.g., in \cref{fig:SolProbCost}, the agent has an incentive to obtain results early, even absent exponential discounting: the agent incurs the cost of experimentation.

\paragraph{Diminishing Value of Progress.} We assume that the returns to thinking diminish at an increasing rate as the deadline approaches. This assumption captures the idea that successful thinking implies progress but not a solution. When progress arrives, the closer the deadline is, the less time remains to convert the progress made\textemdash the return shrinks. 

In light of our application, the simplest interpretation of this assumption is to think of progress as triggering a random process determining ex post payoffs. Thus, the greater the time remaining is, the more likely the agent can succeed in time despite a sequence of adverse shocks. That notion of progress differs from one in which a solution arrives deterministically with some delay. In the latter world, the value of progress is a positive constant until it drops to zero once the remaining time falls below a threshold.

In our model, \emph{any} decrease in the time remaining upon the arrival of progress reduces its value: there is less time to convert progress into a solution. Hence, the agent faces a crucial time tradeoff: delaying thinking reduces the expected time remaining when progress arrives and thus makes it less valuable.

\paragraph{Relative Concavity of the Value of Progress.} The primary assumption that leads to \Cref{lem:ded} is that the relative concavity of the value of progress is sufficiently high. This assumption implies that the evolution of the deadline effect dominates the other effects on the agent's preference as the time remaining increases. This assumption helps to focus on the main tradeoff between risk and time pressure. However, we could weaken the assumption without losing tractability. The main difference is that we may obtain two disjoint intervals in which the agent pulls the thinking arm and two disjoint intervals in which the agent pulls the doing arm. The underlying economic reason is that once the time pressure has become relatively weak, the thinking arm may have a payoff advantage over the doing arm.

\begin{example}\label{par:ex_double_think} Consider a variant of \cref{example_implications} in which $\nu<\bar{p}\lambda$; that is, the arrival of progress triggers a new bandit arm that has a relatively low arrival rate. It follows that the change in the deadline effect, $V''(\tau)$, will be dominated by the change in the payoff-on-arrival effect, $p_\tau \lambda V'(\tau)$, when $p_\tau$\textemdash and thus the weight on the payoff on the arrival effect\textemdash is relatively high. \end{example}

In particular, to further simplify this example for illustrative purposes, assume that $B_\nu=B+\frac{c}{\mu}$ and $c_\nu=0$. Once the belief has sufficiently deteriorated such that $p_\tau \lambda = \nu$, the continuation game is nested by our model. \Cref{lem:ded} applies. We show numerically that the optimal policy may have the following structure: the agent (i) starts thinking, (ii) switches to doing, (iii) switches back to thinking, and (iv) and returns to the doing arm for a Hail Mary. For an intuition of the resulting changes and additional details, see \Cref{sub:discussion_of_example_par:ex_double_think}.

\paragraph{Value of Progress Independent of the Belief about the Doing Arm.} We assume that the belief about the doing arm has no direct consequences on the payoff of the thinking arm. With time $\tau$ remaining, the agent attaches the same value to the thinking arm if she is almost certain that the state of the world is $\theta=0$ or if she is almost certain that it is $\theta=1$. While we show a set of examples in \cref{ssub:examples} that satisfy this assumption, it is nevertheless restrictive. We impose it, as it tremendously simplifies the analysis. 

However, this assumption is not crucial for our results. The following example\textemdash a version of \cref{par:ex_risky}\textemdash provides a setting in which the value of progress depends on the belief, yet our main result remains unchanged.

\begin{example}\label{par:ex_riskyreturn} As in \cref{par:ex_risky}, progress implies a new risky arm with intensity $\nu$, but suppose now that switching back to the initial doing arm is costless. Suppose further that $\nu=\lambda=1$, $\bar{p}\leq 2/3$, and $B>2c$. In this case, the entrepreneur will split her effort equally between the two arms once $p^\nu_\tau=p_\tau$.\footnote{Note that upon arrival of a new arm with $\overline{p}^\nu>p_\tau$, the entrepreneur will first only pull the new arm. Thus, its belief will decline while the other arm's belief remains constant until the beliefs on both arms are equal.} \end{example}

In \cref{sub:belief_V} we provide the formal analysis for a model in which the value of progress depends on the agent's belief about the doing arm that nests \cref{par:ex_riskyreturn}. In addition, we derive and discuss a condition that ensures that \cref{lem:ded} remains valid.

The only change in the generalization in \cref{sub:belief_V} is that we formulate the value of progress as a function of both the time remaining and the belief that the agent holds at the time of progress. The condition we derive is directly on this value of progress. Therefore, the model outlined in \cref{sub:belief_V} also captures other extensions to the baseline case. One such extension is a world in which the value of progress correlates directly with the underlying state $\theta$. Such correlation is relevant, for example, if the absence of success is informative about the \emph{problem's} difficulty rather than about the quality of the doing arm.

\section{Final Remarks} \label{sec:conclusion}

We address a time-constrained agent's dynamic decision when to \emph{do}\textemdash address a problem using an initial idea\textemdash and when to \emph{think} about an alternative, less risky method. We show that she should think neither too early nor too late. Overall, the agent never thinks twice. Once she stops thinking and moves to her initial idea, she abandons thinking for good.

We can use the specification of our model from \cref{sec:finalremarks} to predict the time an agent needs to find a solution as a function of her initial deadline. Depending on the expected speed of the various arms, perhaps counterintuitively, we may see that the average time an agent needs to find a solution decreases in her initial deadline. Such a prediction is\textemdash in some settings\textemdash directly testable. For example, suppose that the thinking and conversion process is sufficiently fast in a field of research. In this case, we would expect that researchers who undergo evaluation after an intermediate tenure clock would take on average \emph{longer} to fulfill the tenure requirements than those with longer or shorter tenure clocks. The mechanism is the following: those with shorter tenure clocks fail more often: the late bloomers drop out and the average time conditional on making tenure is shorter. Those with longer tenure clocks work first on converting their job market papers to influential publications and begin to branch only upon failure. Since some succeed, they need considerably less time to fulfill the requirements\textemdash they outpace their counterparts with intermediate clocks who branch from the beginning.\footnote{We want to emphasize that this result does not hold under arbitrary arrival rates. Therefore, to test these predictions, one must understand the arrival rates in place.}

In addition, our result from \cref{prop:doing_hurts_success} can serve as a cautionary tale regarding the efforts by politicians to meet specific pre-committed goals. For example, consider the Paris agreement. Many countries aim to target their goals through small-scale efforts such as incentivizing the use of electric cars. However, potentially, there is an action bias. Governments prefer implementing measures that are directly at hand\textemdash even at the risk of these being insufficient\textemdash instead of investing in ``transformational change'' that is warranted, e.g., by \citet{IPCC15}. Only if it becomes imminent that small-scale effort will not be sufficient will governments pivot to thinking about transformation. The late pivot is optimal from the government's perspective. Nevertheless, it occurs too late on a larger scale: the chances of meeting the goals in time decline. 

 \newpage
\appendix \setstretch{1}
\renewcommand{\partname}{}
\renewcommand{\thepart}{}
\part{Appendix}\setlength\bibitemsep{.8\itemsep}

\section{The Optimal Control Problem} \label{sec:notation}

\subsection{Notation} \label{sub:notation}

\renewcommand\tabularxcolumn[1]{m{#1}}
\renewcommand{\arraystretch}{1.1}
\begin{table}[h!]\begin{center}
\begin{tabularx}{.9\textwidth}{cX}
Variable & Description \\ \toprule
$\bar{p}$& Ex ante probability that $\theta=1$.\\ 
$B$& Benefit of arrival on the doing arm.\\
$V(\tau)$& Value of progress with time $\tau$ remaining. \\ 
$c$& Flow cost of effort. \\ \midrule
$\tau:=T-t$& Time until the deadline.\\ 
$a_\tau$ & Relative intensity of pulling the doing arm with $\tau$ periods remaining.\\ 
$A_\tau:=\int^{T}_\tau a_s ds$ & Total amount of time spent pulling the doing arm in the past. \emph{The state.}\\ \midrule
$p_\tau:=\frac{\bar{p}e^{-\lambda A_\tau}}{\bar{p} e^{-\lambda A_\tau} + 1-\bar{p}}$& Likelihood that the doing arm is suitable with $\tau$ periods remaining.\\
$U(\tau):=(B-c/\lambda)(1-e^{-\lambda \tau})$& Value of pulling an arm with known intensity $\lambda$ for time $\tau$.\\  
$\gamma_\tau:= -\frac{d H_\tau(a_\tau;A_\tau)}{d a_\tau}$& Relative preference for thinking with $\tau$ periods remaining.\\ 
$\eta_\tau$& Co-state with $\tau$ periods remaining.\\ \midrule
$Z^d(p,\tau):= pU(\tau)- (1-p)c\tau$ & Expected value from pulling the doing arm throughout with time $\tau$ remaining when the belief is $p$. \\
{ $\begin{aligned}
Z^t( \varepsilon; p,\tau):=&\left(\mu V(\tau) - c\right) \varepsilon\\&{+} (1{-} \mu \varepsilon) Z^d(p,\tau{-}\varepsilon)\\ &{+}o(\varepsilon)
\end{aligned}$} & Expected value from pulling the thinking arm for a small measure of time $\varepsilon$ and pulling the doing arm for the remaining time with time $\tau$ remaining when the belief is $p$.\\
\bottomrule
\end{tabularx}\end{center}
\end{table}

\subsection{Necessary Conditions for Optimality} \label{sub:hamilt}
Much of our arguments rely on the necessary conditions from Pontryagin's maximum principle. The existence of an optimal control follows from standard arguments.\footnote{See, for example, \citet{clarke2013functional}. The evolution of the state is continuous and bounded, the control is bounded, the agent's value is finite, the running cost is convex in the control, and the set of admissible effort paths is nonempty.} We verify the sufficiency of the necessary conditions in the proof of \cref{prop:characterisation} by showing that there is a unique solution to the necessary conditions. 

Using the notation from \Cref{sub:notation} the agent's objective is to maximize
\[\int_0^T e^{-\mu(T-\tau - A_{\tau})} (1-\bar{p} + \bar{p} e^{-\lambda A_{\tau}}) J(p_{\tau},\tau,a_{\tau}) d \tau\]

with $J(a_{\tau},A_\tau,\tau):= \mu (1-a_{\tau})  V(\tau) +\lambda  a_{\tau} p_{\tau} B - c$ the expected flow payoff at time $\tau:=T-t$.

Using $A_\tau$ as the state, the Hamiltonian at time $\tau$ is thus
  \begin{equation} \begin{split}
  H_\tau(a_\tau;A_\tau):=&e^{-\mu (T- \tau -A_\tau)} (1-\bar{p} + \bar{p}e^{-\lambda A_\tau}) J(p_\tau,\tau,a_\tau) + a_{\tau} \eta_\tau,\\
  = &e^{-\mu (T- \tau -A_\tau)} (1-\bar{p}) \left( (1-a_{\tau})\mu V(\tau) -c\right) \\ &+ e^{-\mu(T-\tau-\mu A_\tau)} \bar{p}e^{-\lambda A_\tau} \left((1-a_\tau)\mu V(\tau) + a_\tau \lambda B-c\right) + a_{\tau} \eta_\tau
  \end{split}\label{eq:derivativeeta}
\end{equation}
 where $\eta$ is the co-state. It has terminal condition $\eta_0=0$ and evolves according to\footnote{Note that because we take the derivative with respect to time remaining, the sign on the partial of the Hamiltonian is positive instead of negative.}
\begin{equation}
\begin{split}\frac{d \eta_\tau}{d \tau}:= 
&\frac{d H_\tau(a_\tau;A_\tau)}{d A_\tau}  \\
=& e^{-\mu(T- \tau-A_\tau)}\Bigg( \mu(1-\bar{p}) \Big((1-a_\tau) \mu V(\tau)-c\Big) \\
	& - (\lambda-\mu) e^{- \lambda A_\tau} \bar{p} \Big((1-a_\tau)\mu V(\tau) + a_\tau \lambda B-c\Big)\Bigg). 
\end{split}\label{eq:etaevol}
\end{equation}

Conditional on no arrival, the relative preference of thinking is determined by
\[ \begin{split}
\gamma_\tau &= -\frac{d H_\tau(a_\tau;A_\tau)}{d a_\tau}\\
&= e^{-\mu(T-\tau-A_\tau)} \left(\left(1-\bar{p}+\bar{p} e^{-\lambda A_\tau}\right)\mu V(\tau) - \bar{p} e^{-\lambda A_\tau}\lambda B \right)- \eta_\tau.
\end{split}\]

If $\gamma_\tau<0$, then the agent strictly prefers to do, $a_{\tau}=1$; if $\gamma_\tau>0$, then she strictly prefers to think, $a_\tau=0$.

\section{Key Lemmata} \label{sub:evolution_of_preference}

We state and prove four key lemmata. Combining these delivers most of our results up to and including parts of the characterization in \Cref{prop:characterisation}. The first, \cref{lem:dattheend}, states that the agent prefers to pull the doing arm close to the deadline independent of her belief about its quality. The second, \cref{lem:evolutiongamma}, states the evolution of the agent's relative preference between the arms over time. Because the Hamiltonian is linear in the agent's action, the evolution of the relative preference is independent of the agent's action. The third, \cref{lem:minimum}, states that the agent's relative preference has no interior minimum. This implies that the agent never \emph{returns} to the thinking arm if she stopped thinking without progress. The fourth, \cref{lem:thinkingbetterwithlonghorizons}, determines a condition such that the strategy ``doing throughout'' is dominated by the strategy ``think for a measure of time $dt$ before doing for the remainder of the time.''

\phantomsection
\addcontentsline{toc}{subsection}{Lemma \ref{lem:dattheend}}
\begin{lemma}\label{lem:dattheend}
	Suppose that the agent has not observed an arrival on either arm and holds belief $p\in(0,1)$ on the doing arm at the deadline. There is a remaining time $\hat{\tau}$ such that for the entire interval $\tau \in [0,\hat{\tau})$, the agent strictly prefers to pull the doing arm over pulling the thinking arm.
\end{lemma}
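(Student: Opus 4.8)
The plan is to compute the agent's relative preference for thinking, $\gamma_\tau$, at the deadline, show it is strictly negative there, and then propagate this strict inequality to a right-neighborhood of $\tau=0$ by continuity. Recall from \Cref{sub:hamilt} that the agent strictly prefers the doing arm precisely when
\[
\gamma_\tau = e^{-\mu(T-\tau-A_\tau)}\Big((1-\bar p + \bar p e^{-\lambda A_\tau})\mu V(\tau) - \bar p e^{-\lambda A_\tau}\lambda B\Big) - \eta_\tau < 0 .
\]
At $\tau=0$ the terminal condition $\eta_0 = 0$ and the normalization $V(0)=0$ kill every term except the one coming from the doing arm. Writing $A_0$ for the cumulative doing effort, the belief at the deadline is $p = \bar p e^{-\lambda A_0}/(\bar p e^{-\lambda A_0}+1-\bar p)$, i.e. $\bar p e^{-\lambda A_0} = p(1-\bar p)/(1-p) > 0$ because $p\in(0,1)$ and $\bar p\in(0,1)$. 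Hence
\[
\gamma_0 = -\,e^{-\mu(T-A_0)}\,\frac{p(1-\bar p)}{1-p}\,\lambda B \;<\;0 ,
\]
so the agent strictly prefers the doing arm at the deadline.

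It then remains to observe that $\tau\mapsto\gamma_\tau$ is continuous on a right-neighborhood of $0$: $A_\tau=\int_\tau^T a_s\,ds$ is Lipschitz, $p_\tau$ depends continuously on $A_\tau$, $V$ is thrice continuously differentiable, and $\eta_\tau$ is absolutely continuous, being the solution of the ODE \eqref{eq:etaevol} whose right-hand side is bounded on $[0,T]$. Therefore $\gamma_\tau<0$ persists on some interval $\tau\in[0,\hat\tau)$, which is the claim. As a cross-check using the notation of \Cref{sub:notation}, one has $\left.\partial_\varepsilon Z^t(\varepsilon;p,\tau)\right|_{\varepsilon=0} = \mu V(\tau) - c - \partial_\tau Z^d(p,\tau) - \mu Z^d(p,\tau)$, which at $\tau=0$ equals $-c - (p\lambda B - c) = -p\lambda B<0$; so even an infinitesimal spell of thinking before switching back to doing is strictly dominated near the deadline.

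I do not expect a substantive obstacle here — all the content sits in the pair of facts $\eta_0=0$ and $V(0)=0$, which say that at the instant of the deadline converting progress is worthless while a good doing arm still pays off at a strictly positive instantaneous rate. The two places that need a word of care are: (i) the nondegeneracy $p\in(0,1)$, which is exactly what makes $|\gamma_0|=e^{-\mu(T-A_0)}\,p(1-\bar p)\lambda B/(1-p)$ strictly positive rather than merely nonnegative (if the belief collapsed to $0$ the doing arm would carry no instantaneous value and the comparison would only be weak); and (ii) the continuity of the co-state trajectory, which relies on the boundedness of the right-hand side of \eqref{eq:etaevol}. Note finally that $|\gamma_0|\to 0$ as $p\to 0$, so $\hat\tau$ in general depends on $p$; the lemma asserts only its existence, which the argument above supplies.
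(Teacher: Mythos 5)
Your proposal is correct and follows essentially the same route as the paper's own proof: evaluate $\gamma_0$ using $\eta_0=0$ and $V(0)=0$ to get $\gamma_0=-e^{-\mu(T-A_0)}\bar p e^{-\lambda A_0}\lambda B<0$, then extend by continuity of $\gamma_\tau$. The extra remarks on the nondegeneracy of $p$ and the $Z^t$/$Z^d$ cross-check are consistent with, but not needed beyond, what the paper does.
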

\begin{proof}
 $\gamma_\tau$ is continuous in $\tau$, and $\eta_0=0$. The value of a success on the thinking arm is continuous, and when $\tau=0$ it is $V(0)=0$. Thus, the terminal value of $\gamma$ is  
	\begin{equation*}
		\gamma_0 = - e^{-\mu(T-A_\tau)} p e^{-\lambda A_\tau} \lambda B <0.
	\end{equation*}
By the continuity of $\gamma_\tau$, there exists for any strategy $(a_\tau)_{\tau=0}^T$ a remaining time $\hat{\tau}>0$  such that $\gamma_{\tau}<0$ for $\tau<\hat{\tau}$, which proves the claim.\footnote{For any initial belief $\bar{p}\in(0,1)$ and any strategy $(a_\tau)_{\tau=0}^T$, the agent's terminal belief is in $(0,1)$.}
\end{proof}

\phantomsection
\addcontentsline{toc}{subsection}{Lemma \ref{lem:evolutiongamma}}
\begin{lemma}\label{lem:evolutiongamma}
A marginal increase in the time to the deadline $\tau$ changes the agent's policy function $\gamma_\tau$ by
\begin{equation}
\begin{split}
	\frac{d \gamma_\tau}{d \tau}&=e^{-\mu(T- \tau - A_\tau)} (1-\bar{p}+\bar{p} e^{-\lambda A_\tau})\left(\mu V'(\tau)+ p_\tau \mu \lambda (V(\tau)- B)+(\mu- \lambda p_\tau)c \right). 
\end{split}
\end{equation}
The change is independent of the agent's decision $a_\tau$.
\end{lemma}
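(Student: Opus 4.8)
The plan is a direct computation: differentiate the closed form of $\gamma_\tau$ term by term, substitute the co-state evolution \eqref{eq:etaevol} for $d\eta_\tau/d\tau$, and check that every term containing $a_\tau$ cancels. First I would record the elementary derivatives. Since $A_\tau=\int_\tau^{T}a_s\,ds$, we have $dA_\tau/d\tau=-a_\tau$, hence $\frac{d}{d\tau}e^{-\mu(T-\tau-A_\tau)}=\mu(1-a_\tau)e^{-\mu(T-\tau-A_\tau)}$ and $\frac{d}{d\tau}e^{-\lambda A_\tau}=\lambda a_\tau e^{-\lambda A_\tau}$, while $\frac{d}{d\tau}V(\tau)=V'(\tau)$. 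Writing $\gamma_\tau=e^{-\mu(T-\tau-A_\tau)}\big((1-\bar p+\bar p e^{-\lambda A_\tau})\mu V(\tau)-\bar p e^{-\lambda A_\tau}\lambda B\big)-\eta_\tau$ and applying the product and chain rules to the first term, one gets an expression that is $e^{-\mu(T-\tau-A_\tau)}$ times a sum of terms in $V(\tau)$, $V'(\tau)$, and $B$ (some carrying the factor $a_\tau$), from which we subtract the right-hand side of \eqref{eq:etaevol}.

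The key step is the cancellation of $a_\tau$. It is cleanest to first rewrite the $\mu(1-\bar p)(\cdot)$ piece of \eqref{eq:etaevol} via $1-\bar p=(1-\bar p+\bar p e^{-\lambda A_\tau})-\bar p e^{-\lambda A_\tau}$, so that $d\eta_\tau/d\tau$ is expressed in the same basis $\{e^{-\mu(T-\tau-A_\tau)}(1-\bar p+\bar p e^{-\lambda A_\tau})V(\tau),\ e^{-\mu(T-\tau-A_\tau)}\bar p e^{-\lambda A_\tau}V(\tau),\ \ldots\}$ as the derivative of the first term of $\gamma_\tau$. Then the terms quadratic in $\mu$ carrying the factor $(1-\bar p+\bar p e^{-\lambda A_\tau})V(\tau)$ cancel identically; the remaining $\bar p e^{-\lambda A_\tau}V(\tau)$-terms collapse, using $a_\tau+(1-a_\tau)=1$, to the single $a_\tau$-free term $\mu\lambda\,\bar p e^{-\lambda A_\tau}V(\tau)$; and the terms in $B$ collapse because their coefficient of $\lambda\,\bar p e^{-\lambda A_\tau}B$ is $-\mu(1-a_\tau)-\lambda a_\tau+(\lambda-\mu)a_\tau=-\mu$, again $a_\tau$-free. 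This is precisely the standard bandit feature flagged after \eqref{eq:gamme_evol}: the action's instantaneous effect on the flow payoff is exactly offset by its effect on the co-state. What survives is $e^{-\mu(T-\tau-A_\tau)}\big((1-\bar p+\bar p e^{-\lambda A_\tau})\mu V'(\tau)+\mu\lambda\,\bar p e^{-\lambda A_\tau}(V(\tau)-B)+c\,(\mu(1-\bar p+\bar p e^{-\lambda A_\tau})-\lambda\,\bar p e^{-\lambda A_\tau})\big)$.

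Finally I would factor out $1-\bar p+\bar p e^{-\lambda A_\tau}$ and use the Bayes expression $p_\tau=\bar p e^{-\lambda A_\tau}/(1-\bar p+\bar p e^{-\lambda A_\tau})$, i.e. $\bar p e^{-\lambda A_\tau}=p_\tau(1-\bar p+\bar p e^{-\lambda A_\tau})$, which turns the bracket into $\mu V'(\tau)+p_\tau\mu\lambda(V(\tau)-B)+(\mu-\lambda p_\tau)c$ and yields exactly the claimed identity; the fact that no $a_\tau$ appears anywhere is then the second assertion of the lemma. The only real difficulty is bookkeeping — there are many terms and several sign conventions (note in particular that $d\eta_\tau/d\tau$ in \eqref{eq:etaevol} is the derivative in \emph{remaining} time, so it enters with the indicated sign) — and the cancellation of $a_\tau$ only becomes transparent after the substitution $1-\bar p=(1-\bar p+\bar p e^{-\lambda A_\tau})-\bar p e^{-\lambda A_\tau}$ lines up the two groups of terms in a common basis.
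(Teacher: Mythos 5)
Your computation is correct — I have checked the term-by-term cancellation and it goes through exactly as you describe (the $PV$-terms with $P:=1-\bar p+\bar p e^{-\lambda A_\tau}$ cancel, the $\bar p e^{-\lambda A_\tau}V$-terms collapse to $\mu\lambda\,\bar p e^{-\lambda A_\tau}V(\tau)$, the $B$-terms to $-\mu\lambda\,\bar p e^{-\lambda A_\tau}B$, and the $c$-terms to $(\mu P-\lambda\bar p e^{-\lambda A_\tau})c$). The paper reaches the same identity by a slightly different, more structural route: instead of expanding everything and cancelling by hand, it exploits the fact that the Hamiltonian is affine in $a_\tau$ to establish the general identity $\frac{d\gamma}{d\tau}=-\frac{\partial}{\partial\tau}\frac{dJ}{da_\tau}-\left.\frac{d\eta}{d\tau}\right|_{a_\tau=0}$, so that the co-state evolution only ever needs to be evaluated at $a_\tau=0$ and the $a_\tau$-independence is built in from the start rather than verified ex post. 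Your version is more elementary and self-contained but requires careful bookkeeping; the paper's version is shorter and makes the economic point (the action has no first-order effect on the evolution of the relative preference because the Hamiltonian is linear in the control) transparent before any algebra is done. Both are valid proofs of the lemma.
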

\begin{proof}
First, recall that $H_{\tau}(a_{\tau};A_\tau)$ is affine in $a_\tau$, and note that for a function affine in $x$, $f(x;\theta)=t(\theta)+m(\theta)x$, it holds that 
\[\frac{d^2 f(x;\theta)}{d x d \theta} x=m'(\theta) x= \frac{d f}{d \theta} -\frac{d f}{d \theta}|_{x=0}.\]
Second, recall that $\frac{d \eta_\tau}{d \tau} = \frac{d H_\tau(a_\tau;A_\tau)}{d A_\tau}$ and that $-\frac{d A(\tau)}{d \tau}=a_\tau$, which yields (using $\gamma_\tau=-\frac{d H}{d a_\tau}$)

\begin{equation}
\begin{split}
\frac{d \gamma}{d \tau} &= -\frac{d^2 H_\tau(a_\tau;A_\tau)}{da_\tau d\tau} = -\frac{d^2 J(a_\tau;A_\tau,\tau)}{da_\tau d\tau}-\frac{d \eta}{d\tau}= -\frac{\partial}{\partial \tau}\frac{d J(a_\tau;A_\tau,\tau)}{d a_\tau} - \left.\frac{d \eta}{d \tau}\right|_{a_\tau=0}
\end{split}
\end{equation}
where we used that the law of motion of the state is independent of the state itself, $\frac{d^2J}{d a_\tau d \tau}= \frac{\partial}{\partial \tau}\frac{d J}{d a_\tau}-\frac{ d^2 H}{dA_\tau d a_\tau}a_\tau$ and $\frac{d \eta}{d\tau}-\frac{ d^2 H}{d A_\tau d a_\tau}a_\tau= \frac{d\eta}{d\tau}|_{a_\tau=0}$ based on the Hamiltonian being affine in $a_\tau$.

Third, from \eqref{eq:etaevol} we obtain
\begin{equation}
\begin{split}
\left. \frac{d \eta}{d \tau}\right|_{a_\tau=0} = e^{-\mu(T-\tau-A_\tau)}(1-\bar{p} +\bar{p}e^{-\lambda A_\tau}) \left(\mu V(\tau) -c\right)\left(\mu - \lambda p_\tau \right).
\end{split}
\end{equation}
Moreover, 
\[-\frac{\partial}{\partial \tau} \frac{ d J (a_\tau;A_\tau,\tau) }{d a_\tau} =  e^{-\mu(T-\tau-A_{\tau})}(1-\bar{p}+\bar{p} e^{-\lambda A_\tau})\left(\mu V'(\tau) +\mu\left(\mu V(\tau) - p_{\tau} \lambda B\right)\right)  \]
which implies 
\begin{equation}
\begin{split}
\frac{d \gamma}{d \tau} =&\phantom{=i}e^{-\mu(T-\tau-A_{\tau})}(1-\bar{p}+\bar{p} e^{-\lambda A_\tau})\left(\mu V'(\tau) +\mu\left(\mu V(\tau) - p_{\tau} \lambda B\right)\right) \\
&- e^{-\mu(T-\tau-A_\tau)}(1-\bar{p} +\bar{p}e^{-\lambda A_\tau}) \left(\mu V(\tau) -c\right)\left(\mu - \lambda p_\tau \right)\\
= &\phantom{=i}e^{-\mu(T-\tau-A_{\tau})}(1-\bar{p}+\bar{p} e^{-\lambda A_\tau})  \Bigg(\underbrace{\mu V'(\tau) + \mu p_\tau \lambda \left(V(\tau) - B\right) + \left(\mu - \lambda p_\tau\right)c}_{=:dy_\tau/d \tau}\Bigg).
\end{split}	\label{eq:delgammadeltau}
\end{equation}
 
Note that the sign of $d \gamma_\tau/d \tau$ is determined by the sign of $d {y}_\tau/d \tau$ only.
\end{proof}
The last line of \eqref{eq:delgammadeltau} implies that the agent's action has no first-order effect on the evolution of the switching function $y_\tau$. Any direct effect of the action on the instantaneous payoffs is counteracted by an effect on the continuation value. It is important, however, to keep in mind that the action has a second-order effect on $y_\tau$ through its effect on the evolution of the belief.

\phantomsection
\addcontentsline{toc}{subsection}{Lemma \ref{lem:minimum}}

\begin{lemma}\label{lem:minimum}
	Any minimum of $\gamma_\tau$ is either at $\tau=T$ or at $\tau=0$. Moreover, $y_{\tau}$ is strictly concave whenever $a_\tau=0$.
\end{lemma}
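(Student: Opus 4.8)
The plan is to reduce both assertions to the monotonicity of $dy_\tau/d\tau$. By \Cref{lem:evolutiongamma} the sign of $d\gamma_\tau/d\tau$ equals the sign of
\[
\dot y_\tau:=\frac{dy_\tau}{d\tau}=\mu V'(\tau)+\mu\lambda p_\tau\big(V(\tau)-B\big)+(\mu-\lambda p_\tau)c ,
\]
because the prefactor $e^{-\mu(T-\tau-A_\tau)}\big(1-\bar p+\bar p e^{-\lambda A_\tau}\big)$ is strictly positive (recall $\bar p<1$). So it suffices to show $\dot y_\tau$ is strictly decreasing on $[0,T]$: then $\dot y_\tau$ changes sign at most once, and only from positive to negative, hence so does $d\gamma_\tau/d\tau$, so $\gamma_\tau$ is first nondecreasing and then nonincreasing on $[0,T]$; its minimum over $[0,T]$ is therefore attained only at $\tau=0$ or $\tau=T$, with no interior local minimum. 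The strict-concavity claim will come out of the same computation in the sub-case $a_\tau=0$.

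First I would differentiate $\dot y_\tau$ using the belief law of motion $dp_\tau/d\tau=\lambda a_\tau p_\tau(1-p_\tau)$ (the state moves only while the doing arm is pulled), obtaining, wherever it is defined,
\[
\frac{d\dot y_\tau}{d\tau}=\mu\big(V''(\tau)+\lambda p_\tau V'(\tau)\big)+\lambda^2 a_\tau\,p_\tau(1-p_\tau)\big(\mu V(\tau)-\mu B-c\big).
\]
For the first bracket: the belief never exceeds its prior, $p_\tau\le\bar p$, and $V'>0$, so the relative-concavity assumption $-V''(\tau)/V'(\tau)\ge\bar p\lambda$ gives $V''(\tau)+\lambda p_\tau V'(\tau)\le V''(\tau)+\lambda\bar p V'(\tau)\le 0$, with strict inequality whenever $p_\tau<\bar p$ (in particular after the agent has done for a positive time); the knife-edge case is excluded by the strict part of the concavity assumption. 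For the second term: $a_\tau\ge 0$ and $p_\tau(1-p_\tau)\ge 0$, while $\mu V(\tau)-\mu B-c\le\mu V(\infty)-\mu B-c\le 0$ since $V$ is increasing and \Cref{ass:qincre}(i) gives $V(\infty)\le B+c/\mu$. Hence $d\dot y_\tau/d\tau<0$ everywhere it is defined; and when $a_\tau=0$ the second term vanishes, leaving $d^2y_\tau/d\tau^2=\mu\big(V''(\tau)+\lambda p_\tau V'(\tau)\big)<0$, which is the strict concavity of $y_\tau$ on any interval where the agent thinks.

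To conclude, I would observe that $\dot y_\tau$ is continuous — indeed Lipschitz — in $\tau$, since it depends on $\tau$ only through $V'(\tau),V(\tau)$ (which are $C^{2}$) and through the Lipschitz, monotone belief path $p_\tau$; thus $\dot y_\tau$ is absolutely continuous with a.e.\ derivative $d\dot y_\tau/d\tau<0$, and an absolutely continuous function with a.e.-negative derivative is strictly decreasing. Hence $\dot y_\tau$ is strictly decreasing on $[0,T]$, which by \Cref{lem:evolutiongamma} makes $d\gamma_\tau/d\tau$ positive, then zero at most at one point, then negative; so $\gamma_\tau$ is single-peaked on $[0,T]$ and its minima lie only at the endpoints, as claimed.

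The main obstacle is upgrading ``strictly decreasing on each piece'' to ``strictly decreasing on all of $[0,T]$'': one must check that $\dot y_\tau$ does not jump upward at the switching times between doing and thinking (it does not, being a continuous function of $(\tau,p_\tau)$ with $p_\tau$ continuous) and that there is no singular arc on which the agent is perpetually indifferent — but such an arc would force $\dot y_\tau\equiv 0$ on an interval, contradicting $d\dot y_\tau/d\tau<0$. The other delicate point is the bound $\mu V(\tau)\le\mu B+c$ along the whole trajectory, for which I would invoke \Cref{ass:qincre}(i) (and note that, along any doing phase following a thinking phase, $V(\tau)<B$ in any case, which is exactly the fact the informal discussion after \cref{lem:ded} relies on).
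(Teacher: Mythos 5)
Your computation of $d\dot y_\tau/d\tau$ is exactly the paper's (the paper writes it as $\frac{d^2 y_\tau}{d\tau\,d\tau}= \frac{d p_\tau}{d \tau}\left( \mu \lambda (V(\tau) - B) - \lambda c\right) + \mu \left(V''(\tau) + \lambda p_\tau V'(\tau)\right)$, which agrees with your expression after substituting $dp_\tau/d\tau=\lambda a_\tau p_\tau(1-p_\tau)$), and your treatment of the thinking phase, where $dp_\tau/d\tau=0$ kills the problematic term and relative concavity delivers concavity of $y_\tau$, is the paper's argument verbatim. The divergence is in how the doing phase is handled, and it is where your proof has a genuine gap: you try to sign $d\dot y_\tau/d\tau$ \emph{globally}, which forces you to sign $\mu(V(\tau)-B)-c$ directly, and for that you invoke \cref{ass:qincre}(i), $V(\infty)\le B+c/\mu$. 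But \cref{lem:minimum} is one of the key lemmata feeding into \cref{lem:ded}, which the paper states and proves \emph{without} \cref{ass:qincre}; that assumption is introduced only for the uniqueness/characterization result in \cref{prop:characterisation}. Under the baseline assumptions nothing prevents $V(\infty)>B+c/\mu$, in which case your second term is positive during a doing phase with $\tau$ large and the claimed global monotonicity of $\dot y_\tau$ fails. So as written your argument proves a weaker lemma than the one stated.

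The paper avoids this by arguing \emph{locally}: an interior minimum of $\gamma_\tau$ requires a critical point, i.e.\ $\dot y_\tau=0$, and at such a point one can solve the first-order condition for $\mu\lambda(V(\tau)-B)=-\bigl(\mu V'(\tau)+(\mu-\lambda p_\tau)c\bigr)/p_\tau$ and substitute it into $d\dot y_\tau/d\tau$, yielding
\[
\left.\frac{d^2 y_\tau}{d\tau\, d\tau}\right|_{\dot y_\tau=0}= - \frac{d p_\tau}{d \tau}\, \frac{\mu}{p_\tau}\bigl( V'(\tau)+c\bigr) + \mu \bigl(V''(\tau) + \lambda p_\tau V'(\tau)\bigr)\leq 0,
\]
which is signed using only $dp_\tau/d\tau\ge 0$, $V'>0$, and $-V''/V'\ge \bar p\lambda\ge p_\tau\lambda$ --- no bound on $V$ relative to $B$ is needed. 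This rules out $\dot y_\tau=0$ together with $d\dot y_\tau/d\tau>0$, hence any interior local minimum, which is all the lemma requires. You can rescue your proof either by adopting this substitution step in place of the global bound, or by explicitly restricting your version of the lemma to hold under \cref{ass:qincre}(i) (at the cost of weakening \cref{lem:ded}). A smaller point: your appeal to ``the strict part of the concavity assumption'' to handle the knife-edge case $p_\tau=\bar p$, $-V''/V'=\bar p\lambda$ refers to an assumption the paper does not make; the stated assumption is the weak inequality $-V''(\cdot)/V'(\cdot)\ge\bar p\lambda$.
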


\begin{proof}
Since $\eta_\tau$ is continuously differentiable in $\tau$ and defined for any $\tau \in \mathbb{R}_+$, so is $y_\tau$. To prove \cref{lem:minimum}, we use that any interior (local) minimum has to be a critical point. From \cref{eq:delgammadeltau} we know that

\[\frac{d \gamma}{d \tau}=\underbrace{e^{-\mu(T- \tau - A_\tau)} (1-\bar{p}+\bar{p} e^{-\lambda A_\tau})}_{f(\tau)}\underbrace{\left(\mu V'(\tau)+ p_\tau \mu \lambda (V(\tau)- B)+(\mu- \lambda p_\tau)c \right)}_{\frac{d y_\tau}{d\tau}=:g(\tau)}.\]

If a critical point constitutes a local minimum, then it satisfies
\[ f(\tau) g(\tau)=0\qquad \text{ and }\qquad f'(\tau) g(\tau) + f(\tau) g'(\tau)>0\]
because $f(\tau)>0$ for all $\tau<\infty$, which implies $g(\tau)=0$, and any local minimum also requires $g'(\tau)=\frac{d^2 y_\tau}{d\tau d\tau}>0$. 
To show that such a local minimum cannot exist, we show that $g(\tau)=0$ and $g'(\tau)>0$ cannot be satisfied at the same time. It follows from $g(\tau)=0$ that
\[\mu \lambda (V(\tau)-B) = - \frac{\mu V'(\tau)+(\mu-\lambda p_\tau)c}{p_\tau}. \]
Differentiating $g(\tau)$ yields

\[\frac{d^2 y_\tau}{d \tau d \tau}= \frac{d p_\tau}{d \tau} \left( \mu \lambda (V(\tau) - B) - \lambda c\right) + \mu \left(V''(\tau) + \lambda p_\tau V'(\tau)\right) \]
and plugging in for $g(\tau)=0$ yields
\[\frac{d^2 y_\tau}{d \tau d \tau}|_{\frac{d y_\tau}{d\tau}=0}= - \frac{d p_\tau}{d \tau} \frac{\mu}{p_\tau}\left( V'(\tau)+c\right) + \mu \left(V''(\tau) + \lambda p_\tau V'(\tau)\right)\leq 0 \]
where the inequality follows as $\frac{d p_\tau}{d \tau}\geq 0, V'(\tau)>0$ and $-\frac{V''(\tau)}{V'(\tau)}\geq p_\tau \lambda$ as $p_\tau\leq p^0$.

The concavity of $y_\tau$ while the agent thinks follows straightforwardly by observing that $\frac{d p_\tau}{d \tau}=0$ in this case.
 
\end{proof}

\phantomsection
\addcontentsline{toc}{subsection}{Lemma \ref{lem:thinkingbetterwithlonghorizons}}

To state \cref{lem:thinkingbetterwithlonghorizons}, we define:
\[Z^d(p,\tau):= p\int_0^\tau e^{-\lambda t} (\lambda B-c) dt - (1-p) \int_0^\tau c dt=p U(\tau)-(1-p)c \tau,\]
and
\[ \begin{split}
Z^t(\varepsilon;p,\tau)&:= \int_0^\varepsilon e^{-\mu t}  (\mu V(\tau-t) - c) dt + e^{-\mu\varepsilon}Z^d(p,\tau- \varepsilon) \\&= \left(\mu V(\tau) - c\right) \varepsilon + (1{-} \mu \varepsilon) Z^d(p,\tau{-}\varepsilon){+}o(\varepsilon)
\end{split}\]
where the expression follows from a Taylor expansion around $0$.

The first, $Z^d$, describes the value absent a success of the strategy ``pull the doing arm from now until $\tau=0$'', given the belief $p$ and the time remaining $\tau$. The second, $Z^t$, describes the value absent a success of the strategy ``pull the thinking arm for a small measure of time $\varepsilon>0$, then pull the doing arm until $\tau=0$'', given the belief $p$ and the time remaining $\tau$.

\begin{lemma}\label{lem:thinkingbetterwithlonghorizons}
\[\lim_{\varepsilon \rightarrow 0} Z^t(\varepsilon;p,\tau)-Z^d(p,\tau)>0\] if and only if
\[\hat{q}(\tau):= \frac{\mu \left(V(\tau)+ c \tau\right)}{\mu\left(B + c \tau \right) + (\lambda-\mu)\left(B- U(\tau)\right)}>p.\]
Moreover for any $q \in (0,1)$ there is a $\tau$ such that $\hat{q}(\tau)= q$.
\end{lemma}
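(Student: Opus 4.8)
The first claim is a first-order comparison in $\varepsilon$ (the displayed limit should be read this way: since $Z^t(\varepsilon;p,\tau)\to Z^d(p,\tau)$ as $\varepsilon\to 0$, the assertion is that the difference is positive for all small $\varepsilon>0$, i.e.\ that its right derivative at $\varepsilon=0$ is positive). The plan is to Taylor-expand. Starting from $Z^t(\varepsilon;p,\tau)=(\mu V(\tau)-c)\varepsilon+(1-\mu\varepsilon)Z^d(p,\tau-\varepsilon)+o(\varepsilon)$ and using $Z^d(p,\tau-\varepsilon)=Z^d(p,\tau)-\varepsilon\,\frac{d}{d\tau}Z^d(p,\tau)+o(\varepsilon)$, the $\varepsilon^0$-term cancels $Z^d(p,\tau)$ and the sign in question becomes the sign of
\[\mu V(\tau)-c-\tfrac{d}{d\tau}Z^d(p,\tau)-\mu Z^d(p,\tau).\]
Substituting $Z^d(p,\tau)=pU(\tau)-(1-p)c\tau$ and $U'(\tau)=(\lambda B-c)e^{-\lambda\tau}$, collecting the $p$-free and $p$-linear terms gives $\mu(V(\tau)+c\tau)-p\bigl(c+(\lambda B-c)e^{-\lambda\tau}+\mu U(\tau)+\mu c\tau\bigr)$. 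The bracket is a sum of nonnegative terms plus $c>0$, hence strictly positive, so the inequality is equivalent to $p<\dfrac{\mu(V(\tau)+c\tau)}{c+(\lambda B-c)e^{-\lambda\tau}+\mu U(\tau)+\mu c\tau}$.

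It then remains to identify this threshold with $\hat q(\tau)$, which is the elementary identity $\lambda U(\tau)=(\lambda B-c)(1-e^{-\lambda\tau})$ rewritten as $c+(\lambda B-c)e^{-\lambda\tau}=\lambda B-\lambda U(\tau)$, and therefore $c+(\lambda B-c)e^{-\lambda\tau}+\mu U(\tau)+\mu c\tau=\mu(B+c\tau)+(\lambda-\mu)(B-U(\tau))$. Substituting this denominator yields exactly $\hat q(\tau)$, proving the equivalence.

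For the second claim I would apply the intermediate value theorem to $\tau\mapsto\hat q(\tau)$ on $[0,\infty)$. Continuity needs the denominator bounded away from $0$: since $B$ is taken large enough that $\lambda B>c$, we have $0\le U(\tau)\le B-c/\lambda$, hence $B-U(\tau)\in[c/\lambda,B]$, so $\mu(B+c\tau)+(\lambda-\mu)(B-U(\tau))>0$ for all $\tau\ge 0$ — it is a sum of nonnegative terms with a positive one when $\lambda\ge\mu$, and when $\mu>\lambda$ the negative part satisfies $(\mu-\lambda)(B-U(\tau))\le(\mu-\lambda)B<\mu B$. At the endpoints, $\hat q(0)=0$ because $V(0)=U(0)=0$, and $\hat q(\tau)\to 1$ as $\tau\to\infty$ because numerator and denominator are each $\mu c\tau$ plus a bounded quantity (here $U(\tau)\to B-c/\lambda$ and $V(\tau)\to V(\infty)<\infty$, the finiteness following from $V'(\tau)\le V'(0)e^{-\bar{p}\lambda\tau}$ under the concavity assumption). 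Thus $\hat q$ is continuous, runs from $0$ to $1$, and attains every $q\in(0,1)$ at some $\tau>0$.

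None of the steps is delicate. The only points requiring care are (i) interpreting the displayed limit as a first-order comparison in $\varepsilon$ and (ii) the bookkeeping in the identity $c+(\lambda B-c)e^{-\lambda\tau}+\mu U(\tau)+\mu c\tau=\mu(B+c\tau)+(\lambda-\mu)(B-U(\tau))$; I do not expect any genuine obstacle beyond keeping these computations straight.
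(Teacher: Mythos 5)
Your proposal is correct and follows essentially the same route as the paper: Taylor-expand $Z^t$, cancel the zeroth-order term, reduce the sign condition to $\mu V(\tau)-c-\mu Z^d-\partial_\tau Z^d>0$, isolate $p$ using the positivity of the coefficient, and rewrite the denominator via $c+(\lambda B-c)e^{-\lambda\tau}=\lambda B-\lambda U(\tau)$; the surjectivity claim is likewise handled by continuity with $\hat q(0)=0$ and $\hat q(\tau)\to 1$. The only cosmetic difference is that you obtain the limit at infinity from the asymptotics $\mu c\tau+O(1)$ in numerator and denominator, whereas the paper invokes L'H\^{o}pital's rule; both also implicitly read the displayed limit as a first-order (derivative) comparison, as you correctly note.
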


\begin{proof}

\begin{align*}
&&\lim_{\varepsilon \rightarrow 0} Z^t(\varepsilon;p,\tau) - Z^d(p,\tau)&>0\\
\Leftrightarrow&&\lim_{\varepsilon \rightarrow 0} \left(\mu V(\tau) - c\right) \varepsilon + (1{-} \mu \varepsilon) Z^d\left((p,\tau{-}\varepsilon) - Z^d(p,\tau)\right)+ \mu \varepsilon Z^d(p,\tau) &>0 \\
\Leftrightarrow&& \mu V(\tau) - c  -\mu Z^d(p,\tau) -  \lim_{\varepsilon \rightarrow 0} \frac{(Z^d(p,\tau) - Z^d(p,\tau-\varepsilon))}{\varepsilon} -   &>0 \\
\Leftrightarrow&& \mu V(\tau) - c  -\mu Z^d(p,\tau) -  \frac{ \partial Z^d (p,\tau)}{\partial \tau}   &>0 \\
\Leftrightarrow&& \mu V(\tau) - pU'(\tau) -pc  - \mu pU(\tau) + (1-p)\mu c \tau  &>0
\end{align*}
which is equivalent to
\begin{equation}\label{eq:q}
\hat{q}(\tau):=\frac{\mu \left(V(\tau)+ c \tau\right)}{\mu (U(\tau) + c \tau) + U'(\tau) +c}=\frac{\mu \left(V(\tau)+ c \tau\right)}{\mu(B + c \tau) + (\lambda-\mu)(B-U(\tau))}> p.
\end{equation}

The last claim follows because $\hat{q}(0)=0$, $\lim_{\tau \rightarrow \infty} \hat{q}(\tau)=1$ and $\hat{q}$ is continuous.

The limit $\tau \rightarrow \infty$ follows using L'H\^{o}pital's rule,

\[\lim_{\tau \rightarrow \infty} \hat{q}(\tau)= \lim_{\tau \rightarrow \infty} \frac{\mu(V'(\tau) +c)}{\mu(U'(\tau) +c)+U''(\tau)}=1\]
where the last equality follows from $\lim_{\tau \rightarrow \infty}U''(\tau)=\lim_{\tau \rightarrow \infty} U'(\tau)=0$ and $\lim_{\tau \rightarrow \infty} V'(\tau)=0$ because on an unbounded support any strictly concave, increasing, yet bounded function has to have a slope converging to zero if the limit of its derivative exists, which holds by assumption.
\end{proof}
\section{Proofs of Statements in the Main Text} \label{sec:proofs}

	\subsection{Proof of Proposition \ref{prop:Tinf}}
	\label{sub:proof_of_proposition_prop:Tinf}
	\begin{proof}
	When $T=\infty$, the value of an arrival on the thinking arm, $V(\infty)$, is constant over time. We can apply the standard dynamic programming approach for exponential bandits. The value function $u(p)$ given belief $p$ prior to termination satisfies\footnote{We obtain the expression using Taylor approximations for the success probabilities when $dt$ is small and using the ODE $dp/dt= p(1-p)a_\tau \lambda$. We do not require discounting as a single breakthrough on any arm ends the problem generating a finite payoff. Hence, the value is bounded even without discounting.}
	\begin{align*}
		u(p)=\max_{a\in [0,1]} (a p \lambda B + (1{-}a)\mu V(\infty) {-} c) dt + 1{-} (a p \lambda + (1{-}a)\mu)dt u(p){-}p(1{-}p)\lambda a dt u'(p).
	\end{align*}

	Letting $dt$ go to zero, dropping second-order terms and rearranging, we obtain the Bellman equation
	\begin{align*}
		0=\max_{a\in[0,1]} (a p \lambda B + (1-a)\mu V(\infty)) - c) - (a p \lambda + (1-a)\mu)u(p) - p(1-p)\lambda a  u'(p)
	\end{align*}
	where the maximand is linear in $a$. Whenever pulling the thinking arm is optimal at some time $t$, the agent chooses $a_t=0$, and it will remain optimal to pull $a_{t'}=0$ for all $t'>t$ as the belief remains constant. Thus, $u(p)=\int_0^\infty e^{-\mu t}(\mu V(\infty)-c) dt=V(\infty)-\frac{c}{\mu}$ whenever $a=0$. Whenever, pulling the doing arm is optimal, we can rewrite the Bellman equation as
	\begin{align*}
		0= p \lambda B - c - p \lambda u(p) - p (1-p)\lambda u'(p)
	\end{align*}
	and solving this differential equation yields 
	\begin{align*}
		u(p)= B - \frac{c}{\lambda} - (1-p)\left(\mathbb{C} + c \ln\left(\frac{p}{1-p}\right)\right)
	\end{align*}
	where $\mathbb{C}$ is a constant of integration. Using the value matching condition that the agent is indifferent between thinking and doing at $p=\hat{p}$, $u(\hat{p})=V(\infty)-\frac{c}{\mu}$, and the smooth pasting condition, $u'(\hat{p})=0$, we can obtain the constant of integration as well as $\hat{p}$, which are
	\begin{align*}
		\hat{p}&=\frac{\mu}{\lambda}\frac{c}{c+\mu(B-V(\infty))}\\
		\mathbb{C}&=B-\frac{c}{\lambda}\ln\left(\frac{\hat{p}}{1-\hat{p}}\right) - \left(V(\infty)-\frac{c}{\mu}\right).
	\end{align*}
	\end{proof}

	\subsection{Proof of Proposition \ref{prop:czero}}
	\label{sub:proof_of_proposition_prop:czero}

	\begin{proof}
		We make use of $d y_\tau/d \tau$ as defined in the proof of \cref{lem:evolutiongamma}. By \cref{lem:minimum}, the policy function $\gamma_\tau$ is twice continuously differentiable and has no interior minimum in $\tau$. This implies that if $d y_{\tau}/d \tau\geq 0$ for some $\tau$, then $d y_{\tau}/d \tau\geq 0$ for all $\tau' \in [0,\tau]$.
		Because condition \eqref{eq:benchmark_condition} holds, we obtain that for all $\tau \geq 0$
			\[\frac{d y_{\tau}}{d \tau}=\mu \left(V'(\tau) - \bar{p}\lambda (B-V(\tau))\right)\geq 0\]
		as $V(\tau)\leq B$ and $V'(\tau)\geq 0$. Thus, $y_{\tau}$ is increasing in the time remaining $\tau$ throughout. The agent pulls the doing arm close to the deadline by \cref{lem:dattheend}. What remains to be shown is if and when the agent switches from thinking to doing. Invoking \cref{lem:thinkingbetterwithlonghorizons} assuming $c=0$ yields that the agent never thinks (i.e., $\gamma_\tau <0$) if an only if $\forall \tau\leq T$

		\[\hat{q}(\tau|c=0):=\frac{\mu V(\tau)}{U'(\tau)+ \mu U(\tau)}<\bar{p}.\]
		Otherwise, she starts thinking and switches to doing with time $\tau_3$ remaining, where $\tau_3$ is the smallest solution to $\hat{q}(\tau_3|c=0)=\bar{p}$.
\end{proof}

\subsection{Proof of Proposition \ref{lem:ded}} \label{sub:proof_of_proposition_lem:ded}
\begin{proof}
The result follows from \cref{lem:dattheend,lem:minimum,lem:evolutiongamma}.

By \cref{lem:dattheend}, the agent pulls the doing arm shortly before the deadline whenever the game has not yet terminated. \Cref{lem:evolutiongamma} shows that $y_\tau$ is strictly concave at any critical point. Thus, $y_\tau\neq 0$ almost everywhere; i.e., the agent is generically not indifferent. Finally, by \cref{lem:minimum}, $y_\tau$ has no interior minimum: once the agent abandons the thinking arm, she does not return to it. Only the three strategies in \cref{lem:ded} remain possible.
\end{proof}

\subsection{Proof of Proposition \ref{prop:characterisation}} \label{sub:proof_of_proposition_prop:characterisation}
\begin{proof}
First, we show that any solution provided by the algorithm satisfies the necessary condition of the agent's optimal control problem. Second, we show that the algorithm provides a solution. Third, we show that there is a unique solution to the necessary conditions of the optimal control problem under \cref{ass:qincre}.

\paragraph{Step 1. The algorithm's solution is a candidate.} Here, we show that any solution to the algorithm satisfies the necessary conditions of the optimal control problem. We consider the different termination cases of the algorithm.

	\subparagraph{1a. The algorithm stops in \cref{step:onlyd}.} In this case, the algorithm's solution implies that the agent pulls the doing arm throughout. By \cref{lem:thinkingbetterwithlonghorizons}, it is optimal for the agent to follow this strategy as the agent will never be indifferent between thinking and doing.

	\subparagraph{1b. The algorithm stops in \cref{step:ed}.} \cref{lem:thinkingbetterwithlonghorizons} implies that it is optimal for the agent to pull the doing arm for the final remaining time $\tau<\overline{\tau}_3$ when she holds belief $\bar{p}$ at $\tau_3$. The function $\dot{y}(\tau;q(\overline{\tau}_3),\overline{\tau}_3)$ has the same sign as the slope of the agent's policy function from \cref{lem:evolutiongamma} when pulling the thinking arm conditional on the agent pulling the doing arm for any remaining time $\tau<\overline{\tau}_3$. The next Lemma shows that $\dot{y}(\tau;q(\overline{\tau}_3),\overline{\tau}_3)\geq 0$.

	\phantomsection

\addcontentsline{toc}{subsubsection}{Lemma \ref{lem:dotypositive}}
		\begin{lemma}\label{lem:dotypositive}
			Suppose that it is optimal for the agent to switch from pulling the thinking arm for a positive measure of time to pulling the doing arm with time $\tau_3$ remaining. Then 
			\[\dot{y}(0;q({\tau}_3),{\tau}_3)> 0.\]
		\end{lemma}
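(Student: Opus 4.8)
I would reduce the claim to three properties of the map \(s\mapsto\dot{y}(s;q(\tau_3),\tau_3)\) on \([0,\infty)\): that it is non-increasing in \(s\); that it is strictly positive at \(s=0\); and that its limit as \(s\to\infty\) is non-negative. The first and third together force \(\dot{y}(s;q(\tau_3),\tau_3)\ge 0\) for every \(s\ge 0\), and the second, combined with the strict concavity of \(y_\tau\) along a thinking interval from \cref{lem:minimum}, upgrades this to a strict inequality. So the skeleton is: (i) monotonicity in \(s\); (ii) the sign at \(s=0\); (iii) the sign of the limit.

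For (i) I would differentiate the closed form of \(\dot y\) in \(s\), getting \(\partial_s\dot{y}(s;p,\xi)=\mu\bigl(V''(\xi+s)+p\lambda V'(\xi+s)\bigr)\). Setting \(p=q(\tau_3)\) and using that the belief can only decline from \(\bar{p}\) along the agent's path, so \(q(\tau_3)\le\bar{p}\), the concavity hypothesis \(-V''/V'\ge\bar{p}\lambda\) together with \(V'>0\) gives \(\partial_s\dot{y}(s;q(\tau_3),\tau_3)\le 0\). For (ii) I would invoke \cref{lem:evolutiongamma}: since \(dy_\tau/d\tau\) is independent of the action, \(\dot{y}(0;q(\tau_3),\tau_3)\) equals the slope of \(y_\tau\) at \(\tau=\tau_3\) along the optimal path, where the belief is \(q(\tau_3)\). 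The hypothesis says the agent thinks on an interval immediately above \(\tau_3\), so \(y_\tau>0\) there while \(y_{\tau_3}=0\) by continuity of \(\gamma\) at the switch; since \(y_\tau\) is strictly concave on that interval by \cref{lem:minimum}, a strictly concave function that vanishes at \(\tau_3\) and is positive just above must have strictly positive right-derivative there, so \(\dot{y}(0;q(\tau_3),\tau_3)>0\).

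The substance is (iii). Using \(\lim_{\tau\to\infty}V'(\tau)=0\) (which follows from the strengthened concavity condition, \(V\) being bounded and increasing), a direct computation gives \(\lim_{s\to\infty}\dot{y}(s;q(\tau_3),\tau_3)=q(\tau_3)\mu\lambda\bigl(V(\infty)-B\bigr)+\bigl(\mu-\lambda q(\tau_3)\bigr)c\), and by the formula \(\hat{p}=\tfrac{c/\lambda}{B-V(\infty)+c/\mu}\) from \cref{prop:Tinf} this is \(\ge 0\) exactly when \(q(\tau_3)\le\hat{p}\) (\cref{ass:qincre}(i) guarantees the relevant denominator is non-negative). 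So everything hinges on showing \(q(\tau_3)\le\hat{p}\) under the optimality hypothesis. The route I would take: \(\hat{p}\) is the infinite-horizon indifference belief, below which thinking is optimal (\cref{prop:Tinf}); since a longer horizon only raises the value of the thinking arm — the progress payoff \(V(\tau)\) is increasing and more remaining time only helps convert progress — the fact that thinking is optimal just above \(\tau_3\) at belief \(q(\tau_3)\) with the finite horizon forces thinking to be optimal at belief \(q(\tau_3)\) with an infinite horizon, whence \(q(\tau_3)\le\hat{p}\). In the way the lemma is invoked — for a policy in which the agent thinks from the outset, so that \(q(\tau_3)=\bar{p}\) — this is just \(\bar{p}\le\hat{p}\). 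I expect this step to be the main obstacle, since it asks for a monotonicity-in-the-horizon comparison of value functions rather than a routine calculation; an alternative route would be to argue that the configurations in which the lemma is needed are precisely those placed in the region \(q(\tau_3)\le\hat p\) by the earlier steps of the algorithm.

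Finally I would assemble the pieces: \(\dot{y}(\cdot;q(\tau_3),\tau_3)\) is non-increasing, strictly positive at \(0\), and has a non-negative limit, so it is \(\ge 0\) throughout; and if it vanished at some \(s_0\) then, being non-increasing and bounded below by a non-negative limit, it would vanish identically on \([s_0,\infty)\), forcing \(V''+q(\tau_3)\lambda V'\equiv 0\) there and contradicting the strict inequality \(-V''>q(\tau_3)\lambda V'\) obtained in (i) (or, in the borderline case \(q(\tau_3)=\bar{p}\) with \(-V''/V'\equiv\bar{p}\lambda\), the strict concavity of \(y_\tau\) along the thinking interval from \cref{lem:minimum}, which is in turn ruled out by the generic non-indifference noted after \cref{lem:ded}). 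This yields \(\dot{y}(\tau;q(\tau_3),\tau_3)>0\).
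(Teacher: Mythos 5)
Your step (ii) is, by itself, a complete and correct proof of what the lemma actually asserts, and it is essentially the paper's argument: at the switch $\gamma_{\tau_3}=0$ by continuity and optimality, $y_\tau\geq 0$ on the thinking interval just above $\tau_3$, and the strict concavity / no-interior-minimum property from \cref{lem:minimum} then forces the right-derivative of $y$ at $\tau_3$ — which is $\dot{y}(0;q(\tau_3),\tau_3)$ by \cref{lem:evolutiongamma} — to be strictly positive. (The paper phrases this as a contradiction: a negative slope would make thinking strictly optimal for \emph{less} time remaining, and a zero slope would make $\tau_3$ a strict local maximum of $y$, so the agent would do, not think, just above $\tau_3$.) The lemma is only invoked to conclude that $\hat{y}(\tau;q(\overline{\tau}_3),\overline{\tau}_3)>0$ for $\tau$ in a small right-neighborhood of $0$, so the local statement at the switch point is all that is needed.

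The rest of your proposal aims at the global claim ``$\dot{y}(s;q(\tau_3),\tau_3)>0$ for all $s\geq 0$,'' and there the gap you flag in step (iii) is real and cannot be closed, because the global claim is false whenever the optimal policy has an initial doing period. In that configuration $q(\tau_3)\geq\hat{p}$ rather than $\leq\hat{p}$ (this is exactly the content of part 3 of \cref{prop:beliefs}: with $\bar{p}\geq\hat{p}$ the agent enters the Hail Mary period with belief at least $\hat{p}$), so your computed limit $q(\tau_3)\mu\lambda\bigl(V(\infty)-B\bigr)+\bigl(\mu-\lambda q(\tau_3)\bigr)c$ is nonpositive, and your monotonicity step (i) then forces $\dot{y}$ to turn negative. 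More structurally, $\dot{y}$ \emph{must} become negative on the thinking interval: $y$ vanishes at both ends of that interval (indifference at the switch into thinking and at the switch back to doing), which is precisely how $\tau_2(\tau_3)$ is defined as the first root of $\hat{y}$; if $\dot{y}$ stayed positive the root would not exist and no doing--thinking--doing policy could arise. Your proposed monotonicity-in-the-horizon comparison with \cref{prop:Tinf} would therefore prove something false, and the correct reading of the lemma is the local one your step (ii) already delivers.
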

		\begin{proof}
			To the contrary, assume that $\dot{y}(0;q({\tau}_3),{\tau}_3)\leq 0$. Because switching to the doing arm is optimal with time $\tau_3$ remaining, by \cref{lem:ded} and $\gamma_\tau$ being a continuously differentiable function, we must have that $\gamma_{\tau_3}=0$. With $\dot{y}(0;q({\tau}_3),{\tau}_3) < 0$, this implies that there is an $\overline{\varepsilon}$ such that $\gamma_{\tau_3-\varepsilon}>0$ for all $\varepsilon\in (0,\overline{\varepsilon})$, which implies (strict) optimality of thinking with time $\tau_3-\varepsilon$ remaining, a contradiction to switching to the doing arm with time $\tau_3$ remaining. With $\dot{y}(0;q({\tau}_3),{\tau}_3) = 0$, the agent would pull the doing arm immediately again for $\tau>\tau_3$, as any critical point corresponds to a strict local maximum, which is a contradiction to thinking for a positive measure of time before $\tau_3$.
		\end{proof}

		By \cref{lem:dotypositive}, there is a $\Delta>0$ such that for all $\tau \in (\overline{\tau}_3,\overline{\tau}_3+\Delta]$, $\hat{y}(\tau;\bar{p},\overline{\tau_3})>0$, as $\hat{y}$ is continuous in $\tau$. Thus, $\tau_2(\overline{\tau_3})$ is defined, and for any time horizon $T<\tau_2(\overline{\tau_3})+\overline{\tau_3}$, it is optimal for the agent to start thinking before switching to the doing arm for the remaining time.

	\subparagraph{1c. The algorithm stops in \cref{step:convergence}.} The continuation game with time $\tau_2(\overline{\tau_3})+\overline{\tau_3}$ is identical to one in which the agent starts out with belief $\bar{p}=q(\overline{\tau_3})$ and deadline $T=\tau_2(\overline{\tau_3})+\overline{\tau_3}$. Moreover, $\tau_1(\tau_3)$ describes the length of the initial doing period to arrive at this continuation game. By construction, at $\tau_2(\tau_3)+\tau_3$, the agent's policy function $\gamma$ must be increasing, as it coincides with $\hat{y}$ multiplied by a positive constant. By \Cref{lem:minimum}, $\gamma_\tau$ cannot have an interior minimum and, therefore, must be negative for all $\tau\in(T-\tau_2(\tau_3)-\tau_3,T]$.

\paragraph{Step 2. The algorithm finds a solution.} Here, we show that the algorithm always provides a solution. First, we state a lemma that will be useful for the remainder.

\phantomsection
\addcontentsline{toc}{subsubsection}{Lemma \ref{lem:monotonicityqtau12}}
	\begin{lemma}\label{lem:monotonicityqtau12} Under \Cref{ass:qincre}, the following monotonicity statements hold.
		\begin{itemize}
			\item[(i)] $q(\tau_3)$ is monotonically increasing in $\tau_3$.
			\item[(ii)] $\tau_1(\tau_3)$ is monotonically decreasing in $\tau_3$.
			\item[(iii)] $\tau_2(\tau_3)$ is monotonically decreasing in $\tau_3$.
		\end{itemize}
	\end{lemma}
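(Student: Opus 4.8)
The plan is to prove the three claims in the order (i), (ii), (iii), since (ii) follows immediately from (i) and the argument for (iii) also invokes (i). For part (i), recall $q=\min\{1,\hat q\}$ with $\hat q=N/D$, $N(\tau)=\mu(V(\tau)+c\tau)$, $D(\tau)=\mu(B+c\tau)+(\lambda-\mu)(B-U(\tau))$. One checks $D>0$ for all $\tau$ (bound it below by $\lambda B$ or $\mu B$ depending on the sign of $\lambda-\mu$), so $\hat q$ is well defined and smooth. It suffices to show $\hat q$ is strictly increasing wherever $\hat q\le 1$: then, once $\hat q$ reaches $1$ it must stay at or above $1$, so $q$ is pinned at $1$ afterward and is globally nondecreasing. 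Using $N=\hat q D$, the quotient rule gives $\hat q'D=N'-\hat q D'$ with $N'=\mu(V'+c)$ and $D'=\mu c-(\lambda-\mu)U'$, where $U'(\tau)=(\lambda B-c)e^{-\lambda\tau}>0$. Hence
\[\hat q'(\tau)>0\quad\Longleftrightarrow\quad h(\tau):=\mu V'(\tau)+\mu c\bigl(1-\hat q(\tau)\bigr)+\hat q(\tau)(\lambda-\mu)U'(\tau)>0 .\]
When $\mu\le\lambda$, on $\{\hat q\le 1\}$ every term of $h$ is nonnegative and the first is strictly positive, so $\hat q'>0$ and part (i) is done.

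The case $\mu>\lambda$ is the main obstacle, and it is precisely here that \cref{ass:qincre} is needed. The term $\hat q(\lambda-\mu)U'$ is now negative, and a crude bound fails because $(\mu-\lambda)U'(0)=(\mu-\lambda)(\lambda B-c)$ can exceed $\mu V'(0)$ when $B$ is large. I would argue that $h$ has no zero on the relevant interval. Note $h=\hat q'D$, so at any zero $\tau_0$ of $h$ we have $\hat q'(\tau_0)=0$; differentiating $h$ and using $-\mu c+(\lambda-\mu)U'=-D'$ gives $h'=(\mu-\lambda)U''\,m-\hat q'D'$ where $m(\tau):=\tfrac{\mu V''(\tau)}{(\mu-\lambda)U''(\tau)}-\hat q(\tau)$, so $h'(\tau_0)=(\mu-\lambda)U''(\tau_0)\,m(\tau_0)$. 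Since $U''<0$ and $\mu>\lambda$, if $\tau_0$ is the \emph{first} zero of $h$ (so $h>0$ just before and $h'(\tau_0)\le 0$) then $m(\tau_0)\ge 0$. Monotonicity of $m$ from \cref{ass:qincre}(ii), together with $h(0)>0$ and the behavior of $\hat q$ as $\tau\to\infty$ — which \cref{ass:qincre}(i), $V(\infty)\le B+c/\mu$, is used to control relative to the cap $1$ — should then contradict the possibility of $h$ changing sign, giving $h>0$ throughout $\{\hat q\le 1\}$. Pinning down this last step cleanly (in particular the interaction between where $\hat q$ hits $1$ and the sign of $m$ there) is the delicate part.

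Part (ii) is then immediate: the map $q\mapsto\tfrac1\lambda\ln\!\bigl(\tfrac{\bar p}{1-\bar p}\cdot\tfrac{1-q}{q}\bigr)$ is strictly decreasing in $q$ on $(0,1)$, and $\tau_1(\tau_3)$ is this function composed with the nondecreasing map $q(\tau_3)$ from (i), hence nonincreasing in $\tau_3$.

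For part (iii) I would first show that $\hat y(\tau;q(\tau_3),\tau_3)$ is, for each fixed $\tau$, nonincreasing in $\tau_3$, and then deduce that its first positive zero moves weakly left. From $\dot y(s;p,\xi)=\mu V'(\xi+s)+p\mu\lambda(V(s+\xi)-B)+(\mu-\lambda p)c$ one computes
\[\frac{\partial}{\partial\xi}\dot y(s;p,\xi)=\mu\bigl(V''(\xi+s)+p\lambda V'(\xi+s)\bigr)\le 0,\qquad \frac{\partial}{\partial p}\dot y(s;p,\xi)=\lambda\bigl(\mu V(s+\xi)-\mu B-c\bigr)\le 0 ,\]
the first by the concavity hypothesis $-V''/V'\ge\bar p\lambda\ge p\lambda$ (valid since in the relevant range $q(\tau_3)\le\bar p$, and in any case all beliefs that arise are $\le\bar p$), the second by \cref{ass:qincre}(i). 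Integrating the survival weight, $\hat y(\tau;p,\xi)=\int_0^\tau e^{\mu s}\dot y(s;p,\xi)\,ds$ is nonincreasing in both $p$ and $\xi$; since $q(\tau_3)$ and $\xi=\tau_3$ both rise with $\tau_3$ by (i), $\hat y(\tau;q(\tau_3),\tau_3)$ is nonincreasing in $\tau_3$ for every $\tau$. Finally, $\hat y(0;q(\tau_3),\tau_3)=0$ and, when switching at $\tau_3$ is optimal, $\dot y(0;q(\tau_3),\tau_3)>0$ by \cref{lem:dotypositive}, so $\hat y$ is initially positive, and by \cref{lem:minimum} the relative preference — hence $\hat y$ along the thinking phase, to which it is proportional with a positive constant — has no interior minimum, so $\hat y$ crosses zero transversally at $\tau_2(\tau_3)$. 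For $\tau_3'>\tau_3$ we then have $\hat y(\tau;q(\tau_3'),\tau_3')\le\hat y(\tau;q(\tau_3),\tau_3)$, which equals $0$ at $\tau=\tau_2(\tau_3)$ and is positive for small $\tau$; hence the first positive zero of $\hat y(\cdot;q(\tau_3'),\tau_3')$ is at most $\tau_2(\tau_3)$, i.e.\ $\tau_2(\tau_3')\le\tau_2(\tau_3)$, with the case $\tau_2(\tau_3)=\infty$ trivial.
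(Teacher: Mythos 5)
Your parts (ii) and (iii) are correct. Part (ii) is exactly the paper's one-line argument. Part (iii) replaces the paper's implicit-function-theorem computation at the root of $\hat y$ with a global comparison argument, but it rests on the same two facts the paper uses ($\partial_\xi \dot y\le 0$ from the relative-concavity assumption and $\partial_p\dot y\le 0$ from \cref{ass:qincre}(i), combined with part (i)); the pointwise-domination-plus-first-root argument is clean and, if anything, handles the crossing more transparently than the paper's local derivative computation. Your part (i) for $\mu\le\lambda$ is also correct and is tidier than the paper's route (the paper classifies critical points via a second-order condition and then argues about where local maxima can sit; your direct sign decomposition of $h=\hat q'D$ gets there in one step).

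The genuine gap is exactly where you flag it: closing part (i) when $\mu>\lambda$. Your setup is the right one — your $m(\tau)=\tfrac{\mu V''}{(\mu-\lambda)U''}-\hat q$ is precisely the quantity in \cref{ass:qincre}(ii), and the deduction $h'(\tau_0)=(\mu-\lambda)U''(\tau_0)\,m(\tau_0)$ at a zero of $h$ is correct. But the two sign conditions you can extract from a single excursion of $h$ below zero, namely $m(\tau_0)\ge 0$ at the first down-crossing and $m(\tau_1)\le 0$ at the subsequent up-crossing (which must exist because $\hat q\to 1$), are jointly \emph{compatible} with $m$ being monotonically decreasing, so \cref{ass:qincre}(ii) is not yet contradicted. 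What is missing is a third sign condition forcing $m$ to change sign twice. The paper obtains it from the tail behavior: since $\hat q\to 1$ from below after the putative local minimum, either there is a further local maximum of $\hat q$ (where the sign flips back, giving the pattern $+,-,+$ for $m$), or $\hat q$ is eventually concave while increasing toward $1$, which drives the relevant second-order quantity back toward zero from above its strictly positive value at the local minimum — again a non-monotonicity. Without this third anchor point your argument does not rule out a single dip of $\hat q$ below a level it never recovers before hitting the cap at $1$, and you correctly identify that the interaction with where $\hat q$ reaches $1$ is the unresolved step. As written, part (i) in the case $\mu>\lambda$ — and hence parts (ii) and (iii), which both invoke it — is not established.
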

	\begin{proof}
		We prove each statement separately. Recall that $q(\tau)=\min\{1,\hat{q}(\tau)\}$.
		\paragraph{Statement (i).} To simplify exposition, we use the notation $\hat{q}(\tau)=\frac{x}{z}$, with $x=\mu(V(\tau)+ c \tau)$ and $z=\mu(U(\tau)+c\tau)+U'(\tau)+c$. Note that $q(\tau)$ is continuous and that $\hat{q}'(\tau=0)>0$. Thus, if $\hat{q}(\tau)$ ever decreases a local maximum must exist. Moreover, $\lim_{\tau \rightarrow \infty} \hat{q}(\tau)=1$. Thus, if $q(\tau)$ ever decreases, $\hat{q}(\tau)$ must have a local minimum with $\hat{q}(\tau)<1$.  We will show that $q(\tau)$ cannot be decreasing by showing that $\hat{q}(\tau)$ has no local minimum with $\hat{q}(\tau)<1$.

			\subparagraph{Case 1: $\mu\leq \lambda$.} Consider $\mu\leq \lambda$. If $\hat{q}(\tau)$ is ever decreasing at least one local maximum exists such that $\hat{q}'(\tau)=0$ and $\hat{q}''(\tau)<0$. At this maximum we thus have, 
				\begin{equation}
					\frac{V'(\tau)+c}{V(\tau)+c\tau}=\frac{(\mu-\lambda)U'(\tau)+\mu c}{\mu(U(\tau)+c\tau)+U'(\tau)+c}
				\end{equation} and 
				\begin{equation}
					\frac{V''(\tau)}{V'(\tau)+c}<\frac{(\mu-\lambda)U''}{(\mu-\lambda)U'(\tau)+\mu c},\label{eq:socmulambda}
				\end{equation}
				where we have used $U''=-\lambda U'$ and that $\hat{q}'(\tau)=0$ implies $\frac{x'}{x}=\frac{z'}{z}$. A critical point requires $\frac{x'}{x}=\frac{z'}{z}$, and since $x',x$ and $z$ are trivially greater than zero, so is $z'$. It follows that whenever $\mu<\lambda$, the right-hand side of \eqref{eq:socmulambda} is positive, implying that any critical point must be a local maximum. Indeed, the right-hand side is positive: The numerator is positive, because $U''(\tau)<0$ and $\mu\leq 0$ and the denominator is positive as well by the above argument that $z'>0$. Because $\lim_{\tau \rightarrow \infty }\hat{q}(\tau)=1$, any local maximum at $\hat{\tau}$ with $\hat{q}(\hat{\tau})<1$ would imply the existence of a local minimum for some $\check{\tau}>\hat{\tau}$. Hence, if $\hat{q}(\tau)$ is decreasing for some $\tau'$ we must have that $\hat{q}(\tau)>1$ for all $\tau \geq \tau'$, implying that $q(\tau)=1$ for all $\tau \geq \tau'$ proving monotonicity of $q(\tau)$.

			\subparagraph{Case 2: $\mu> \lambda$.} Consider $\mu>\lambda$. Recall that whenever $\hat{q}'(\check{\tau})<0$ and $\hat{q}(\check{\tau})<1$, there must be some local minimum; denote the time remaining at the local minimum by $\check{\tau}_2$. As $\hat{q}'(0)>0$, there must be a local maximum of $\hat{q}(\tau)$ first; denote the time remaining at the local maximum by $\check{\tau}_1$, with $\check{\tau}_1<\check{\tau}_2$. Moreover, as $\lim_{\tau \rightarrow \infty} \hat{q}(\tau)=1$, there must be either another local maximum, the time remaining of which is denoted by $\check{\tau}_3$, or there is some $\tilde{\tau}$ such that for all $\tau>\tilde{\tau}$, $\hat{q}''(\tau)<0$. Define $\varphi(\tau):=x''z-z''x$.

			These three observations imply the following: $\varphi(\check{\tau}_1)<0$; $\varphi(\check{\tau}_2)>0$; if $\check{\tau}_3$ exists, then $\varphi(\check{\tau}_3)<0$; and if $\check{\tau}_3$ does not exist, then there must be some $\tilde{\tilde{\tau}}$ such that $\varphi(\tilde{\tilde{\tau}})<\varphi(\check{\tau}_2)$. The latter conclusion follows by the observation that $\hat{q}''(\tau)<0$ whenever $\hat{q}(\tau)$ converges from below to 1. $\hat{q}''(\tau)<0$ implies that $\varphi(\tau)<2\frac{z'}{z}(x'z-xz')$, where the right-hand side converges to 0 as $\tau \rightarrow 1$. 

			Thus, we know that as $\tau$ moves from $\tau\leq \check{\tau}_1$ to $\infty$, $\phi(\tau)$ is strictly negative (at $\check{\tau}_1$), strictly positive (at $\check{\tau}_2$) and arbitrarily small as we approach $\tau=\infty$. Thus, $\phi(\tau)$ has to be nonmonotonic. Part (ii) of \Cref{ass:qincre} rules nonmontonicity out. Thus, no local minimum of $\hat{q}(\tau)$ with $\hat{q}(\tau)<1$ exists, and $q(\tau)$ is monotonic.

		\paragraph{Statement (ii).} The monotonicity of $\tau_1$ follows by the monotonicity of $q(\tau_3)$ and the observation that $\tau_1(\tau_3)$ decreases in $q(\tau_3)$.

		\paragraph{Statement (iii).}	To see that $\tau_2(\tau_3)$ decreases, recall that $\tau_2$ is determined via the root of $\hat{y}(\tau;q(\tau_3),\tau_3)$ whenever this root exists for some $\tau>0$. In this case, we require by definition of $\tau_2(\tau_3)$\footnote{Recall that because $\bar{p}<1$, $q(\tau)$ is differentiable in the relevant part.} 

			\begin{equation}\label{eq:root}
				\begin{split}
					\frac{\mathrm d \hat{y}(\tau_2(\tau_3);q(\tau_3), \tau_3)}{\mathrm d \tau_3} = &\frac{\partial \hat{y}(\tau_2(\tau_3);q(\tau_3), \tau_3)}{\partial \tau_3} + \frac{\partial \hat{y}(\tau_2(\tau_3);q(\tau_3), \tau_3)}{\partial q(\tau_3)} \frac{\partial q(\tau_3)}{\partial \tau_3}\\  
										&+ \frac{\partial \hat{y}(\tau_2(\tau_3);q(\tau_3), \tau_3)}{\partial \tau_2} \frac{\partial \tau_2(\tau_3)}{\partial \tau_3} = 0.
				\end{split}	
			\end{equation}

			Note that under \cref{ass:qincre}, $\frac{d}{d\tau_3} \dot{y}(s;q(\tau_3),\tau_3)<0$ because
			\[\frac{d\dot{y}(s;q(\tau_3),\tau_3)}{d\tau_3} = \mu V''(s+\tau_3) +\mu \lambda \left( q(\tau_3) V'(s+\tau_3)  +\frac{d q(\tau_3)}{d\tau_3} \left(V(s+\tau_3) - B - \frac{c}{\mu}\right)\right)\]
			which is negative for all $s$, as $-V''(\tau_3+s)/V'(\tau_3+s)\geq \bar{p}\lambda \geq q(\tau_3)\lambda$ and $V(\tau_3+s)\leq B+\frac{c}{\mu}$ by assumption while $q(\tau_3)$ is increasing in $\tau_3$.

			Hence, we know that
			\[\frac{\partial \hat{y}(\tau_2(\tau_3);q(\tau_3), \tau_3)}{\partial \tau_3} + \frac{\partial \hat{y}(\tau_2(\tau_3);q(\tau_3), \tau_3)}{\partial q(\tau_3)} \frac{\partial q(\tau_3)}{\partial \tau_3} <0\] 

			and, moreover, that $\partial \hat{y}(\tau=\tau_2;p,\tau_3)/\partial \tau<0$ because $\tau_2$ is the root and $\hat{y}(\tau;p,\tau_3)>0$ if $\tau<\tau_2$ by construction. Thus, to satisfy \eqref{eq:root}, we need $\partial \tau_2(\tau_3)/\partial \tau_3<0$.
	\end{proof}

Second, we show that if there is a solution without an initial doing period, then the algorithm always returns such a solution. 

If such a solution exists, then there exists a $\overline{\tau}_3$ such that $\bar{p}=q(\overline{\tau}_3)$, as $\lim_{\tau \rightarrow \infty} q(\tau)=1>\overline{p}$, which \cref{step:onlyd} of the algorithm will detect because $q$ is monotonic by \cref{lem:monotonicityqtau12}. If the solution is such that only a Hail Mary period is possible, then \cref{step:onlyd} ensures that $\tau_3=T$, as the solution detected $\overline{\tau}_3>T$. If a solution in which the agent starts by thinking is possible, then \cref{step:ed} detects one such solution, i.e., if $\overline{\tau}_3<T$ from \cref{step:onlyd} and $q(\overline{\tau}_3)=\overline{p}$. Neither \cref{step:onlyd} nor \cref{step:ed} returns a solution only if any policy that involves only a single doing period does not satisfy the necessary conditions of the optimal control problem.

Third, we show that if all solutions involve an initial doing period, then \cref{step:convergence} of the algorithm finds such a solution.

If an initial doing period exists, then the belief held at the beginning of the Hail Mary period must satisfy $q(\tau_3)<\bar{p}$. At the same time, by \cref{lem:dattheend}, $q(\tau_3)>0$, and 
\[q(\tau_3-t) \leq \frac{ q(\tau_3) e^{-\lambda (\tau_3-t)}}{q(\tau_3) e^{-\lambda (\tau_3-t)} + 1-q(\tau_3)}\]
for all $t \in [0,\tau_3]$ such that \cref{lem:thinkingbetterwithlonghorizons} does not imply any additional switches for any time remaining $\tau<\tau_3$.

By \cref{lem:minimum,lem:evolutiongamma}, any solution with an initial doing period implies the existence of two roots of the policy function $\gamma_\tau$. Because $\hat{y}(\tau;q(\tau_3),\tau_3)=0$ determines the smallest root $\tau>0$ of $\gamma_\tau$ conditional on a Hail Mary period of length $\tau_3$, the algorithm detects that root if it exists. Finally, because $q(\tau_3)<\bar{p}$ and beliefs are constant while thinking, the length of the initial doing period is determined by Bayes' rule and the belief conditional on reaching the Hail Mary period $q(\tau_3)= \bar{p} e^{-\lambda t}/(\bar{p} e^{-\lambda t}+1-\bar{p})$, which results in $\tau_1(\tau_3)$.

\Cref{step:convergence} of the algorithm considers all possible combinations of $\tau_1(\tau_3),\tau_2(\tau_3)$ and $\tau_3$ until a solution is found that satisfies the necessary conditions. If a solution exists, the algorithm converges.

Fourth and finally, a solution to the optimal control problem exists because the evolution of the state is continuous and bounded, the control is bounded, the agent's value is finite, the running cost is convex in the control and the set of admissible effort paths is nonempty (see, e.g., \citet{clarke2013functional} for details). By \cref{lem:ded}, any solution is of one of the three types the algorithm considers. Thus, the algorithm determines a candidate solution.

\paragraph{Step 3. The algorithm's solution is the unique candidate.} Finally, we show that the algorithm identifies the uniquely optimal policy. To show uniqueness, we have to show that given a solution $\tau_3$, there is no other $\tau_3'\neq \tau_3$ that solves the fixed point problem.
	
	Any two solutions in which the agent pulls the doing arm for only one time interval are identical on any positive measure of time. This is immediate because if the agent pulls the doing arm on only one interval, then she has to pull it in the end. Either the agent pulls only the doing arm in which the strategy is trivially unique and $q(\tau)<p_\tau$ for any $\tau \in[0,T]$ or she begins by pulling the thinking arm. In the latter case, she switches when the time remaining is $q^{-1}(\bar{p})$, which has a unique solution by \Cref{lem:monotonicityqtau12}, as $\bar{p}<1$.

	Thus, if there are two candidate strategies satisfying the necessary conditions, then the agent needs to split the time spent on the doing arm between two disjoint intervals in at least one of those strategies. If the agent splits her time doing in at least one solution and the two solutions differ on a positive measure of time, then there must be two different lengths of the Hail Mary period, $\tau_3'$ and $\tau_3$, both of which satisfy the necessary conditions. Assume without loss of generality that $\tau_3'>\tau_3$. Both $\tau_3'$ and $\tau_3$ have associated terminal beliefs, $\underline{p}'$ and $\underline{p}$. The terminal belief is the agent's belief at the deadline conditional on failing to find any solution.  Note that for the case of $\tau_3$, the agent's strategy must involve two distinct doing periods. We proceed by cases and derive a contradiction for each of them.

	\subparagraph{Assume $\underline{p}>\underline{p}'$.} Consider the agent's belief with $\tau_3$ periods remaining, and assume that she pulled the doing arm in the interval $[\tau_3',\tau_3)$ with initial belief $q(\tau_3')$. Since $\underline{p}>\underline{p}'$, the agent has to hold a belief $\tilde{p}(\tau_3)<q(\tau_3)$ with $\tau_3$ periods remaining. However, then the agent prefers to pull the thinking arm with $\tau_3$ periods remaining by \cref{lem:thinkingbetterwithlonghorizons}, which is a contradiction.

	\subparagraph{Assume $\underline{p}=\underline{p}'$}. Consider the agent's policy function under the strategy that implies the last switch to occur at $\tau_3'$: $\gamma'_\tau$. The necessary conditions imply that $\gamma'_{\tau_3'}=0$. Because the terminal beliefs coincide, the policy function and hence the strategy in the continuation game for $\tau<\tau_3$ coincide with the policy function and the strategy corresponding to a Hail Mary period of length $\tau_3$ only---as the terminal condition $\gamma_0$ depends only on the terminal belief. In turn, this observation implies that $\gamma'_{\tau_3}=\gamma_{\tau_3}=0$. However, by construction, the agent pulls the doing arm with time remaining $\tau=[\tau_3',\tau_3]$, implying that $\gamma'_\tau\leq 0$ on this interval. As a consequence, $\gamma'_{\tau}$ has to have a critical point at $\tau_3$. The arguments in the proof of \cref{lem:minimum} imply that $\gamma'_\tau$ is strictly concave at any critical point, and thus, $\gamma'_\tau$ attains a maximum at $\tau_3$. As the beliefs coincide at $\tau_3$, the policy functions under both strategies attain a maximum at $\tau_3$. By \cref{lem:minimum}, none of the policy functions will attain a maximum, and thus, the agent pulls the doing arm throughout under both policy functions, contradicting the assumption that the strategies differ and that there is a switching time $\tau_3'>\tau_3$.
	
	\subparagraph{Assume $\underline{p}<\underline{p}'$.} In this case, the agent's overall time spent on the doing arm must be smaller with switching time $\tau_3'$ than with $\tau_3$. This implies that both strategies involve two distinct doing periods. Moreover, $\tau_2(\tau_3')>\tau_2(\tau_3)$ for both $\tau_3$ and $\tau_3'$ to be a solution to the fixed point problem. By \cref{lem:monotonicityqtau12}, $\tau_2$ decreases in $\tau_3$, which is a contradiction.
\end{proof}

\subsection{Proof of Proposition \ref{prop:doing_hurts_success}}
\begin{proof}
The probability that the agent obtains a success before the deadline for any $\tau_1,\tau_2$, and $\tau_3$ is 
\begin{align}
	P(\tau_1,\tau_2,\tau_3;T)&=\bar{p} (1-e^{-\lambda \tau_1}) + (\bar{p} e^{-\lambda \tau_1}+1-\bar{p})\\ 
	&\cdot \left(1-e^{-\mu \tau_2}-\frac{e^{-\nu T}-e^{-\mu \tau_2-\nu(T-\tau_2)}}{\mu-\nu}+e^{-\mu \tau_2} \frac{\bar{p}e^{-\lambda \tau_1}}{\bar{p} e^{-\lambda \tau_1}+1-\bar{p}} (1-e^{-\lambda \tau_3})\right)
\end{align}
where $\tau_3=T-\tau_1-\tau_2$.
Consider the derivative of $P(\tau_1,\tau_2,\tau_3;T)$ with respect to $\tau_1$, which is
\begin{align}
	e^{-\nu(T-\tau_1)-\mu \tau_2-\lambda \tau_1}\frac{e^{\mu \tau-2}-e^{\nu \tau_2}}{\mu-\nu}\mu(\bar{p}(\lambda - \nu)-(1-\bar{p})e^{\lambda \tau_1}\nu ).
\end{align}
Its sign is determined by the sign of the last term, which is negative whenever $\nu\geq p_{\tau_1}\lambda$. This condition is satisfied by our assumptions; in particular, it is a consequence of the relative concavity assumption.
\end{proof}

\subsection{Proof of Proposition \ref{lem:tau_cs}} \label{sub:proof_of_proposition_lem:belief_declines}

\begin{proof}
	Suppose towards a contradiction that $\tau_1$ decreases in $T$. In particular, consider two scenarios: (i) deadline $T$ and (ii) deadline $T'>T$. Moreover, suppose that the associated initial doing periods are such that $\tau'_1<\tau_1$.

	The belief $q(\tau_3')$ that the agent holds during the thinking period in scenario (ii) is larger than the belief $q(\tau_3)$ the agent holds in scenario (i). By \cref{lem:thinkingbetterwithlonghorizons,lem:monotonicityqtau12}, $q(\cdot)$ is monotonic and increasing in $\tau_3$, which implies $\tau'_3>\tau_3$.

	Consider both scenarios with time $\tau_3$ remaining. In scenario (i), the agent is indifferent between thinking and doing by construction and prefers doing for the remainder of the time. In scenario (ii), she prefers doing at $\tau_3$ and for the remainder of time because $\tau_3'>\tau_3$. This implies for the corresponding beliefs at remaining time $\tau_3$ that $p'_{\tau_3}\geq p_{\tau_3}=q(\tau_3)$. For any subsequent period, the agent pulls the doing arm in both scenarios. It follows that the terminal beliefs are $\underline{p}'\geq \underline{p}$.

	Because in both scenarios the agent starts with a belief $\bar{p}$, a larger terminal belief in scenario (ii) implies that the maximum time the agent pulls the doing arm decreases in this scenario compared to scenario (i); i.e., $\tau_1+\tau_3 \geq \tau'_1+\tau'_3$. Because $T'>T$, it follows that $\tau'_2>\tau_2$. By \cref{lem:monotonicityqtau12}, $\tau'_3\geq \tau_3$ implies $\tau'_2 \leq \tau_2$. In addition, $\frac{d \dot{y}}{d p}<0$, implying that $\tau_2$ decreases in the belief as well. Thus, $\tau'_3\geq \tau_3$ and $q(\tau'_3)\geq q(\tau_3)$ imply $\tau'_2\leq \tau_2$, which is a contradiction.

	The length of the Hail Mary period increases in $T$ if the agent immediately enters this period and is constant whenever the agent starts by thinking. Finally, because $\tau_1$ is nondecreasing, it follows that $q(\tau_3)$ is nonincreasing, and thus, by \cref{lem:monotonicityqtau12}, $\tau_3$ nonincreasing.

	Whenever the agent has no initial doing period, $\tau_2$ is trivially nondecreasing in $T$. Because $\tau_3$ is nonincreasing when $\tau_1>0$, it follows from \cref{lem:monotonicityqtau12} that $\tau_2$ weakly increases. Finally, $\tau_1+\tau_3$ is bounded because $q^{-1}(\tau_3)$ is bounded by $\hat{p}>0$ defined in \cref{prop:Tinf}, which in turn implies that both $\tau_1$ and $\tau_3$ are bounded. However, then, because a solution exists for every $T$, we must have that $\tau_2 \rightarrow \infty$ as $T \rightarrow \infty$.
\end{proof} 
\subsection{Proof of Proposition \ref{prop:beliefs}} \label{sec:proof_of_proposition_prop:beliefs}

\begin{proof}
	We prove each item separately. At several points in the proof, we invoke \cref{lem:monotonicityqtau12,lem:dotypositive}, which can be found in \cref{sub:proof_of_proposition_prop:characterisation} \nameref{sub:proof_of_proposition_prop:characterisation}.

	\paragraph{Proof of \cref{it:overlinep}.}

	$\tilde{p}$ is constructed such that $\dot{y}(0,\tilde{p},q(\tilde{p}))=0$. Thus, if $\bar{p}>\tilde{p}$, then \[\underbrace{\mu V'(q^{-1}(\bar{p})) - \bar{p}\lambda \mu (B-V(q^{-1}(\bar{p}))) + (\mu-\lambda \bar{p})c}_{=\dot{y}(0,\bar{p},q(\bar{p}))}<0,\]
	 because $q^{-1}(p)$ is monotonic for $p \in (0,1)$ and the LHS is decreasing since $-V''(\tau)/V'(\tau)>\lambda \bar{p}$ by the relative concavity assumption. Using \cref{lem:dotypositive}, this implies that it cannot be optimal to switch from thinking to doing with $q^{-1}(\bar{p})$ remaining or, equivalently, with a belief $\bar{p}>\tilde{p}$. 

	\paragraph{Proof of \cref{it:hatp}.} Observe that the sign of function $\dot{y}(s;p,\xi)$ is the same as the sign of function $\frac{d\gamma_\tau}{d\tau}$ conditional on $\gamma_\xi=0$ and an agent that thinks with time remaining $\tau \in [\xi,s+\xi]$. A necessary condition for an initial doing period is that there exists a $\tau_2>0$ such that $y(\tau_2;q(\tau_3),\tau_3)=0$ with $\tau_3=q^{-1}(p)$ and $p<\bar{p}$ the belief held during the thinking period. Because the agent (i) is indifferent with time $\tau_2+\tau_3$ remaining, (ii) pulls the doing arm for a positive measure of time before that and (iii) expects to pull the thinking arm for a positive measure of time thereafter, $\dot{y}(\tau_2;p,\tau_3)<0$ by the same arguments that proved \cref{lem:dotypositive}. Because $- V''(\tau)/V'(\tau)> p \lambda V'$ by assumption, $\dot{y}(\cdot)$  decreases in $\tau_2$. Thus, if $\lim_{\tau_2\rightarrow \infty} \dot{y}(\tau_2;p, \tau_3)\geq 0$, then there is no second root of $y(\tau_2;p,\tau_3)$ for any $\tau_2>0$. It follows that if $\bar{p}<\hat{p}$, then
	\[
	\lim_{\tau_2 \rightarrow \infty} \dot{y}(\tau_2; p, \tau_3)\geq 0 \Leftrightarrow p < \frac{V'(\tau_2+\tau_3)+c}{\lambda \left(B+\frac{c}{\mu}-V(\tau_2+\tau_3)\right)}= \frac{c}{\lambda \left(B+\frac{c}{\mu} - V(\infty)\right)}=\hat{p}.
	\]

	\paragraph{Proof of \cref{it:underlinep}.} First, we show that if $\bar{p} \geq \hat{p}$, then the agent enters the Hail Mary period with a belief $p_{\tau_3}\geq\hat{p}$. Suppose towards a contradiction that the agent switches to the Hail Mary period with time $\tau_3$ remaining and a belief $p_{\tau_3}<\hat{p}$. Because $\bar{p}\geq\hat{p}$, there has to be a continuation game with time $\tau$ remaining at which the agent is in the initial doing phase and holds a belief $p_\tau<\hat{p}$. By \cref{it:hatp} such a continuation game cannot exist.

	Second, observe that if the agent enters the Hail Mary period with a belief $\hat{p}$, then her terminal belief at the end of an unsuccessful Hail Mary period is $\check{p}$. 

	Third, suppose that the terminal belief was smaller than $\check{p}$. Then, it must be true that with time $q^{-1}(\hat{p})$ remaining, the agent is in the Hail Mary period because the agent enters the final doing period with a belief weakly greater than $\hat{p}$ and because $q^{-1}(\hat{p})$ is the time length of pulling the doing arm required to deteriorate a belief of $\hat{p}$ to $\check{p}$. However, that the terminal belief lies below $\check{p}$ implies that with time $q^{-1}(\hat{p})$ remaining, the agent's belief is less than $\hat{p}$. However, this contradicts the necessary conditions for an optimal strategy, as $q(q^{-1}(\hat{p}))>p_{q^{-1}(\hat{p})}$, which implies that the agent is not in the Hail Mary period.
\end{proof}

 \setstretch{1}
\part{References}
\begin{refcontext}[sorting=nyt]
\printbibliography[heading=none]
\end{refcontext}\part{Supplementary Material} \label{sub:no_shirking_condition}
In this part, we verify that the examples considered in the main text satisfy our assumptions, consider the extension to the value of progress being dependent on the state $A_\tau$, consider an example in which the relative concavity assumption does not hold, and provide a sufficient condition such that the agent never wants to shirk.

\section{Verification of Assumptions for Examples} \label{sub:ex_verify}

In  this part, we verify that the examples discussed in \Cref{ssub:examples} satisfy the assumptions of our model.

\subsection{Verification of Example \ref{example_implications} \& Example \ref{par:ex_delay}}
The value of such an arm given remaining time $\tau$ is 
\begin{align*}
	V(\tau)= (1-e^{-\nu \tau})\left(B_\nu-\frac{c_\nu}{\nu}\right),
\end{align*}
with
\begin{align*}
	V(0)=0,~V'(\tau)=e^{-\nu \tau}\left(\nu B_\nu-c_\nu\right),~V''(\tau)= - \nu e^{-\nu \tau}\left(\nu B_\nu-c_\nu\right),~	-\frac{V''(\tau)}{V'(\tau)}= \nu.
\end{align*}

\subsection{Verification of Example \ref{par:ex_risky}}
As with any risky bandit, the agent would have an incentive to eventually stop pulling the arm if her time were unlimited. Assume that this occurs after pulling the arm for $\hat{t}$ periods.
	\[V(\tau)= \begin{cases}
		\bar{p}^\nu(1-e^{-\nu \tau})\left(B_\nu-\frac{c_\nu}{\nu}\right)-(1-\bar{p}^\nu)c_\nu \tau &, \text{ if }\tau\leq \hat{t} \\
		\bar{p}^\nu B_\nu - \frac{c_\nu}{\nu} (1+(1-\bar{p}^\nu) \hat{t}) &, \text{ if } \tau \geq \hat{t},
		\end{cases}
		\]

where $\hat{t}$ is the time at which an agent with initial belief $\bar{p}^\nu$ would stop experimenting on the new arm. Such an arm satisfies our desired conditions when the initial belief is sufficiently high given the deadline, i.e., whenever $\tau\leq \hat{t}$:\footnote{Note that $\hat{t}$ increases in $\bar{p}^v$ and converges to $\infty$ as $\bar{p}^v \rightarrow 1$, the case in which \cref{par:ex_risky} converges to \cref{example_implications}.}
\begin{align*}
	V(0)&=0\\
	V'(\tau)&=
		\bar{p}^\nu e^{-\nu \tau}\left(\nu B_\nu-c_\nu\right)-(1-\bar{p}^\nu)c_\nu  \\
			V''(\tau)&= - \nu \bar{p}^\nu(e^{-\nu \tau})\left(\nu B_\nu-c_\nu\right) \\
		-\frac{V''(\tau)}{V'(\tau)}&=\nu \frac{\bar{p}^\nu(e^{-\nu \tau})\left(\nu B_\nu-c_\nu\right)}{\bar{p}^\nu e^{-\nu \tau}\left(\nu B_\nu-c_\nu\right)-(1-\bar{p}^\nu)c_\nu}>\nu.
\end{align*}	

\subsection{Verification of Example \ref{par:ex_timevarying}}
In this version of the model, the value does not have a closed form solution, but we can verify that our assumptions are satisfied whenever $\beta<\hat{\beta}:=e^\alpha \nu+\frac{\bar{p} \lambda}{\nu}\frac{ce^{-\alpha}-\nu B}{\nu B}-\frac{c}{B}$.\footnote{Depending on parameters, this bound can be either positive or negative. In particular, it is strictly positive whenever $\nu$ is sufficiently high.}
\begin{align*}
	V(\tau)&=\int_0^\tau e^{-\int_0^t \nu e^{\alpha+\beta s} ds}\left(\nu e^{\alpha+\beta t} B -c \right) dt<B\\
	V(0)&=0\\
	V'(\tau)&=e^{-\nu e^{\alpha+\beta \tau} \tau}\left(\nu e^{\alpha+\beta \tau }B - c\right)>0 \\
	V''(\tau)&=-\nu e^{\alpha+\beta \tau-e^{\alpha+\beta \tau}}\left((1+\beta \tau) (\nu e^{\alpha+\beta \tau}B-c-\beta B) \right)<0\\
	-\frac{V''(\tau)}{V'(\tau)} &=  e^{\alpha+\beta \tau} (1+\beta \tau)-\beta \frac{\nu B e^{\alpha+\beta \tau}}{\nu B e^{\alpha+\beta \tau}-c}>\bar{p} \lambda.
\end{align*}
To see the sign of $V''(\tau)$, note that the term in parentheses is increasing in $\tau$ and positive for $\tau=0$, which determines the sign of the second derivative of the value. To see the sign of $-\frac{V''(\tau)}{V'(\tau)}$, note that this expression is also increasing in $\tau$. Evaluating $-\frac{V''(0)}{V'(0)}$ delivers the desired expression for $\hat{\beta}$, which ensures that our conditions are satisfied.

\subsection{Verification of Example \ref{par:ex_stream}}
 The expected value at time $\tau$ of $b(t)$, where $b(t)$ follows an Ornstein-Uhlenbeck process, yields the benefit of triggering this payoff stream
\begin{align*}
	V(\tau)&=b(0) e^{-\nu \tau}+B_\nu(1-e^{-\nu \tau})\\
	&=B_\nu(1-e^{-\nu \tau}).
\end{align*}
Thus, we obtain
\begin{align*}
	V(0)=0,~V'(\tau)&=\nu B_\nu e^{-\nu\tau},~	V''(\tau)= - \nu^2 B_\nu e^{-\nu\tau}~	-\frac{V''(\tau)}{V'(\tau)}=\nu.
\end{align*}

\section{Discussion of Example \ref{par:ex_double_think}} \label{sub:discussion_of_example_par:ex_double_think}

Consider the model of \cref{par:ex_double_think}. Note that whenever $p_\tau$ is such that $\nu=p_\tau \lambda$, the continuation game satisfies our assumptions, and therefore, \Cref{lem:ded} applies to the continuation game. 

Assume that for some time remaining $\tau<T$, the belief is indeed such that $p_\tau \lambda=\nu$ and that $p_\tau<\bar{p}$. If in addition $\gamma_\tau<0$, then we know by the continuity of $\gamma_\tau$ that there is a neighborhood of remaining time $\tau+\varepsilon>\tau$ such that \cref{lem:ded} continues to hold in this neighborhood too.

However, in this neighborhood, $p_{\tau+\varepsilon} \lambda > \nu$, which violates the assumption on relative concavity. Thus, in particular, \cref{lem:minimum} may be violated, which, in turn, implies that---once said neighborhood becomes large---eventually $\gamma_{\tau+\varepsilon}$ may be increasing and may become positive. As a consequence, the agent may engage in an initial thinking period before returning to the path described by \cref{lem:ded}. A numerical solution of \cref{par:ex_double_think} for various deadline lengths is provided in \Cref{fig:doublethink}. As we see, once $T$ is large enough, the optimal policy adds an initial thinking period.

\begin{figure}
\includegraphics[width=.9\textwidth]{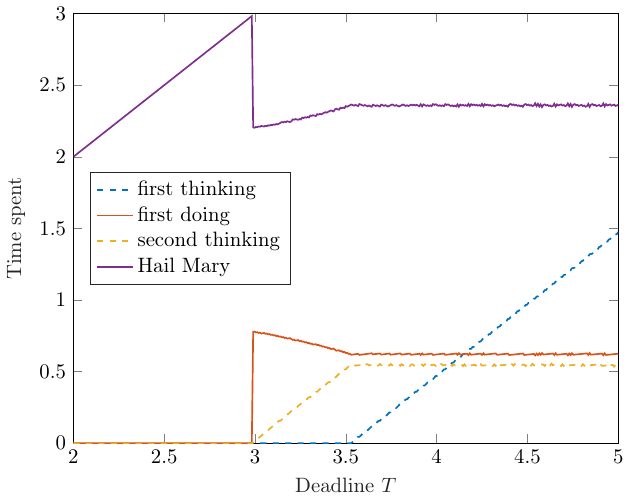}
\caption{\emph{Interval length by deadlines without relative concavity assumption.} The figure shows a numerical solution of the time spent in each interval using a particular approach as a function of the deadline.\newline  Parameters: $B=9,\bar{p}=0.8,c=0.5,\lambda=1,\mu=0.4,\nu=0.5$.}\label{fig:doublethink}
\end{figure}

It is apparent from \Cref{fig:doublethink} that if the time horizon is long enough, then the agent will indeed start by thinking before reverting to our doing-thinking-doing pattern. The reason is as follows: once $\nu<\bar{p} \lambda$ and there is plenty of time remaining, the deadline effect and hence the time pressure is not the agent's primary concern. Instead, in this example, the payoff of successful thinking with a sufficiently long deadline is \emph{higher} than the payoff of successful doing. Thus, with low time pressure, thinking has a payoff advantage over doing and is preferred until the time pressure deteriorates the value of progress on the thinking arm. Once the value of progress is low enough, the intuition and, eventually, the formal analysis of our main model apply again.

\section{Belief-Dependent Continuation Value} \label{sub:belief_V}

In this part, we show how the key lemmata (\Cref{lem:dattheend,lem:evolutiongamma,lem:minimum}) that lead to \Cref{lem:ded} extend to the case in which the continuation value of an arrival on the thinking arm also depends on the belief about the doing arm. Moreover, we verify that \Cref{lem:ded} also holds for the example of a risky new arm without a switching cost.

\paragraph{Modified optimal control problem.}
When the continuation value also depends on $A_\tau$, denote by $V(\tau,A_\tau)$ the value of an arrival on the thinking arm. The Hamiltonian corresponding to the modified optimal control problem becomes
\begin{align*}
	H_\tau(a_\tau;A_\tau):=&e^{-\mu (T- \tau -A_\tau)} (1-\bar{p} + \bar{p}e^{-\lambda A_\tau}) J(p_\tau,\tau,a_\tau) + a_{\tau} \eta_\tau,\\
  = &e^{-\mu (T- \tau -A_\tau)} (1-\bar{p}) \left( (1-a_{\tau})\mu V(\tau,A_\tau) -c\right) \\ &+ e^{-\mu(T-\tau-\mu A_\tau)} \bar{p}e^{-\lambda A_\tau} \left((1-a_\tau)\mu V(\tau,A_\tau) + a_\tau \lambda B-c\right) + a_{\tau} \eta_\tau
\end{align*}
and the co-state evolution becomes
\begin{align*}
	\dot{\eta}_\tau &=e^{-\mu(T-\tau-A_\tau)} (\bar{p} e^{-\lambda A_\tau}+1-\bar{p}) \\
	& ~ \cdot  \Bigg(p_\tau \lambda(\mu - \lambda) a_\tau B + \mu (1-a_\tau) \bigg( (\mu-\lambda p_\tau) V(\tau,A_\tau) + V_A(\tau,A_\tau)\bigg) - 	(\mu-p_\tau \lambda) c \Bigg).
\end{align*}

\paragraph{Corresponding \Cref{lem:dattheend}.} \Cref{lem:dattheend} holds trivially because $\lim_{\tau \rightarrow 0}V(\tau,A_\tau)=0$ in this case as well and the  boundary condition of the optimal control problem is unchanged, $\eta_{\tau=0}=0$.

\paragraph{Corresponding \Cref{lem:evolutiongamma}.}
The resulting switching function is
\begin{align*}{}
	\gamma_\tau = e^{-\mu(T-\tau-A_\tau)}(\bar{p} e^{-\lambda A_\tau}+1-\bar{p})(\mu V(\tau,A_\tau)-p_\tau \lambda B) - \eta_\tau
\end{align*}
with evolution
\begin{align*}
	\dot{\gamma}_\tau=e^{-\mu(T-\tau-A_\tau)}(\bar{p} e^{-\lambda A_\tau}+1-\bar{p}) \bigg(p_\tau \lambda \mu (V(\tau,A_\tau)-B)+\mu(V_\tau(\tau,A_\tau)-V_A(\tau,A_\tau))+(\mu-p_\tau \lambda)c\bigg).
\end{align*}

\paragraph{Corresponding \Cref{lem:minimum}.} 
The derivative with respect to $\tau$ of the analogue of $g(\tau)$ in the switching function, as in \Cref{lem:minimum}, is
\begin{align*}
	\ddot{\gamma}_\tau &= \mu\bigg( \frac{d p_\tau}{d \tau}\lambda \left(B-V(\tau,A_\tau)+\frac{c}{\mu}\right)\\
	&+\lambda p_\tau(V_\tau(\tau,A_\tau)+a_\tau V_A(\tau,A_\tau))+a_\tau (a_\tau V_{A,A}(\tau,A_\tau)+2V_{\tau,A}(\tau,A_\tau)-V_{\tau,\tau}(\tau,A_\tau))\bigg).
\end{align*}
Observe that $\ddot{\gamma}_\tau$ is negative for all $a_\tau$ if $V(\tau,A_\tau)\leq B+\frac{c}{\mu}$ and\footnote{Where the total derivative with respect to time is $\partial_\tau V(\tau, A_\tau)-\partial_A V(\tau,A_\tau)a_\tau$.} \[-\frac{\frac{d^2}{d\tau^2}V(\tau,A_\tau)}{\frac{d}{d\tau}V(\tau,A_\tau)}\geq p_\tau \lambda\] as $\frac{d p_\tau}{d \tau}\geq 0$.

Thus, no interior local minimum can exist under these assumptions, which are analogous to those in the main text.\footnote{Note that we could dispense with the assumption $V(\tau,A_\tau)\leq B+\frac{c}{\mu}$ by plugging $\dot{\gamma}_\tau=0$ into $\ddot{\gamma}_\tau$, as in the proof \Cref{lem:minimum}.}

\paragraph{Corresponding \Cref{lem:ded}.} The proof of the analogous result as in \Cref{lem:ded} follows directly by combining the corresponding \Cref{lem:dattheend,lem:evolutiongamma,lem:minimum}.

\subsection{Verification of Example \ref{par:ex_riskyreturn}} 

We next verify that a simple risky new arm satisfies our assumptions when there is no cost of switching between doing arms and the agent may mix continuously between the two arms. To save on notation and case distinctions, we assume that $\bar{p}^\nu >\bar{p}$ and $\bar{p}\leq 2/3$ and set $\lambda=1$ for both arms.

\Cref{lem:riskythinkingvalue} constructs the reduced form $V(\tau;A_\tau)$ for this example. \Cref{lem:RiskyArmSatisfiesAssumptions} shows that the constructed $V(\tau;A_\tau)$ satisfies the condition required for the Corresponding \Cref{lem:ded}, \(\left(-\frac{d^2}{d\tau^2}V(\tau,A_\tau)\right)/\left(\frac{d}{d\tau}V(\tau,A_\tau)\right)\geq p_\tau \lambda.\)

\begin{lemma}\label{lem:riskythinkingvalue}
	While holding a belief $p_\tau=\bar{p}e^{-\lambda A_\tau}/(\bar{p}e^{-\lambda A_\tau}+ 1-\bar{p})$ on the initial doing arm with time remaining $\tau$, the value of having access to a new risky doing arm with initial belief $\bar{p}^\nu $ is
\begin{align*}
V(\tau,A_\tau)&= \bar{p}^\nu (1-e^{-\min \{\hat{t}(A_\tau),\tau\}})\left(B-c\right)-(1-\bar{p}^\nu ) c\min\{\hat{t}(A_\tau),\tau\} 
	\\&+(1-\bar{p}^\nu )\frac{p_\tau}{1-p_\tau} \left(\left(B-c\right)\left(p_\tau^2(1-e^{- \hat{\tau}})+2p_\tau(1-p_\tau)(1-e^{-\frac{1}{2}\hat{\tau}})\right) \right.\\ 
	&\left.- c\left(2p_\tau(1-p_\tau)(1-e^{-\frac{1}{2}\hat{\tau}})+(1-p_\tau)^2\hat{\tau} \right) \right),
	\end{align*}
	where $\hat{t}(A_\tau)=\ln \left(\frac{\bar{p}^\nu }{1-\bar{p}^\nu }\frac{1-p_\tau}{p_\tau} \right)$, which is the time at which after discovering the new arm, the agent switches from pulling the new arm exclusively to mixing between both arms whenever $\tau>\hat{t}(A_\tau)$. 
\end{lemma}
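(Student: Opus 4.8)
The plan is to solve the continuation game that begins the instant progress arrives---at which point the agent holds two risky doing arms, the original one (belief $p_\tau$) and the newly created one (belief $\bar{p}^\nu$, both with intensity $\lambda=\nu=1$ and no switching cost)---and then to evaluate that continuation by conditioning on the realized states of the two arms. The resulting $V(\tau,A_\tau)$ is the object that is subsequently fed into \cref{lem:RiskyArmSatisfiesAssumptions} to verify the relative-concavity condition behind the analogue of \cref{lem:ded}.

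\emph{Step 1: the optimal continuation policy.} Because $\bar{p}^\nu>\bar{p}\geq p_\tau$, the new arm strictly dominates on arrival: it has a strictly higher instantaneous expected return at the same flow cost, and---with no maintenance or switching cost---leaving the old arm idle is free. By the standard two-armed risky-bandit argument (precisely the logic underlying the zero-maintenance-cost case of \citet{forand2015keeping}), the agent therefore pulls the new arm exclusively; conditional on no success, its posterior declines by Bayes' rule and reaches $p_\tau$ after exactly $\hat{t}(A_\tau)=\ln\!\bigl(\tfrac{\bar{p}^\nu}{1-\bar{p}^\nu}\tfrac{1-p_\tau}{p_\tau}\bigr)$ units of time (solve $\bar{p}^\nu e^{-\hat{t}}/(\bar{p}^\nu e^{-\hat{t}}+1-\bar{p}^\nu)=p_\tau$). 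From that moment on the two arms are symmetric with common belief $p_\tau$; the only unit-effort policy that keeps the (now equal) beliefs on the diagonal is to put $\tfrac12$ on each arm, and this equal split is optimal by the usual diversification/symmetry argument (off the diagonal the agent strictly prefers the higher-belief arm, which forces the posterior path back onto the diagonal). The no-shirking hypotheses of the example ($B>2c$, $\bar{p}\leq 2/3$) ensure full effort throughout. Hence the continuation is a two-phase policy: \emph{Phase 1} pulls the new arm for $s:=\min\{\hat{t}(A_\tau),\tau\}$ units of time; \emph{Phase 2}, reached only if $\tau>\hat{t}(A_\tau)$, splits effort equally over the remaining $\hat{\tau}:=\max\{\tau-\hat{t}(A_\tau),0\}$ units.

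\emph{Step 2: evaluating the value.} Phase 1 is a one-risky-arm computation: pulling a belief-$\bar{p}^\nu$ arm for time $s$ at cost $c$ yields, in the paper's notation, $Z^d(\bar{p}^\nu,s)=\bar{p}^\nu U(s)-(1-\bar{p}^\nu)cs=\bar{p}^\nu(1-e^{-s})(B-c)-(1-\bar{p}^\nu)cs$---the first two terms of the claimed formula. When $\tau>\hat{t}(A_\tau)$ the Phase 2 payoff enters weighted by the probability $(1-\bar{p}^\nu)+\bar{p}^\nu e^{-\hat{t}(A_\tau)}$ of no success in Phase 1; the definition of $\hat{t}$ gives the identity $\bar{p}^\nu e^{-\hat{t}(A_\tau)}=(1-\bar{p}^\nu)\,p_\tau/(1-p_\tau)$, which is why the ratio $p_\tau/(1-p_\tau)$ surfaces in the displayed expression. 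Conditional on surviving Phase 1 the true states of the two arms are independent Bernoulli$(p_\tau)$---the old arm's belief never moved, the new arm's has been updated exactly to $p_\tau$---so Phase 2 breaks into three cases with probabilities $p_\tau^2$, $2p_\tau(1-p_\tau)$, $(1-p_\tau)^2$. If both arms are good the combined arrival rate is $1$ and the value is $(B-c)(1-e^{-\hat{\tau}})$; if exactly one is good the effective arrival rate is $\tfrac12$ and the value is $\int_0^{\hat{\tau}}e^{-t/2}(\tfrac12 B-c)\,dt=(B-2c)(1-e^{-\hat{\tau}/2})$; if neither is good the value is $-c\hat{\tau}$. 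Taking the $p_\tau$-weighted average of these, multiplying by the Phase 1 survival probability, and collecting terms produces the bracketed Phase 2 block; when $\tau\leq\hat{t}(A_\tau)$ one has $\hat{\tau}=0$ and this block vanishes, leaving the pure Phase 1 expression.

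\emph{The main obstacle.} The bookkeeping in Step 2 is routine. The delicate part is Step 1---rigorously establishing the two-phase structure: (i) that the agent never allocates effort to the lower-belief arm, so that the posteriors never leave the diagonal once they coincide, and (ii) that equal mixing is optimal on the diagonal---together with confirming that full effort is indeed maintained under $B>2c$ and $\bar{p}\leq 2/3$. Claims (i) and (ii) are standard for symmetric risky bandits and can be obtained either by appeal to \citet{forand2015keeping} or via a short Pontryagin/HJB argument in the spirit of the proof of \cref{prop:Tinf}; (ii) ultimately rests on the diversification benefit of running both arms simultaneously when only one success is required, which one can also check directly by comparing the ``split equally'' path with any ``exhaust one arm first'' alternative.
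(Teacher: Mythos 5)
Your proposal is correct and follows essentially the same route as the paper: fix the two-phase continuation policy (pull the new arm exclusively until its posterior falls to $p_\tau$ after time $\hat{t}(A_\tau)$, then split effort equally) and evaluate it. The one computational difference is that you evaluate the mixing phase by conditioning on the realized states of the two arms (three cases with weights $p_\tau^2$, $2p_\tau(1-p_\tau)$, $(1-p_\tau)^2$), whereas the paper integrates the survival-weighted flow payoff along the common posterior path $\dot{p}=\tfrac{1}{2}p(1-p)$; both yield the same bracketed expression, and your decomposition is arguably the more transparent of the two. You also supply more justification for the optimality of the two-phase structure than the paper, which essentially asserts it. One point you should not gloss over: your own calculation gives the Phase-2 weight as the survival probability $(1-\bar{p}^\nu)+\bar{p}^\nu e^{-\hat{t}(A_\tau)}=\frac{1-\bar{p}^\nu}{1-p_\tau}$, which is also the factor the paper's proof uses (and which reappears in the decomposition $V=V^{new}+\frac{1-\bar{p}^\nu}{1-p_\tau}V^{mix}$ in the follow-up lemma); the coefficient $(1-\bar{p}^\nu)\frac{p_\tau}{1-p_\tau}$ in the lemma's displayed formula is only the $\bar{p}^\nu e^{-\hat{t}}$ summand of that probability, so rather than saying the ratio ``surfaces,'' you should state explicitly that your derivation produces $\frac{1-\bar{p}^\nu}{1-p_\tau}$ and that the display appears to drop the $(1-\bar{p}^\nu)$ term.
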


\begin{proof}
During the time in which the agent exclusively uses the new arm, i.e., for the periods $t\in[0,\min\{\hat{t}(A_\tau),\tau\}]$, the agent's payoff is
\begin{align*}
	V^{new}(\tau;A_\tau):=\bar{p}^\nu  (1-e^{-\min \{\hat{t}(A_\tau),\tau\}})\left(B-c\right)-(1-\bar{p}^\nu ) c\min\{\hat{t}(A_\tau),\tau\}.
\end{align*}
With probability $\frac{1-\bar{p}^\nu }{1-p_\tau}$, the agent does not obtain a success before $\min\{\tau,\hat{t}\}$. In this event, the agent will mix for the remaining time $\hat{\tau}:=\min\{0,\tau-\hat{t}\}$. Because the agent mixes between the two arms instead of using a single arm at a full rate, the beliefs decline at a lower rate on each arm, while the instantaneous success rate at time $\tau$ is still $p_\tau$. In particular, both arms are pulled with the same intensity $a_\tau=\frac{1}{2}$, which implies that 
\begin{align*}
	\dot{p}^\nu_\tau = \dot{p}_\tau = \frac{1}{2}p_\tau(1-p_\tau).
\end{align*}
Solving for the agent's value upon mixing for the remaining time $\hat{\tau}$, we obtain
\begin{align*}
	V^{mix}(\hat{\tau};A_\tau) &:= \int_0^{\hat{\tau}} e^{-p_\tau } \left( p_\tau B -c\right) d\tau\\ 
	&= \left(B-c\right)\left(p_\tau^2(1-e^{- \hat{\tau}})+2p_\tau(1-p_\tau)(1-e^{-\frac{1}{2}\hat{\tau}})\right) \\ 
	&- c\left(2p_\tau(1-p_\tau)(1-e^{-\frac{1}{2}\hat{\tau}})+(1-p_\tau)^2\hat{\tau} \right).
\end{align*}
Putting the pieces together delivers the result.
\end{proof}

Finally, we have to ensure that the relative concavity assumption is satisfied. Note that the maximal time remaining such that the agent uses any of the two arms is bounded from above, as eventually the beliefs would become too low to generate a positive expected payoff. Denote this upper bound, which we explicitly define below, by $\overline{\tau}$.

\begin{lemma}\label{lem:RiskyArmSatisfiesAssumptions}
	Under the assumptions that $\bar{p}^\nu >\bar{p}$, $\bar{p}\leq 2/3$, and $\lambda=1$, $V(\tau,A_\tau)$ satisfies 
	\begin{align*}
		-\frac{\frac{d^2}{d\tau^2}V(\tau,A_\tau)}{\frac{d}{d \tau}V(\tau,A_\tau)}\geq p_\tau.
	\end{align*}
\end{lemma}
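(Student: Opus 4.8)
The plan is to verify the relative–concavity inequality by a direct computation from the closed form for $V(\tau,A_\tau)$ supplied by \Cref{lem:riskythinkingvalue}, exploiting that this expression is piecewise in $\tau$ with break point at $\tau=\hat{t}(A_\tau)=\ln\!\big(\tfrac{\bar{p}^\nu}{1-\bar{p}^\nu}\tfrac{1-p_\tau}{p_\tau}\big)$, a quantity depending on $A_\tau$ but not on $\tau$. Hence differentiation in $\tau$ (with $A_\tau$ fixed) acts on only one of the two pieces, and I would treat the regimes $\tau\le\hat{t}(A_\tau)$ and $\tau>\hat{t}(A_\tau)$ separately. The derivative at issue is the one holding $A_\tau$ fixed — equivalently the total derivative along a trajectory with $a_\tau=0$, which is the regime in which the concavity conclusion of the ``Corresponding \Cref{lem:minimum}'' is used; the case $a_\tau=1$ would be handled the same way from the explicit $A_\tau$-derivatives of the closed form.

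For $\tau\le\hat{t}(A_\tau)$ the formula collapses to $V(\tau,A_\tau)=\bar{p}^\nu(1-e^{-\tau})(B-c)-(1-\bar{p}^\nu)c\tau$, so $\partial_\tau V=\bar{p}^\nu e^{-\tau}(B-c)-(1-\bar{p}^\nu)c$ and $\partial_{\tau\tau}V=-\bar{p}^\nu e^{-\tau}(B-c)$. Wherever the agent actually uses the new arm one has $\partial_\tau V>0$, and then
\[
-\frac{\partial_{\tau\tau}V}{\partial_\tau V}=\frac{\bar{p}^\nu e^{-\tau}(B-c)}{\bar{p}^\nu e^{-\tau}(B-c)-(1-\bar{p}^\nu)c}>1\ge p_\tau ,
\]
using $\lambda=1$ and $p_\tau<1$; here $\bar{p}^\nu>\bar{p}$ and $B>2c$ are needed only to make this regime non-degenerate ($\hat{t}(A_\tau)\ge 0$ and $\partial_\tau V>0$ near $\tau=0$).

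For $\tau>\hat{t}(A_\tau)$ I would set $\hat{\tau}:=\tau-\hat{t}(A_\tau)$ and $p:=p_\tau$; only the mixing term depends on $\tau$, it equals $(1-\bar{p}^\nu)\tfrac{p}{1-p}V^{mix}(\hat{\tau})$, and $\partial_\tau\hat{\tau}=1$, so $-\partial_{\tau\tau}V/\partial_\tau V=-(V^{mix})''/(V^{mix})'$. Differentiating the closed form for $V^{mix}$ twice in $\hat{\tau}$ and writing $u:=e^{-\hat{\tau}/2}\in(0,1]$, the inequality $-(V^{mix})''\ge p\,(V^{mix})'$ — which is what is needed on the range where $(V^{mix})'>0$ — reduces, after dividing by the positive factor $p(1-p)$, to
\[
Q(u):=(B-c)\,p\,u^{2}+(B-2c)\big(\tfrac12-p\big)u+c(1-p)\ \ge\ 0 .
\]
The key observations I would use are: the agent mixes only while the mixing flow payoff $p_{\hat\tau}B-c$ is positive, i.e.\ only for $u\ge u_{\min}:=\tfrac{c(1-p)}{p(B-c)}$ (equivalently $\hat\tau$ below the stopping bound $\overline{\tau}$); the linear combination $c(1-p)+(B-2c)(\tfrac12-p)+p(B-c)$ collapses to $B/2$, which makes $Q(1)=\tfrac B2$ and $Q(u_{\min})=\tfrac{Bc(1-p)}{2p(B-c)}$, both positive; and $Q$ is a convex quadratic. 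It then suffices to rule out that the vertex of $Q$ lies strictly between $u_{\min}$ and $1$ with a negative value: for $p\le\tfrac12$ the linear coefficient is nonnegative so $Q$ is increasing, and for $p\in(\tfrac12,\tfrac23]$ — the relevant range, since $p_\tau\le\bar{p}\le\tfrac23$ — I would bound the vertex value $c(1-p)-\tfrac{(B-2c)^{2}(p-1/2)^{2}}{4(B-c)p}$ from below using $p\le\tfrac23$ and $B>2c$. With $\lambda=1$ this closes the mixing regime and hence the lemma.

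The step I expect to be the main obstacle is the mixing-phase estimate, precisely because $Q\ge 0$ fails on all of $(0,1]$: the argument only goes through once one keeps careful track of the range on which the mixing continuation is actually played (the stopping bound $\overline{\tau}$, equivalently $u\ge u_{\min}$) and then leans on the clean endpoint values of $Q$, its convexity, and $p_\tau\le 2/3$. The delicate part is this bookkeeping — aligning the break point $\hat{t}(A_\tau)$, the stopping boundary $\overline{\tau}$, and the bound $p_\tau\le\bar{p}\le 2/3$ — whereas the new-arm-only regime and the reduction of the mixing regime to $Q$ are routine.
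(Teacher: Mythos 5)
Your split into the regimes $\tau\le\hat t(A_\tau)$ and $\tau>\hat t(A_\tau)$, and your handling of the first regime, track the paper. The gap is in the mixing regime, and it is twofold. First, you verify the inequality only for $\partial_\tau V$ with $A_\tau$ held fixed (the $a_\tau=0$ case) and defer $a_\tau=1$ as ``handled the same way.'' But the place this lemma is used---ruling out an interior minimum of the switching function, i.e.\ a return to thinking after a doing spell---sits inside a \emph{doing} period, where $\dot p_\tau=p_\tau(1-p_\tau)\lambda>0$ and the relevant objects are the total derivatives $\frac{d}{d\tau}V(\tau,p_\tau)=\partial_\tau V+\dot p_\tau\,\partial_p V$ and its second-order analogue (with $\dot p_\tau^2\partial_{pp}V$ and $\ddot p_\tau\partial_p V$ terms). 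The paper's proof works entirely with these: note that its $\lim_{\tau\to\underline\tau}\frac{d}{d\tau}V^{p\text{-}mix}=\tfrac{2(1-\bar p^\nu)(Bp_\tau-c)}{1-p_\tau}$ carries a factor of $2$ relative to the pure $\hat\tau$-partial, and in that regime $V^{new}$ does not drop out either, since it varies with $\tau$ through $\hat t(A_\tau)$. The paper then bounds the numerator from below and the denominator from above separately over $[\underline\tau,\overline\tau]$, shows the resulting ratio is decreasing in $c$, and evaluates the worst case $c=p_\tau B$, where it equals $2/p_\tau$.

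Second, and decisively, the statement you reduce to is false for admissible parameters, so your plan cannot be closed. Your algebra is right: $-(V^{mix})''-p(V^{mix})'=p(1-p)Q(u)$ with $Q(u)=(B-c)pu^2+(B-2c)(\tfrac12-p)u+c(1-p)$, and $Q(1)=B/2$, $Q(u_{\min})=\tfrac{Bc(1-p)}{2p(B-c)}$. But for $p>\tfrac12$ the vertex $u^*=\tfrac{(B-2c)(p-1/2)}{2(B-c)p}$ lies in $(u_{\min},1)$ as soon as $(B-2c)(p-\tfrac12)>2c(1-p)$ (at $p=2/3$: $B>6c$), and the vertex value $c(1-p)-\tfrac{(B-2c)^2(p-1/2)^2}{4(B-c)p}$ is strictly negative once $(B-2c)^2>32c(B-c)$. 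For example $B=100$, $c=1$, $p=2/3$ gives $u_{\min}\approx 0.005$, $u^*\approx 0.124$, $Q(u^*)\approx-0.68$. So $Q$ genuinely dips below zero strictly inside the range where mixing is played, and no bookkeeping of $u_{\min}$, $\overline\tau$, or $p_\tau\le 2/3$ rescues the estimate. The inequality only has a chance once the belief-channel terms of the total derivative are included, which is the route the paper takes.
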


\begin{proof}
Note that varying the time remaining has different effects on $V(\tau,A_\tau)$ depending on whether the agent thinks or does. If she does, $A_\tau$ and $\tau$ vary both, affecting the value of a thinking success. If she thinks, only the change in $\tau$ affects the value of a thinking success.

To verify our assumptions, note that\footnote{We simplify notation by supressing arguments whenever it should not cause confusion. Moreover, we use the belief as the state variable, which is equivalent to using $A_\tau$.} 
\begin{align*}
	\frac{d}{d \tau}V(\tau,p_\tau)=&\frac{\partial{V}}{\partial{\tau}}+ \dot{p}_\tau \frac{\partial{V}}{\partial{p}} \\
	\frac{d^2}{d\tau^2}V(\tau,p_\tau)&= \frac{\partial^2{V}}{(\partial \tau)^2} + 2 \frac{\partial^2{V}}{\partial\tau \partial p} \dot{p}_\tau + \dot{p}_\tau^2 \frac{\partial^2{V}}{(\partial p)^2} + \ddot{p}_\tau \frac{\partial{V}}{\partial{p}}.
\end{align*}
Moreover, $V(\tau,p_\tau)=V^{new} + \frac{1-\bar{p}^\nu }{1-p_\tau} V^{mix}$. 

If the deadline is too close when thinking is successful, then we are in the case of \cref{par:ex_risky}, and our assumptions are satisfied. The domain of time remaining under which mixing is relevant is $\tau\in [\underline{\tau},\overline{\tau}]$, where $\overline{\tau}:=\ln \Big(\frac{\bar{p}^\nu }{1-\bar{p}^\nu }\frac{1-p_\tau}{p_\tau} \Big)+ 2\ln\Big( \frac{p_\tau}{1-p_\tau}\frac{ B- c}{c}\Big)$ is the time at which the agent would prefer shirking over working on either, as both beliefs have declined too much, and where $\underline{\tau}:=\ln \Big(\frac{\bar{p}^\nu }{1-\bar{p}^\nu }\frac{1-p_\tau}{p_\tau} \Big)$ is the time at which the belief about the new arm has declined to the current belief of the doing arm.

To simplify the notation, define $V^{p-mix}:=\frac{1-\bar{p}^\nu }{1-p_\tau}V^{mix}$, and note that for $b,d>0$, if $\frac{a}{b} \geq x$ and $\frac{c}{d}\geq x$, then $\frac{a+c}{b+d} \geq x$.\footnote{Observe that $\frac{a}{b} \geq x$ and that $\frac{c}{d} \geq x$ imply $a>bx$ and $c>dx$; thus, $a+c>(b+d)x$.} Thus, it is sufficient to show that (a) $\frac{-\frac{d^2}{d\tau^2}V^{new}}{\frac{d}{d\tau}V^{{new}}}>p_\tau$ and (b) $\frac{-\frac{d^2}{d\tau^2}V^{p-mix}}{\frac{d}{d\tau}V^{p-mix}}\geq p_\tau$. 

It is straightforward to see that (a) is satisfied as
\begin{align*}
	\frac{\frac{-d^2 V^{new}}{d \tau^2}}{\frac{d V^{new}}{d \tau}}&= \frac{p_\tau(B-c)\frac{1-\bar{p}^\nu }{1-p_\tau}}{(p_\tau B-c)\frac{1-\bar{p}^\nu }{1-p_\tau}}
	>p_\tau.
\end{align*}
Next, we show (b), i.e., that $\frac{-\frac{d^2}{d\tau^2}V^{p-mix}}{\frac{d}{d\tau}V^{p-mix}}\geq p_\tau $. Note that both the numerator and denominator are positive. Hence, a lower bound for the fraction is given by dividing the lower bound of $-\frac{d^2}{d\tau^2}V^{p-mix}$ by the upper bound of $\frac{d}{d\tau}V^{p-mix}$. We obtain the bounds by showing that for any feasible parameter constellation, (i) the smallest numerator is attained for $\overline{\tau}$.\footnote{This is the upper bound on the deadline such that the agent is willing to exert effort for any parameters and time remaining.} and (ii) the greatest numerator is attained for $\underline{\tau}.$\footnote{This is the lower bound on the deadline such that the mixing phase is reached.}

To see (i), observe that 
\begin{align*}
	\frac{d^2}{d \tau^2} \left(-\frac{d^2}{d\tau^2}V^{p-mix}\right) &= - \frac{p_\tau \bar{p}^\nu  }{\sqrt{\frac{\bar{p}^\nu }{1-\bar{p}^\nu }\frac{1-p_\tau}{p_\tau}}}\left(( B - 2c) e^{\frac{\tau}{2}}(1-p_\tau)^2+( B -2c)p_\tau(3-2p_\tau)\sqrt{\frac{\bar{p}^\nu }{1-\bar{p}^\nu }\frac{1-p_\tau}{p_\tau}} \right)\\&<0.
\end{align*}
Hence, the derivative of $\left(-\frac{d^2}{d\tau^2}V^{p-mix}\right)$ is decreasing. Next, note that 
\begin{align*}
	\lim_{\tau\rightarrow \overline{\tau}}\frac{d}{d \tau} \left(-\frac{d^2}{d\tau^2}V^{p-mix}\right) = (1-\bar{p}^\nu )(1-p_\tau)p_\tau  \frac{p_\tau B c}{ B - c}>0
\end{align*}
and therefore that $-\frac{d^2}{d\tau^2}V^{p-mix}$ is increasing for all $\tau$ on the relevant domain. Hence, its lower bound is attained at $\tau=\underline{\tau}$ and is given by
\begin{align*}
	\lim_{\tau\rightarrow \underline{\tau}} \left(-\frac{d^2}{d\tau^2}V^{p-mix}\right) &= \frac{1-\bar{p}^\nu }{1-p_\tau}\frac{p_\tau }{ B -c} \left(B^2  (p_\tau (p_\tau (2 p_\tau-5)+4)-2)+2 B c   (3-2 p_\tau) p_\tau^2 \right. \\ & \left. -2 c (1-p_\tau)^2 (1-2 p_\tau) (B-c)
   \ln \left(\frac{p_\tau (B-c  )}{c (1-p_\tau)}\right)-c^2 (4 (2-p_\tau) p_\tau-3)\right).
\end{align*}
To see (ii), observe that 
\begin{align*}
	\frac{d^2}{d \tau^2} \left(\frac{d}{d\tau}V^{p-mix}\right) = -\frac{ p_\tau \bar{p}^\nu  }{2} e^{-   \tau} \left(\frac{(p_\tau-1) e^{\frac{  \tau}{2}} (B  -2 c)}{\sqrt{\frac{(p_\tau-1)
   \bar{p}^\nu }{p_\tau (\bar{p}^\nu -1)}}}-2 B   p_\tau+2 c p_\tau\right)>0
\end{align*}
which implies that $\frac{d}{d \tau} \left(\frac{d}{d\tau}V^{p-mix}\right) $ is increasing. At the upper bound, $ \lim_{\tau\rightarrow \overline{\tau}} \frac{d^2}{d \tau^2} \left(\frac{d}{d\tau}V^{p-mix}\right) =0$. Hence, the first derivative of $\frac{d}{d\tau}V^{p-mix}$ with respect to $\tau$ is negative throughout, and the upper bound is attained at $\tau\rightarrow \underline{\tau}$, with
\begin{align*}
	\lim_{\tau\rightarrow \underline{\tau}} \frac{d}{d\tau}V^{p-mix} = \frac{2 (1-\bar{p}^\nu ) (B   p_\tau-c)}{1-p_\tau}>0.
\end{align*}

We therefore obtain that
\begin{align*}
	&\frac{-\frac{d^2}{d\tau^2}V^{p-mix}}{\frac{d}{d\tau}V^{p-mix}}\\
	 &> \frac{ B^2  (p_\tau (p_\tau (2 p_\tau-5)+4)-2)+2 B c   (3-2 p_\tau) p_\tau^2}{B c  (1-p_\tau)^2 } \\
	 &- \frac{2  (1-p_\tau)^2 (1-2 p_\tau) (B-c)
   \ln \left(\frac{p_\tau (B-c)}{c (1-p_\tau)}\right)-c (4 (2-p_\tau-2) p_\tau-3)}{B c  (1-p_\tau)^2 }.
\end{align*}

Finally, we need to verify that the right-hand side of the last expression is greater than $p_\tau $. To see this, we compute a lower bound of it using the fact that it is decreasing in $c$ (see below). As mixing requires that $p_\tau  B \geq c$, a lower bound is attained for $c=p_\tau  B$. 

To see that the term is decreasing in $c$, observe that it is convex in $c$, as its second derivative is
\begin{align*}
	\frac{2 B (B   (p_\tau (p_\tau (5-2 p_\tau)-4)+2)-c)}{c^3 (1-p_\tau)^2 (B-c)}
\end{align*}
which is positive, as $ B>2c$ and $p_\tau<2/3$. Thus, the first derivative is increasing, and at the upper bound of $c$, it reduces to
\begin{align*}
	-\frac{2}{B(1-p_\tau)p_\tau^2 }<0.
\end{align*}
Hence, a lower bound of the fraction under consideration is attained for $c=p_\tau  B$, which is
\(2/p_\tau\), which is strictly larger than $p_\tau$.
\end{proof}
 
\section{No-Shirking Condition} \label{sub:no_shirking_conditions}

In the text, we assume that $B$ is high enough such that the agent never shirks if she has not yet found a solution. Here, we show that such a $B$ always exists and is finite. Moreover, we provide an (implicit) construction. It is sufficient to show that the agent has an incentive to pull an arm at the deadline. The agent does not shirk if her terminal belief $\underline{p}\geq\frac{c}{\lambda B}$. For any $\underline{p}>0$ and $T<\infty$, there is a $\underline{B}<\infty$ such that the above condition holds for any $B\geq\underline{B}$. For any $T<\infty$, the terminal belief is weakly larger than 

\[p^{\min}= \frac{\bar{p} e^{-\lambda T}}{\bar{p} e^{-\lambda T}+1-\bar{p}}>0.\]
By \cref{lem:thinkingbetterwithlonghorizons}, for any $B$, there is a $\bar{T}<\infty$ such that $\underline{p}>p^{min}$ for $T>\bar{T}$. Thus, there is a $\underline{B}<\infty$ such that the agent never shirks for \emph{any} deadline, including the limit $T\rightarrow\infty$.

 \end{document}